\title{Quasi-cyclic codes of index 2}
\author{Kanat Abdukhalikov \\	
Department of Mathematical Sciences, \\
UAE University, PO Box 15551, Al Ain, UAE\\
Email: abdukhalik@uaeu.ac.ae \\
Askar S. Dzhumadil'daev \\
Institute of Mathematics and Mathematical Modeling, Almaty, Kazakhstan\\
Email: dzhuma@hotmail.com \\
San Ling \\
School of Physical and Mathematical Sciences, \\
Nanyang Technological University, Singapore 637371\\
 E-mail: lingsan@ntu.edu.sg
}
\date{ }
\begin{document} 

\maketitle

\theoremstyle{plain} 
\newtheorem{lemma}{Lemma}[section] 
\newtheorem{theorem}[lemma]{Theorem}
\newtheorem{corollary}[lemma]{Corollary}
\newtheorem{proposition}[lemma]{Proposition}

\theoremstyle{definition} 
\newtheorem{definition}{Definition}[section] 
\newtheorem{remark}{Remark}[section] 
\newtheorem{example}{Example}

\newcommand{\eps}{\varepsilon}
\newcommand{\inprod}[1]{\left\langle #1 \right\rangle}
\newcommand{\la}{\lambda} 
\newcommand{\al}{\alpha}
\newcommand{\om}{\omega} 
\newcommand{\gam}{\gamma}

\begin{abstract} 
We study quasi-cyclic codes of index 2 over finite fields. 
We give a classification of such codes.
Their duals with respect to the Euclidean, symplectic and Hermitian 
inner products are investigated. 
We describe self-orthogonal and dual-containing codes. 
Lower bounds for minimum distances of quasi-cyclic codes are given. 
A quasi-cyclic code of index 2  is generated  by at most two elements. 
We describe conditions when such a code (or its dual) is generated by one element. 
\end{abstract}
\text{\bf Keywords:} \small{Quasi-cyclic codes, dual codes,  lower bounds, quantum error-correcting codes.}\\
\text{\bf Mathematics Subject Classifications(2010):} \small{ 94B05, 94B15, 94B60.}\\


\section{Introduction}

Quasi-cyclic (QC) codes are an important type of linear codes which is a generalization of cyclic codes. 
There are families of QC codes that are asymptotically good \cite{Kasami,Ling2003-2}. Many good codes have been 
constructed with a better minimum distance than any linear code of the same length and dimension previously constructed. 

Ling and Sol\'e explored the algebraic structure of QC codes in-depth in a series of articles \cite{Ling2001,Ling2003,Ling2005,Ling2006}. 
In \cite{Conan,Lally} the algebraic structure of quasi-cyclic codes is discussed with a different approach. 
A class of 1-generator quasi-cyclic codes and their properties have been studied in \cite{Seguin} and \cite{Seguin1990}. 
A generalization -- the structure of 1-generator quasi-twisted codes -- was investigated by Aydin {\it et al.} \cite{Aydin},     
where they also constructed some new linear codes. Esmaeili and Yari \cite{Esmaeili} provided a criterion for generalized QC codes to be one-generator, however application of this criterion requires knowledge of the decomposition of codes into a sum of component codes (which requires the additional condition $\gcd(p,m)=1$) and looking at their properties. 

Dastbasteh and Shivji \cite{Dastbasteh} considered additive cyclic codes over ${\mathbb F}_{p^2}$ that can be considered as an interpretation of QC codes over ${\mathbb F}_p$. They computed the symplectic dual of additive codes by decomposing them into  components. In our paper, we study Euclidean, Hermitian, and symplectic duals without using decomposition of codes, using  only generators, and without restrictions to the characteristic of the ground field.

In \cite{Galindo,Liu,Lu,Lv2019,Lv2020,Lv2020-2,Guan2022,Guan2022-2,Guan2022-3} some particular cases of QC codes are considered.

Lally \cite{Lally2003} presented lower bounds for the minimum distance of quasi-cyclic codes.  
Semenov and Trifonov \cite{Semenov} introduced a spectral method for quasi-cyclic codes and obtained a BCH-like bound. 
This has led to several further works on distance bounds, known as spectral bounds, for quasi-cyclic and quasi-twisted codes, e.g., 
\cite{Ezerman2019,Ezerman2021,Luo,Roth,Zeh}. 
Yet another approach, using the concatenated structure of quasi-cyclic codes, also yields a different type of distance lower 
bound for quasi-cyclic codes, e.g., \cite{Guneri}.  

Other than the algebraic structure and distance bounds, some other problems related to quasi-cyclic codes have 
also been studied. For example, 
in \cite{Shi2023} QC perfect codes in Doob graphs and special partitions of Galois rings are considered. 
For some classes of quasi-cyclic codes,  sufficient conditions for self-orthogonality with respect to the symplectic, Euclidean 
and Hermitian inner products are given in \cite{Galindo}. 

This last problem is linked to the fact that, in recent years, 
quasi-cyclic codes have gained attention due to their applications in quantum error-correcting coding.  
A quantum error-correcting code (or just quantum code) is a code that protects quantum information from corruption 
by noise (decoherence) on the quantum channel in a way that is similar to how a classical error-correcting code 
protects information on the classical channel.  
 In 1995 Shor \cite{Shor} presented the first scheme to reduce errors in a quantum computer. Later, two independent 
research groups Calderbank and Shor \cite{Calderbank1996}, and Steane \cite{Steane1996,Steane1996-2} discovered 
a relationship between quantum codes and classical binary error-correcting codes. In 1998, 
Calderbank {\it et al.} \cite{Calderbank1998}  proposed a method to construct binary quantum codes using additive codes 
over the field of four elements. The binary case was generalized in 2001 by Ashikhmin and Knill \cite{Ashikhmin} 
to nonbinary quantum codes. This approach is now called the theory of stabilizer codes, which allows the construction 
of quantum error-correcting codes using classical codes that are self-orthogonal with respect to a certain 
symplectic inner product \cite{Grassl2021,Ketkar,Knill}. 
In \cite{Abdukhalikov,Galindo,Guan2022,Guan2022-2,Lv2019,Lv2020,Lv2020-2} many examples of new optimal 
quantum error-correcting codes were constructed using some particular classes of quasi-cyclic codes of index 2. 
Some other examples of application of quasi-cyclic codes to the construction of quantum codes include \cite{Ezerman2019-2} 
and \cite{Luo2024}. The latter uses quasi-cyclic codes to construct so-called entanglement-assisted quantum 
error-correcting codes with best-known parameters.

 In this paper we consider QC codes of index 2 in full generality. Unlike other papers, where the condition $\gcd(m,q)=1$ 
 is assumed, 
we do not have restrictions on the characteristic of the ground field, except for the case of one-generator  
 codes (Theorems \ref{one-gen} and \ref{dual-one-gen}). More importantly, this approach is simpler and more powerful. 
 The paper is organized as follows. We recall first in Section \ref{prelim}  definitions and notation concerning quasi-cyclic codes. 
 In Section \ref{quasi-cyclic} we give a classification of QC codes of index 2 over finite fields and 
 present lower bounds for minimum distances. 
 A QC code of index 2  is generated  by at most two elements. 
We give a necessary and sufficient condition for a QC code of index 2 to be generated by one element. 
For this, we do not need extra assumptions such as decomposition of codes into component subcodes, and we only use generators of codes.    
We describe duals of QC codes with respect to the Euclidean, symplectic and Hermitian 
inner products in Sections \ref{euclidean}, \ref{symplectic} and \ref{hermitian}, respectively, with a uniform approach (without using decomposition of codes into components and without restrictions on the characteristic of the ground field). 
We study duals of QC codes in a systematic way involving the notion of  adjoint maps. 
Moreover, we describe self-orthogonal and dual-containing QC codes. 
In Section \ref{quantum} we present a construction of quantum error correcting codes from our study.    
Finally, in Section \ref{conclusion} we summarize our results. 


\section{Preliminaries}
\label{prelim} 

 Let $F=\mathbb F_{q}$ be a finite field of $q$ elements. A linear code $C$ of length $n$ is a subspace of the vector space $F^n$. The elements of $C$ are codewords.  

Let $T$ be the standard cyclic shift operator on $F^n$. A code is said to be {\em quasi-cyclic} of index $\ell$ if it is invariant 
under $T^\ell$. We assume that $\ell$ divides $n$. If $\ell=1$ then the QC code is a cyclic code. 

Let  $R=F[x]/ \langle x^m-1 \rangle$.  We recall that cyclic codes of length $m$ over $F$ can be considered as ideals of $R$. 

Let $n=m\ell$ and let $C$ be a linear quasi-cyclic code of length $m\ell$ and index $\ell$ over $R$. Let 
$$c=(c_{0,0},c_{0,1},\dots,c_{0,\ell-1},c_{1,0},c_{1,1},\dots c_{1,\ell-1}, \dots, \linebreak  c_{m-1,0},c_{m-1,1},\dots,c_{m-1,\ell-1})$$ 
denote a codeword in $C$. 
Define a map $\varphi : F^{m\ell} \rightarrow R^{\ell}$ by 
$$\varphi (c)= (c_0(x),c_1(x),\dots, c_{\ell -1}(x)) \in R^{\ell},$$ 
where 
$$c_j(x)=c_{0,j}+c_{1,j}x+c_{2,j}x^2+\cdots+c_{m-1,j}x^{m-1}\in R.$$

The following lemma is well-known. 

\begin{lemma}[\cite{Lally,Ling2001}] 
The map $\varphi$ induces a one-to-one correspondence between quasi-cyclic codes over $F$ of index $\ell$ and 
length $m\ell$ and linear codes over $R$ of length $\ell$. 
\end{lemma}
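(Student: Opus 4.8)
The plan is to establish the bijection by constructing an explicit inverse to $\varphi$ and checking that $\varphi$ intertwines the relevant module structures, so that quasi-cyclic codes of index $\ell$ correspond exactly to $R$-submodules of $R^\ell$, i.e.\ linear codes over $R$ of length $\ell$. First I would note that $\varphi$ is manifestly an $F$-linear isomorphism of vector spaces $F^{m\ell} \to R^\ell$: it is built coordinate-block by coordinate-block, sending the $m$-tuple $(c_{0,j},\dots,c_{m-1,j})$ to the polynomial $c_j(x) \in R = F[x]/\langle x^m-1\rangle$, which is the standard identification $F^m \cong R$; doing this for each of the $\ell$ values of $j$ gives an $F$-linear isomorphism, and the inverse simply reads off the coefficient vectors of each component polynomial and reassembles them into a codeword in the interleaved order specified above.

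Next I would identify how the shift $T^\ell$ acts under $\varphi$. The key observation is that applying $T^\ell$ to $c$ cyclically shifts each of the $\ell$ interleaved subsequences $(c_{0,j}, c_{1,j}, \dots, c_{m-1,j})$ by one position simultaneously; under the identification $F^m \cong R$, a single cyclic shift of a coefficient vector corresponds to multiplication by $x$ in $R$. Hence $\varphi(T^\ell c) = x \cdot \varphi(c)$, where $x$ acts diagonally (componentwise) on $R^\ell$. I would verify this directly from the definition of $c_j(x)$, using that $x \cdot x^{m-1} = x^m = 1$ in $R$ to handle the wrap-around term.

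From this the correspondence follows formally. A subspace $C \subseteq F^{m\ell}$ is quasi-cyclic of index $\ell$ iff $T^\ell C \subseteq C$ (equivalently $= C$, since $T^\ell$ has finite order), which by the previous step holds iff $x \cdot \varphi(C) \subseteq \varphi(C)$. But an $F$-subspace of $R^\ell$ that is closed under multiplication by $x$ is closed under multiplication by every polynomial in $x$, hence is an $R$-submodule of $R^\ell$; conversely every $R$-submodule is such a subspace. Since $\varphi$ is an $F$-linear bijection, it restricts to a bijection between quasi-cyclic codes of index $\ell$ and length $m\ell$ over $F$ and $R$-submodules of $R^\ell$, i.e.\ linear codes over $R$ of length $\ell$.

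There is no serious obstacle here; the only point requiring care is the bookkeeping in the claim $\varphi(T^\ell c) = x\,\varphi(c)$, namely keeping track of the interleaved indexing of the codeword and checking that the cyclic wrap-around in $F^{m\ell}$ matches the relation $x^m = 1$ in $R$ componentwise. Everything else is a routine translation between the ``closed under one shift'' and ``closed under the whole polynomial ring'' formulations, which is exactly the same mechanism that identifies cyclic codes of length $m$ with ideals of $R$.
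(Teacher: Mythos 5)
Your argument is correct and is essentially the standard one: the paper itself gives no proof of this lemma, citing it as well-known from the references, and the proof in those sources proceeds exactly as you do, by observing that $\varphi$ is an $F$-linear bijection under which $T^\ell$ corresponds to componentwise multiplication by $x$, so that $T^\ell$-invariant subspaces of $F^{m\ell}$ correspond precisely to $R$-submodules of $R^\ell$. No gaps; the interleaving bookkeeping you flag is indeed the only point requiring care, and you handle it correctly.
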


\section{Quasi-cyclic codes of index 2}
\label{quasi-cyclic}

The next Theorem gives a description of quasi-cyclic codes  over $F$ of index 2. 

\begin{theorem}
\label{generators}
1) Let $C$ be a quasi-cyclic code of length $2m$ and index $2$. Then $C$ is generated by two elements 
 $g_1=\big( g_{11}(x),g_{12}(x)\big)$ and $g_2=\big( 0,g_{22}(x)\big)$ such that they satisfy the following conditions:  

\begin{equation}
\begin{array}{c}
\tag{$\ast$} 
g_{11}(x)\mid (x^m-1) \ {\rm and} \  g_{22}(x)\mid (x^m-1),  \\
\deg g_{12} (x) < \deg g_{22}(x),   \\
g_{11}(x)g_{22}(x) \mid  (x^m-1) g_{12}(x).
\end{array}
\end{equation}
Moreover, in this case 
$\dim C = 2m- \deg g_{11}(x) - \deg g_{22}(x)$.


2) Let the code $C$ be generated by elements $g_1=\big( g_{11}(x),g_{12}(x)\big)$ and $g_2=\big( 0,g_{22}(x)\big)$, 
and let $C'$ be generated by elements $g'_1=\big( g'_{11}(x),g'_{12}(x)\big)$ and $g'_2=\big( 0,g'_{22}(x)\big)$, 
both satisfying Conditions $(\ast)$.  Let $g_{11}(x)$, $g_{22}(x)$, $g'_{11}(x)$, $g'_{22}(x)$ be monic polynomials. 
Then $C=C'$ if and only if $g_{11}(x) = g'_{11}(x)$, $g_{22}(x) = g'_{22}(x)$, and $g_{12}(x) = g'_{12}(x)$.  

\end{theorem}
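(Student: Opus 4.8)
\emph{Proof plan.} The strategy is to pass to the module side of the correspondence in the Lemma: identify $C$ with the $R$-submodule $\varphi(C)$ of $R^2$, where $R = F[x]/\langle x^m-1\rangle$, and use the standard facts that the ideals of $R$ are exactly the principal ideals $\langle f(x)\rangle$ for monic divisors $f$ of $x^m-1$, that $\dim_F \langle f(x)\rangle = m - \deg f$, and that the monic divisor $f$ is uniquely determined by the ideal. For part~1 I would set $g_{22}(x)$ to be the monic generator of the ideal $I_2 := \{\, v(x)\in R : (0,v(x))\in\varphi(C)\,\}$ and put $g_2 = (0,g_{22}(x))$; then let $g_{11}(x)$ be the monic generator of the ideal $\pi_1(\varphi(C))$, where $\pi_1\colon R^2\to R$ is the first-coordinate projection, and choose any $g_1 = (g_{11}(x),g_{12}(x))\in\varphi(C)$ with first coordinate $g_{11}(x)$. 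A short argument --- for $(u,v)\in\varphi(C)$ write $u = a(x)g_{11}(x)$ and observe $(u,v) - a(x)g_1 \in \{0\}\times I_2$ --- shows $\varphi(C) = Rg_1 + Rg_2$, and subtracting a suitable multiple of $g_2$ from $g_1$ arranges $\deg g_{12} < \deg g_{22}$. Writing $x^m-1 = g_{11}(x)h_{11}(x)$, the element $h_{11}(x)g_1 = (0,h_{11}(x)g_{12}(x))$ lies in $\{0\}\times I_2$, so $g_{22}(x)\mid h_{11}(x)g_{12}(x)$ in $F[x]$, which is equivalent to the third condition of $(\ast)$; the dimension formula follows from the isomorphism $\varphi(C)/(\{0\}\times I_2)\cong\pi_1(\varphi(C))$ and additivity of $F$-dimension.

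For part~2 the ``if'' direction is immediate, so assume $C = C'$, i.e.\ $\varphi(C)=\varphi(C')$. The ideal $\pi_1(\varphi(C))$ is intrinsic to $C$, and a given ideal of $R$ has a unique monic divisor of $x^m-1$ as generator; hence $g_{11}(x) = g'_{11}(x)$. Likewise the ideal $I_2$ is intrinsic to $C$, and I claim $I_2 = \langle g_{22}(x)\rangle$: the inclusion $\supseteq$ is clear, and for $\subseteq$ one writes $(0,f(x)) = a(x)g_1 + b(x)g_2\in\varphi(C)$, notes that $a(x)g_{11}(x) = 0$ in $R$ forces $a(x)\in\langle h_{11}(x)\rangle$, and then invokes $g_{22}(x)\mid h_{11}(x)g_{12}(x)$ (the third condition of $(\ast)$, together with $g_{22}\mid x^m-1$) to get $f(x)\in\langle g_{22}(x)\rangle$. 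Therefore $I_2 = \langle g_{22}(x)\rangle = \langle g'_{22}(x)\rangle$, and by the same uniqueness $g_{22}(x) = g'_{22}(x)$.

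It remains to prove $g_{12}(x) = g'_{12}(x)$. Since $g_1\in\varphi(C)=\varphi(C')$, write $g_1 = a(x)g'_1 + b(x)g'_2$ in $R^2$. Comparing first coordinates and using $g'_{11}=g_{11}$ gives $(1-a(x))g_{11}(x)=0$ in $R$, hence $1-a(x)\in\langle h_{11}(x)\rangle$, say $a(x) = 1 - h_{11}(x)c(x)$ in $R$. Comparing second coordinates, substituting this expression for $a(x)$, using $g'_{22}=g_{22}$ and the third condition of $(\ast)$ for $C'$ (which, since $g'_{11}=g_{11}$ and $g'_{22}=g_{22}$, gives $g_{22}(x)\mid h_{11}(x)g'_{12}(x)$), one obtains $g_{12}(x)-g'_{12}(x)\equiv g_{22}(x)w(x)\pmod{x^m-1}$ for some $w(x)\in F[x]$. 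Reducing $g_{22}(x)w(x)$ modulo $x^m-1$ --- writing $x^m-1 = g_{22}(x)h_{22}(x)$ and reducing $w$ modulo $h_{22}$ --- yields a polynomial of degree $<m$ that is still divisible by $g_{22}(x)$; since $g_{12}-g'_{12}$ also has degree $<m$, the two coincide as polynomials, so $g_{22}(x)$ divides $g_{12}(x)-g'_{12}(x)$ in $F[x]$. But $\deg(g_{12}-g'_{12}) < \deg g_{22}$ by the second condition of $(\ast)$, forcing $g_{12}(x)-g'_{12}(x) = 0$.

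I expect the work here to be care rather than a genuine obstacle: the delicate step is the bookkeeping with canonical representatives modulo $x^m-1$ --- turning equalities in $R$ such as ``$a(x)g_{11}(x)=0$'' into divisibility statements in $F[x]$, and verifying that the reduction of $g_{22}(x)w(x)$ modulo $x^m-1$ remains a multiple of $g_{22}(x)$. Conceptually, the proof rests on recovering $g_{11}$ and $g_{22}$ as the canonical monic generators of the two intrinsic ideals attached to $C$ --- a coordinate projection and a coordinate intersection of $\varphi(C)$ --- after which the third condition of $(\ast)$ determines $g_{12}$ modulo $g_{22}$ and the second condition determines it uniquely.
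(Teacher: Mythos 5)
Your proof is correct, and at its core it follows the same route as the paper; the difference is that yours is self-contained where the paper's is partly by citation. For part 1 the paper imports from Lally--Fitzpatrick the existence of an upper-triangular generating pair satisfying the first two conditions of $(\ast)$ together with the dimension formula, and then extracts the third condition $g_{11}(x)g_{22}(x)\mid(x^m-1)g_{12}(x)$ from the matrix identity $A\,G=\mathrm{diag}(x^m-1,\,x^m-1)$, which forces $\frac{x^m-1}{g_{11}}g_{12}+a_{12}g_{22}=0$. You instead build the generators directly from the two intrinsic ideals $\pi_1(\varphi(C))$ and $I_2=\{v:(0,v)\in\varphi(C)\}$ and obtain the third condition from the observation that $h_{11}(x)g_1=(0,h_{11}(x)g_{12}(x))$ lies in $\{0\}\times I_2$; the two computations are the same one in different clothing ($a_{12}=-h_{11}g_{12}/g_{22}$). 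For part 2 both arguments coincide: recover $g_{11}$ and $g_{22}$ as the canonical monic generators of the image and kernel of the first-coordinate projection, then use $\deg g_{12}<\deg g_{22}$ to pin $g_{12}$ down within its class modulo $g_{22}$. What your version buys is independence from the external reference and an explicit verification --- which the paper leaves implicit --- that $\ker P$ equals $\langle(0,g_{22}(x))\rangle$ exactly; this is precisely the point where the third condition of $(\ast)$ is needed, and your careful reduction of $g_{22}(x)w(x)$ modulo $x^m-1=g_{22}(x)h_{22}(x)$ correctly turns the congruence into a divisibility in $F[x]$.
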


\begin{proof}
1) By \cite{Lally} we can assume that $C$ is generated by two elements $g_1=\big( g_{11}(x),g_{12}(x)\big)$ and 
$g_2=\big( 0,g_{22}(x)\big)$, satisfying the conditions 
$g_{11}(x)\mid (x^m-1)$, $g_{22}(x)\mid (x^m-1)$, $\deg g_{12} (x) < \deg g_{22}(x)$, 
and $\dim C = 2m- \deg g_{11}(x) - \deg g_{22}(x)$. 
Existence of such generators are equivalent \cite{Lally} to the existence of a $2\times 2$ polynomial matrix 
$(a_{ij}))$ such that 

$$
\left(
\begin{array}{cc}
 a_{11} &  a_{12}  \\
 a_{21} &  a_{22} 
\end{array}
\right)
\left(
\begin{array}{cc}
 g_{11} &  g_{12}  \\
 0 &  g_{22} 
\end{array}
\right) = 
\left(
\begin{array}{cc}
 x^m-1&  0  \\
 0 &  x^m-1
\end{array}
\right).
$$
Then $a_{21}=0$,  $a_{11}=\frac{x^m-1}{g_{11}(x)}$,  $a_{22}=\frac{x^m-1}{g_{22}(x)}$,  and 
$\frac{x^m-1}{g_{11}(x)} g_{12} + a_{12} g_{22} =0$, which implies  $g_{11}(x)g_{22}(x) \mid  (x^m-1) g_{12}(x)$. 

2) Consider the projection mapping $P : C \rightarrow R$, $P\big(a(x),b(x)\big) = a(x)$.  
Then $\ker P = \langle \big(0,g_{22}(x)\big) \rangle$  and ${\rm Im} \ P = \langle g_{11}(x) \rangle$. 
If $C$ is generated by elements $(0,g_{22}(x))$ and $\big( g_{11}(x),g'_{12}(x)\big)$, then $g_{12}(x) - g'_{12}(x)$ 
is divisible by   $g_{22}(x)$. 
\end{proof}

\begin{remark}
If $\gcd(q,m)=1$ then the condition $g_{11}(x)g_{22}(x) \mid  (x^m-1) g_{12}(x)$ is equivalent to the condition   
$\gcd \big( g_{11}(x),g_{22}(x)\big) \mid g_{12}(x)$, since $x^m-1$ has no multiple roots. 
\end{remark}

\begin{remark}
Let code $C$ be generated by $(g_{11}(x), g_{12}(x))$ and $(0,g_{22}(x))$, where 
$g_{11}(x)\mid (x^m-1)$   and  $g_{22}(x)\mid (x^m-1)$, and $\deg g_{12} (x) < \deg g_{22}(x)$.  
Several papers, for example \cite{Biswas}, make the incorrect statement that $\dim C = 2m-\deg g_{11}(x)-\deg g_{22}(x)$ in this case.  In order to have the equality $\dim C = 2m-\deg g_{11}(x)-\deg g_{22}(x)$, the condition 
$g_{11}(x)g_{22}(x) \mid  (x^m-1) g_{12}(x)$ is necessary, as it is shown in the next example. 
Let $x^m-1= p_1p_2p_3$ be a product of three irreducible polynomials (as in case $m=7$, $q=2$, for example). 
Let a code $C$ be generated by $(p_1p_2, 1)$ and $(0, p_3)$, 
and let a code $C'$ be generated by $(p_1p_2, 1)$ and $(0, p_1p_3)$. 
It is easy to see that $C=C'$, but we have  
$\dim C = 2m-\deg g_{11}(x)-\deg g_{22}(x)$ and   $\dim C' \ne 2m-\deg g_{11}(x)-\deg g_{22}(x)$.  
Theorem \ref{generators}  states that for any QC code, generators $(g_{11}(x), g_{12}(x))$ and $(0,g_{22}(x))$ 
{\bf can be chosen} such that they 
satisfy conditions (*) (in particular,  $\dim C = 2m-\deg g_{11}(x)-\deg g_{22}(x)$). 
In this case, any QC code is determined uniquely by their generators $(g_{11}(x), g_{12}(x))$ and $(0,g_{22}(x))$, when polynomials 
$g_{11}(x)$ and $g_{22}(x)$ are monic. 
It is a canonical representation of a code, and we use this representation in the remaining part of the paper. 
This representation corresponds to the reduced Gr\"{e}bner basis from  \cite{Lally}. 
\end{remark}

Note that a special case of quasi-cyclic codes with  $g_{11}(x) \mid g_{12}(x)$, satisfying Conditions $(\ast)$,  
was considered in \cite{Galindo}.

Now we  study the case when a quasi-cyclic code of index 2 can be generated by one element.

\begin{theorem}
\label{one-gen}
Let $\gcd(q,m)=1$. Let $C$ be a quasi-cyclic code  of length $2m$ and index $2$, generated by  elements 
$g_1=\big( g_{11}(x),g_{12}(x)\big)$ and $g_2=\big( 0,g_{22}(x)\big)$, satisfying Conditions $(\ast)$. Then 
$C$ is generated by one element if and only if $g_{11}(x) g_{22}(x) \equiv 0  \pmod{x^m-1}$. 
\end{theorem}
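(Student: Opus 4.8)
The plan is to work with the canonical generators $g_1=(g_{11},g_{12})$ and $g_2=(0,g_{22})$ satisfying Conditions $(\ast)$, and to show that $C=\langle g_1\rangle$ precisely when $g_{11}g_{22}\equiv 0\pmod{x^m-1}$. Since $\gcd(q,m)=1$, the ring $R=F[x]/\langle x^m-1\rangle$ is semisimple, which will let me freely use greatest common divisors and the fact that $x^m-1$ is squarefree; I expect to use the equivalent reformulation from the first Remark, namely that the third condition of $(\ast)$ becomes $\gcd(g_{11},g_{22})\mid g_{12}$, and that $g_{11}g_{22}\equiv 0\pmod{x^m-1}$ is equivalent to $\mathrm{lcm}(g_{11},g_{22})=x^m-1$, i.e.\ to $\gcd(g_{11},g_{22})=\dfrac{g_{11}g_{22}}{x^m-1}$ (up to units) — in other words to $\gcd(g_{11},g_{22})\cdot(x^m-1)=g_{11}g_{22}$.

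\medskip
\noindent\textbf{Step 1 (sufficiency).} Assume $g_{11}g_{22}\equiv 0\pmod{x^m-1}$. I want to exhibit a single element generating $C$, and the natural candidate is $g_1$ itself, or more precisely I will show $g_2\in\langle g_1\rangle$. Consider the projection $P:C\to R$, $P(a,b)=a$, as in the proof of Theorem \ref{generators}(2). We have $\ker P=\langle(0,g_{22})\rangle$ and $\mathrm{Im}\,P=\langle g_{11}\rangle$. The element $\frac{x^m-1}{g_{11}}\,g_1=\bigl(0,\ \frac{x^m-1}{g_{11}}g_{12}\bigr)$ lies in $C$, so it lies in $\ker P=\langle(0,g_{22})\rangle$, whence $g_{22}\mid \frac{x^m-1}{g_{11}}g_{12}$ in $R$ — this is just the third condition of $(\ast)$ restated. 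The point is to show that the cyclic subcode of $R$ generated (in the second coordinate) by the ideal $\langle g_1\rangle\cap\ker P$ already contains $(0,g_{22})$. Multiplying $g_1$ by suitable elements of $R$ and using $\gcd(g_{11},g_{22})\mid g_{12}$ together with $\mathrm{lcm}(g_{11},g_{22})=x^m-1$, a Bézout/CRT argument produces an element of $R$ of the form $(0,u)$ in $\langle g_1\rangle$ with $\gcd(u,g_{22})$ small enough that $\langle u\rangle=\langle g_{22}\rangle$ in $R$; concretely, choosing $a(x)$ with $a\cdot\frac{x^m-1}{g_{11}}g_{12}\equiv g_{22}\cdot(\text{unit})\pmod{x^m-1}$ is possible exactly because $\gcd\bigl(\frac{x^m-1}{g_{11}}g_{12},\,x^m-1\bigr)$ divides $g_{22}$ under the stated hypothesis. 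Then $(0,g_{22})\in\langle g_1\rangle$, so $C=\langle g_1\rangle$.

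\medskip
\noindent\textbf{Step 2 (necessity).} Conversely, suppose $C=\langle h\rangle$ for some $h=(h_1(x),h_2(x))\in R^2$. Applying $P$ gives $\langle h_1\rangle=\mathrm{Im}\,P=\langle g_{11}\rangle$, so WLOG $h_1=g_{11}$ up to a unit, i.e.\ $h=(g_{11},\,h_2)$ with $h_2-g_{12}$ divisible by $g_{22}$ (by the argument in the proof of Theorem \ref{generators}(2) — the second coordinate of any preimage of $g_{11}$ under $P$ agrees with $g_{12}$ modulo $g_{22}$). Now $\langle h\rangle\cap\ker P$ consists of the multiples $r(x)h=(r g_{11},\,r h_2)$ with $r g_{11}\equiv 0$, i.e.\ $r\in\langle\frac{x^m-1}{g_{11}}\rangle$, so this intersection equals $\langle(0,\ \frac{x^m-1}{g_{11}}h_2)\rangle$ as a subcode of $\{0\}\times R$. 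Since $C=\langle h\rangle$ contains $(0,g_{22})$ and $\ker P=\langle(0,g_{22})\rangle$, we need $\langle \frac{x^m-1}{g_{11}}h_2\rangle=\langle g_{22}\rangle$ in $R$, i.e.\ $\gcd\bigl(\frac{x^m-1}{g_{11}}h_2,\ x^m-1\bigr)=g_{22}$ up to a unit. Using $h_2\equiv g_{12}\pmod{g_{22}}$ and $\gcd(g_{11},g_{22})\mid g_{12}$, I compute this gcd and find it forces $\mathrm{lcm}(g_{11},g_{22})=x^m-1$, equivalently $g_{11}g_{22}\equiv 0\pmod{x^m-1}$.

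\medskip
\noindent\textbf{Main obstacle.} The delicate point is the gcd bookkeeping in both directions: in the squarefree ring $R$ one must track exactly which irreducible factors of $x^m-1$ divide $g_{11}$, $g_{22}$, $g_{12}$ and the quotient $\frac{x^m-1}{g_{11}}$, and verify that the condition $g_{11}g_{22}\equiv 0\pmod{x^m-1}$ is the precise statement that $g_{11}$ and $g_{22}$ have "complementary" supports among these factors (so that their product picks up every factor of $x^m-1$). The cleanest way to organize this is to fix the factorization $x^m-1=\prod_{i}p_i(x)$ into distinct irreducibles and encode $g_{11}$, $g_{22}$, $g_{12}$ by the subsets of indices $i$ for which $p_i$ divides them; then every condition in the argument, including Conditions $(\ast)$ and the one-generator criterion, becomes an elementary statement about these subsets, and both implications reduce to a short combinatorial check. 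I would carry the proof out at the level of these subsets to keep the Bézout manipulations transparent.
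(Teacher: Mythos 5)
Your necessity argument (Step 2) is essentially the paper's, phrased through the projection $P$: normalize the single generator to $h=(g_{11},h_2)$, note that $\langle h\rangle\cap\ker P=\big\langle\big(0,\tfrac{x^m-1}{g_{11}}h_2\big)\big\rangle$ must contain $(0,g_{22})$, hence $\tfrac{x^m-1}{g_{11}}$ divides $g_{22}$; that direction is fine (and, as the paper observes in Proposition \ref{one-gen-prop}, does not even need $\gcd(q,m)=1$).

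The sufficiency direction has a genuine gap: the one-element generator cannot be taken to be $g_1$, and the plan ``show $g_2\in\langle g_1\rangle$'' fails. Concretely, take $m=3$, $q=2$, $g_{11}=x+1$, $g_{22}=x^2+x+1$, $g_{12}=0$; Conditions $(\ast)$ and $g_{11}g_{22}\equiv 0\pmod{x^3-1}$ hold, yet $\langle g_1\rangle=\{(a(x)(x+1),0)\}$ never contains $(0,g_{22})$, while $C$ \emph{is} one-generated, e.g.\ by $(x+1,\,x^2+x+1)$. In general $\langle g_1\rangle\cap\ker P=\big\langle\big(0,\tfrac{x^m-1}{g_{11}}g_{12}\big)\big\rangle$, and the ideal generated by $\tfrac{x^m-1}{g_{11}}g_{12}$ may be a \emph{proper} subideal of $\langle g_{22}\rangle$: your key divisibility is stated backwards --- $(\ast)$ gives that $g_{22}$ divides $\gcd\big(\tfrac{x^m-1}{g_{11}}g_{12},\,x^m-1\big)$, not the reverse, so no choice of $a(x)$ can produce $(0,g_{22})$ from $g_1$ when that gcd is strictly larger than $g_{22}$. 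The missing idea, which is the actual content of the paper's proof, is to perturb the second coordinate: one shows $A=(g_{11},\,g_{12}+b(x)g_{22})$ generates $C$ for a suitably engineered $b(x)$ (the paper takes $b=f(x)\big(\sum_i g'_{11}/a_i-h\big)$ where $g'_{11}=(x^m-1)/g_{22}=a_1\cdots a_t$ is the factorization into irreducibles), chosen so that $\tfrac{x^m-1}{g_{11}}(g_{12}+bg_{22})$ is divisible by $g_{22}$ but by no $g_{22}a_s$, hence generates exactly $\langle g_{22}\rangle$. Your proposed bookkeeping by subsets of irreducible factors of $x^m-1$ could be used to make that choice of $b$ via CRT, but as written the proposal fixes the wrong candidate generator and the B\'ezout step it relies on does not go through.
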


\begin{proof} 
Assume that $C$ is generated by $\big( g(x),g'(x)\big)$. Then there exist $a(x)$ and $b(x)$ such that  
$$a(x) \big( g(x),g'(x)\big)  \equiv  \big( g_{11}(x), g_{12}(x)\big)  \pmod{x^m-1},$$
$$b(x) \big( g(x),g'(x)\big)  \equiv  \big( 0, g_{22}(x)\big)  \pmod{x^m-1}.$$
Then 
$$b(x)g_{11}(x) \equiv b(x) a(x) g(x) \equiv b(x) g(x) a(x) \equiv 0 \pmod{x^m-1}.$$
Therefore, 
$$b(x) \equiv  0 \pmod{\frac{x^m-1}{g_{11}(x)}}$$
and 
$$g_{22}(x) \equiv b(x)g'(x) \equiv  0 \pmod{\frac{x^m-1}{g_{11}(x)}}.$$
Hence $g_{11}(x) g_{22}(x) \equiv 0  \pmod{x^m-1}$.

Conversely, suppose that   $g_{11}(x) g_{22}(x) \equiv 0  \pmod{x^m-1}$.  
We will show that  
$$A =\big( g_{11}(x),g_{12}(x) + b(x)g_{22}(x) \big)$$ 
generates the code $C$ for some $b(x)$. 

Let $d(x)= \gcd(g_{11}(x), g_{22}(x))$, $g_{11}(x) = g'_{11}(x) d(x)$, and $g_{22}(x) = g'_{22}(x) d(x)$. 
Then we have $\gcd(g'_{11}(x), g'_{22}(x))=1$, $(x^m-1)d(x) =g_{11}(x)g_{22}(x)$, $x^m-1 = d(x)g'_{11}(x)g'_{22}(x)$ 
and $ \frac{x^m-1}{g_{11}(x)} = g'_{22}(x)$. 
We also have $ \frac{x^m-1}{g_{11}(x)} g_{12}(x) = g_{22}(x) h(x)$ for some $h(x)$, 
since $g_{22}(x) \mid \frac{x^m-1}{g_{11}(x)} g_{12}(x)$. 

Consider $\frac{x^m-1}{g_{11}(x)} A = \big( 0, g_{22}(x) h(x) + g'_{22}(x) b(x)g_{22}(x) \big)$. 
If $g_{22}(x) h(x)R = g_{22}(x)R$, then we can choose $b(x)=0$, so $A=g_1$ and  $AR$ contains  $g_2$, hence   
$A$ generates $C$. 

Now assume that  $g_{22}(x) h(x)R \subsetneqq g_{22}(x)R$.  Since $\gcd(g'_{22}(x),(x^m-1)/g_{22}(x)) =1$, 
there exists $f(x)$, such that 
$$f(x)g'_{22}(x) \equiv 1 \pmod{\frac{x^m-1}{g_{22}(x)}},$$ 
which is equivalent to 
$$f(x)g'_{22}(x)g_{22}(x) \equiv g_{22}(x) \pmod{x^m-1}.$$

Let $g'_{11}(x) = a_1(x) \cdots a_t(x)$ be a product of irreducible polynomials. We define  
$$b(x) = f(x) \left( \sum_{i=1}^t \frac{g'_{11}(x)}{a_i(x)} -h(x) \right) .$$
Then 
\begin{eqnarray*}
g_{22}(x) h(x) + g'_{22}(x) g_{22}(x) b(x) 
&\equiv&  g_{22}(x) h(x) + g'_{22}(x) g_{22}(x)  f(x) \left( \sum_{i=1}^t  \frac{g'_{11}(x)}{a_i(x)} -h(x) \right) \\
&\equiv&  g_{22}(x) h(x) +  g_{22}(x)  \left( \sum_{i=1}^t  \frac{g'_{11}(x)}{a_i(x)} -h(x) \right) \\
&\equiv& g_{22}(x)  \sum_{i=1}^t  \frac{g'_{11}(x)}{a_i(x)} \pmod{x^m-1} .
\end{eqnarray*}

We note that 
$$g_{22}(x) \sum_{i=1}^t \frac{g'_{11}(x)}{a_i(x)}  \equiv  g_{22}(x) \frac{g'_{11}(x)}{a_s(x)} \pmod{g_{22}(x)a_s(x)} 
\not\equiv 0  \pmod{g_{22}(x)a_s(x)}$$
for all $1\le s \le t$. 
Since $x^m-1 = g_{22}(x) g'_{11}(x)$, we have  
$$\big( g_{22}(x) h(x) + g'_{22}(x) g_{22}(x) b(x) \big) R = g_{22}(x)R.$$
Hence, $AR$ contains  $g_2$ and $g_1= A- b(x)g_2$, thus $A$ generates $C$. 
\end{proof} 

In fact, one part of Theorem \ref{one-gen} is valid without restrictions on the characteristic of the ground field. 
The proof of Theorem \ref{one-gen} suggests  the following: 

\begin{proposition}
\label{one-gen-prop}
Let $C$ be a quasi-cyclic code  of length $2m$ and index $2$, generated by  elements 
$g_1=\big( g_{11}(x),g_{12}(x)\big)$ and $g_2=\big( 0,g_{22}(x)\big)$, satisfying Conditions $(\ast)$. If  
$C$ is generated by one element then  $g_{11}(x) g_{22}(x) \equiv 0  \pmod{x^m-1}$. 
\end{proposition}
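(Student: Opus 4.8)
The plan is to observe that the implication to be proved is exactly the forward (``only if'') direction of Theorem \ref{one-gen}, and that the argument given there for that direction nowhere used the hypothesis $\gcd(q,m)=1$. So the proof amounts to isolating that half of the argument and checking that each divisibility it invokes is valid in the principal ideal domain $F[x]$ whether or not $x^m-1$ is squarefree.

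Concretely, I would proceed as follows. Suppose $C = \langle (g(x), g'(x)) \rangle$. Since $g_1, g_2 \in C$, there exist $a(x), b(x) \in R$ with $a(x)(g(x),g'(x)) \equiv (g_{11}(x), g_{12}(x))$ and $b(x)(g(x),g'(x)) \equiv (0, g_{22}(x)) \pmod{x^m-1}$; in particular $b(x) g(x) \equiv 0$ and $b(x) g'(x) \equiv g_{22}(x) \pmod{x^m-1}$. Using only commutativity of $R$, $b(x) g_{11}(x) \equiv b(x) a(x) g(x) = a(x)(b(x) g(x)) \equiv 0 \pmod{x^m-1}$. Lifting to $F[x]$, this says $x^m-1 \mid b(x) g_{11}(x)$; since $g_{11}(x) \mid x^m-1$, dividing gives $\tfrac{x^m-1}{g_{11}(x)} \mid b(x)$ — and this step, being just unique factorization in $F[x]$, does not care about squarefreeness of $x^m-1$. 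Hence $\tfrac{x^m-1}{g_{11}(x)}$ divides $b(x) g'(x)$ and also divides $x^m-1$, so it divides the degree-$<m$ representative $g_{22}(x)$ of $b(x) g'(x) \bmod (x^m-1)$. Writing $g_{22}(x) = \tfrac{x^m-1}{g_{11}(x)} k(x)$ yields $g_{11}(x) g_{22}(x) = (x^m-1) k(x) \equiv 0 \pmod{x^m-1}$, which is the assertion.

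The one point requiring a little care — and the only candidate for an ``obstacle,'' though a mild one — is the bookkeeping between congruences modulo $x^m-1$ (equalities in $R$) and honest divisibilities in $F[x]$: to pass from ``$\tfrac{x^m-1}{g_{11}(x)} \mid (\cdot)$ in $R$'' to ``$\mid (\cdot)$ in $F[x]$'' one must read $b(x)$ and $g_{22}(x)$ as their canonical degree-$<m$ representatives. This is automatic here, since $b(x)$ is by definition an element of $R$ and $g_{22}(x) \mid x^m-1$ already has degree $<m$. No ideas beyond the proof of Theorem \ref{one-gen} are needed; it is precisely the converse direction of that theorem — where one factors $g'_{11}(x)$ into irreducibles and exploits the Chinese-remainder structure of $F[x]/\langle x^m-1\rangle$ — that genuinely uses $\gcd(q,m)=1$, and that converse is not claimed in the Proposition.
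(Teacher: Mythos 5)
Your proof is correct and is essentially the paper's own argument: the authors justify the Proposition by observing that the forward direction of the proof of Theorem \ref{one-gen} (the chain $b(x)g_{11}(x)\equiv 0$, hence $\tfrac{x^m-1}{g_{11}(x)}\mid b(x)$, hence $\tfrac{x^m-1}{g_{11}(x)}\mid g_{22}(x)$) never uses $\gcd(q,m)=1$, which is exactly what you verify. Your extra care in passing between congruences in $R$ and divisibilities in $F[x]$ is sound but does not change the approach.
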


Esmaeili and Yari \cite{Esmaeili} also characterized one-generator codes. According to their criterion, 
in order to find out whether a code $C$ is one-generator, 
one has to find the decomposition of the code $C$ into a sum of component codes and study their properties, 
while our Theorem \ref{one-gen} requires one to check only one condition, viz. if $g_{11}(x) g_{22}(x) \equiv 0  \pmod{x^m-1}$.

Now we estimate minimum  distances of quasi-cyclic codes. 
Let $C$ be a quasi-cyclic code  of length $2m$ and index 2, generated by elements 
$g_1=\big( g_{11}(x),g_{12}(x)\big)$ and $g_2=\big( 0,g_{22}(x)\big)$, satisfying Conditions $(\ast)$. 
Define the following cyclic codes: 
\begin{equation} 
\label{4codes}
\begin{array}{ccc}
C_1 &=& \langle g_{11}(x) \rangle, \\
C_2 &=& \langle g_{22}(x) \rangle, \\
C_3 &=& \big\langle \gcd (g_{12}(x), g_{22}(x)) \big\rangle, \\
C_4 &=& \left\langle \frac{g_{11}(x)g_{22}(x)}{\gcd (g_{12}(x), g_{22}(x)) } \right\rangle.
\end{array}
\end{equation}


Let $d(C)$ denote the minimum Hamming distance of  $C$. We adopt the convention that $d(\{ 0\}) = \infty$.  

\begin{theorem}
\label{distance}
Let $C$ be a quasi-cyclic code  of length $2m$ and index $2$, generated by  elements 
$g_1=\big( g_{11}(x),g_{12}(x)\big)$ and $g_2=\big( 0,g_{22}(x)\big)$, satisfying Conditions $(\ast)$. Then 
\begin{equation} 
\label{bound}
d(C) \ge \min \{ d(C_2), d(C_4), d(C_1) + d(C_3) \}. 
\end{equation} 
\end{theorem}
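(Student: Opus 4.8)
The plan is to take an arbitrary nonzero codeword $c = (a(x), b(x)) \in C$ and split into cases according to whether the first coordinate $a(x)$ vanishes. Every element of $C$ has the form $c = u(x) g_1 + v(x) g_2 = \big(u(x) g_{11}(x),\, u(x) g_{12}(x) + v(x) g_{22}(x)\big)$ in $R^2$, and the Hamming weight of $c$ as a word in $F^{2m}$ is $\mathrm{wt}(a(x)) + \mathrm{wt}(b(x))$ where the weights on the right are the Hamming weights of the corresponding vectors in $F^m$.

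First, suppose $a(x) = 0$, i.e. $u(x) g_{11}(x) \equiv 0 \pmod{x^m-1}$. I would argue the second coordinate $b(x)$ then lies in a cyclic code whose minimum distance is at least $\min\{d(C_2), d(C_4)\}$. Indeed $b(x) = u(x) g_{12}(x) + v(x) g_{22}(x)$; the term $v(x) g_{22}(x)$ lies in $C_2 = \langle g_{22}(x)\rangle$. For the other term: $a(x)=0$ forces $u(x)$ to be a multiple of $(x^m-1)/g_{11}(x)$, and using the divisibility $g_{11}(x) g_{22}(x) \mid (x^m-1) g_{12}(x)$ from Conditions $(\ast)$ one sees that $\frac{x^m-1}{g_{11}(x)} g_{12}(x)$ is a multiple of $g_{22}(x)$; combined with the $\gcd$ bookkeeping this places $u(x) g_{12}(x)$ inside $\langle \gcd(g_{12},g_{22})\rangle$, hence $b(x) \in C_3$ — but more precisely $b(x)$ lies in the sum $C_2 + (\text{multiples of } u(x)g_{12})$, which I expect to identify with $C_3$ exactly, or at least to contain $b(x)$ in $\langle \frac{g_{11}g_{22}}{\gcd(g_{12},g_{22})}\rangle = C_4$ when $b \notin C_2$. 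The cleanest formulation: the set of valid second coordinates $b(x)$ with $a(x)=0$ is a cyclic code sandwiched so that each nonzero element has weight $\ge \min\{d(C_2), d(C_4)\}$. This is the delicate part and will require careful tracking of which $\gcd$/cofactor generates the relevant ideal.

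Second, suppose $a(x) \neq 0$. Then $a(x) = u(x) g_{11}(x) \in C_1 = \langle g_{11}(x)\rangle$ is a nonzero codeword of the cyclic code $C_1$, so $\mathrm{wt}(a(x)) \ge d(C_1)$. If moreover $b(x) \neq 0$, then $b(x) = u(x) g_{12}(x) + v(x) g_{22}(x)$; I would show this lies in $C_3 = \langle \gcd(g_{12},g_{22})\rangle$ (clear, since both $g_{12}$ and $g_{22}$ are multiples of that gcd), so $\mathrm{wt}(b(x)) \ge d(C_3)$, giving $\mathrm{wt}(c) \ge d(C_1) + d(C_3)$. If $b(x) = 0$ then $\mathrm{wt}(c) = \mathrm{wt}(a(x)) \ge d(C_1) \ge d(C_1) + d(C_3)$ would fail in general, so here I must instead observe that $b(x)=0$ and $a(x)\ne 0$ forces, via $u(x) g_{12}(x) \equiv -v(x) g_{22}(x)$, a constraint pushing $a(x)$ into a smaller cyclic code; alternatively, in this sub-case $\mathrm{wt}(c) = \mathrm{wt}(a) \ge d(C_1) + d(C_3)$ may still hold because the constraint $b=0$ with $a \ne 0$ cannot occur unless $d(C_3)$ is small — this needs checking, and is the second place where care is needed. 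Assembling the cases yields $d(C) \ge \min\{d(C_2), d(C_4), d(C_1)+d(C_3)\}$.

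The main obstacle is the first case ($a(x)=0$): pinning down exactly which cyclic code the admissible second coordinates form, and proving its minimum distance is bounded below by $\min\{d(C_2), d(C_4)\}$ rather than something weaker. This is where Conditions $(\ast)$ — especially $g_{11} g_{22} \mid (x^m-1) g_{12}$ — must be used in full, and where the definitions of $C_3$ and $C_4$ as the gcd and the complementary cofactor of $g_{11}g_{22}$ are designed to make the estimate tight. I would handle it by writing $b(x)$ explicitly modulo $\gcd(g_{12},g_{22})$ and modulo $\frac{g_{11}g_{22}}{\gcd(g_{12},g_{22})}$ and splitting on whether $b(x) \in C_2$ or not.
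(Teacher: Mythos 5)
Your case decomposition is the same as the paper's (split on which of the two coordinates of the codeword vanish), and the ingredients you list are the right ones, but the two cases that actually produce the terms $d(C_2)$ and $d(C_4)$ in the bound are left open, and in the first of them you are steering toward a conclusion that is too weak to prove the theorem. Concretely, in the case where the first coordinate is zero: writing $c=u(x)g_1+v(x)g_2$, the vanishing of $u(x)g_{11}(x)$ in $R$ forces $u(x)$ to be a multiple of $\frac{x^m-1}{g_{11}(x)}$, and Condition $(\ast)$ gives $g_{22}(x)\mid \frac{x^m-1}{g_{11}(x)}g_{12}(x)$, so the term $u(x)g_{12}(x)$ already lies in $\langle g_{22}(x)\rangle$; hence the whole second coordinate lies in $C_2$ and $wt(c)\ge d(C_2)$. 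There is no ``sandwich'' between $C_2$ and $C_4$ to analyze, and your fallback conclusion $b(x)\in C_3$ is genuinely insufficient: since $\gcd(g_{12},g_{22})\mid g_{22}$ we have $C_2\subseteq C_3$, so $d(C_3)\le d(C_2)$ and $d(C_3)$ is not one of the three quantities in the bound. This case is the easy one, not the delicate one.

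The second gap is the case where the first coordinate is nonzero and the second coordinate is zero. You correctly observe that $wt(c)\ge d(C_1)$ is not enough and that the relation $u(x)g_{12}(x)+v(x)g_{22}(x)\equiv 0\pmod{x^m-1}$ must constrain $a(x)=u(x)g_{11}(x)$, but you do not carry this out, and your alternative suggestion (that this sub-case ``cannot occur unless $d(C_3)$ is small'') is not the right mechanism. The paper's argument is: the relation forces $u(x)g_{12}(x)\in\langle g_{22}(x)\rangle$, hence $u(x)$ is a multiple of $\frac{g_{22}(x)}{\gcd(g_{12}(x),g_{22}(x))}$ (with a separate, slightly stronger computation when $v(x)\equiv 0$, where one instead uses $u(x)g_{12}(x)\equiv 0\pmod{x^m-1}$), so that $a(x)=u(x)g_{11}(x)\in\left\langle \frac{g_{11}(x)g_{22}(x)}{\gcd(g_{12}(x),g_{22}(x))}\right\rangle=C_4$ and $wt(c)\ge d(C_4)$. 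This is the only place $C_4$ enters the proof; its absence from your completed cases is the sign that the argument is not yet closed. Your third case (both coordinates nonzero, giving $d(C_1)+d(C_3)$) matches the paper and is fine.
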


\begin{proof} 
Assume that a codeword $c \in C$ has the form $(0,B)$, $B\ne 0$. Then 
$$c = a(x) \frac{x^m-1}{g_{11}(x)} g_1 + b(x)g_2 =  \big(0, a(x)\frac{x^m-1}{g_{11}(x)} g_{12}(x) + b(x)g_{22}(x) \big).$$ 
Since $g_{22}(x) \mid \frac{x^m-1}{g_{11}(x)} g_{12}(x) $, we have 
$$wt (c) \ge d(C_2),$$ 
where $wt(c)$ denotes the Hamming weight of $c$. 

Assume that  $c \in C$ has the form $(A,0)$, $A\ne 0$. Then 
$$c = a(x) g_1 + b(x) g_2 =  \big( a(x)g_{11}(x), a(x) g_{12}(x) + b(x)g_{22}(x) \big),$$ 
where $a(x) g_{12}(x) + b(x)g_{22}(x)  \equiv 0 \pmod{x^m-1}$. 
If $b(x) \not\equiv 0 \pmod{x^m-1}$,  then $a(x) g_{12}(x) \in \langle g_{22}(x) \rangle =C_2$. Hence 
$$a(x)= \frac{g_{22}(x)}{\gcd(g_{12}(x),g_{22}(x))} r, \ \ r \in R.$$
Then 
$$a(x)g_{11}(x)= \frac{g_{22}(x)g_{11}(x)}{\gcd(g_{12}(x),g_{22}(x))}r \in C_4.$$
If $b(x) \equiv 0 \pmod{x^m-1}$ then 
$$a(x)= \frac{x^m-1}{\gcd(g_{12}(x), x^m-1)} r, \ \ r \in R.$$
Hence  
$$a(x)g_{11}(x)= \frac{(x^m-1)g_{11}(x)}{\gcd(g_{12}(x), x^m-1)}r \in C_4,$$
since $\frac{g_{22}(x)}{\gcd(g_{12}(x),g_{22}(x))}$ divides $\frac{x^m-1}{\gcd(g_{12}(x),x^m-1)}$. 
Therefore, 
$$wt (c) \ge d(C_4).$$

Finally, assume that  $c \in C$ has the form $(A,B)$, $A\ne 0$, $B\ne 0$. Then 
$$c =  \big( a(x)g_{11}(x), a(x) g_{12}(x) + b(x)g_{22}(x) \big).$$ 
Since $a(x) g_{12}(x) + b(x)g_{22}(x) \in C_3$, we have 
$$wt (c) \ge d(C_1) + d(C_3),$$
which proves the theorem. 
\end{proof} 

\begin{remark}
The lower bound (\ref{bound}) is sharp in the sense that there is a code $C$ for which the bound 
is exact. 
Indeed, consider $g_{11}(x)=g_{22}(x)$ and $g_{12}(x)=0$. Then $C_1=C_2=C_3=C_4$ and 
$C=C_1\oplus C_1$. 
As a result, we have equality in (\ref{bound}). 
\end{remark}

\begin{remark}
The lower bound (\ref{bound}) looks similar to the bound in \cite{Dastbasteh}, however the paper \cite{Dastbasteh} only  considered bounds for codes over the extension field 
${\mathbb F}_{p^2}$, not over ${\mathbb F}_p$. The bound  in \cite{Dastbasteh} is similar to the bound for symplectic weights in Theorem \ref{distance-s}. 
\end{remark}

\section{Euclidean duals of quasi-cyclic codes}
\label{euclidean}

Let $f(x)=a_0 +a_1 x+a_2 x+\cdots +a_k x^k$ be a polynomial of degree $k$. Then the {\emph{reciprocal polynomial}} 
of $a(x)$ is the polynomial 
$$f(x)^*=x^{\deg f} f(x^{-1}) =a_k +a_{k-1} x+\cdots +a_{0}x^k. $$


Let $f(x)=a_0 +a_1 x+ \cdots+ a_{m-1}x^{m-1}+a_{m}x^{m} $, where $m$ is as before, i.e., $R=F[x]/ \langle x^m-1\rangle$.   
We define the {\emph{transpose polynomial}} $f(x)^\circ$ of $f(x)$ as 
$$f(x)^\circ =x^m f(x^{-1})=a_m +a_{m-1} x+\cdots+a_{2}x^{m-2}+a_{1}x^{m-1} + a_0x^m. $$

\begin{lemma} 
\label{transpose} 
1) If $\deg f(x) \le m$ then 
$$f(x)^\circ =  x^{m-\deg f}  f(x)^* ,$$
$$f(x)^\circ \equiv (a_0 + a_m) +a_{m-1} x+\cdots+a_{2}x^{m-2}+a_{1}x^{m-1} \pmod{x^m-1}.$$ 
 
2) If $\deg f(x)h(x) \le m$ then 
$$x^m \big( f(x)h(x)\big)^\circ =  f(x)^{\circ} h(x)^{\circ},$$ 
$$\big( f(x)h(x)\big)^\circ \equiv  f(x)^{\circ} h(x)^{\circ} \pmod{x^m-1}.$$ 
\end{lemma}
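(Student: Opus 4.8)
The plan is to prove the two parts of Lemma \ref{transpose} directly from the definitions, treating part 1) as essentially a bookkeeping identity about coefficients and part 2) as a consequence of part 1) together with the elementary behavior of the reciprocal polynomial under multiplication.

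\textbf{Part 1).} First I would observe that the transpose polynomial was \emph{defined} by $f(x)^\circ = x^m f(x^{-1})$, where $f(x)$ is written as a polynomial of formal degree $m$ (i.e., $f(x)=a_0+a_1x+\cdots+a_mx^m$, padding with zero coefficients if necessary). So the first identity $f(x)^\circ = x^{m-\deg f}f(x)^*$ is immediate: writing $k=\deg f$, we have $f(x)^\circ = x^m f(x^{-1}) = x^{m-k}\cdot x^k f(x^{-1}) = x^{m-k} f(x)^*$ by the definition of the reciprocal polynomial $f(x)^* = x^{\deg f}f(x^{-1})$. For the congruence, I note that $f(x)^\circ = a_m + a_{m-1}x + \cdots + a_1 x^{m-1} + a_0 x^m$ from the coefficient-reversal formula, and then reducing modulo $x^m-1$ replaces the $a_0x^m$ term by $a_0$, which combines with the constant term $a_m$ to give $(a_0+a_m) + a_{m-1}x + \cdots + a_1 x^{m-1}$. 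This is just substituting $x^m\equiv 1$.

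\textbf{Part 2).} Here I would use part 1) as a stepping stone. Suppose $\deg(f(x)h(x)) \le m$. A standard and easily checked fact about reciprocal polynomials is that for \emph{any} two polynomials, $(fh)^* = f^* h^*$ (this follows since $(fh)(x^{-1}) = f(x^{-1})h(x^{-1})$ and $x^{\deg f + \deg h} = x^{\deg f}x^{\deg h}$, using $\deg(fh) = \deg f + \deg h$ over a field). Now apply part 1) to each of $f$, $h$, and $fh$: writing $d_f = \deg f$, $d_h=\deg h$, so $d_f+d_h = \deg(fh) \le m$, we get
\begin{align*}
f(x)^\circ h(x)^\circ &= x^{m-d_f} f(x)^* \cdot x^{m-d_h} h(x)^* = x^{2m - d_f - d_h}\, f(x)^* h(x)^* \\
&= x^{2m-d_f-d_h} (f(x)h(x))^* = x^m \cdot x^{m-(d_f+d_h)} (f(x)h(x))^* = x^m (f(x)h(x))^\circ,
\end{align*}
where the last equality is again part 1) applied to $fh$. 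This is exactly the claimed identity $x^m(f(x)h(x))^\circ = f(x)^\circ h(x)^\circ$. Reducing modulo $x^m-1$ and using $x^m\equiv 1$ yields the congruence $(f(x)h(x))^\circ \equiv f(x)^\circ h(x)^\circ \pmod{x^m-1}$.

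\textbf{Main obstacle.} There is no deep obstacle here; the only thing requiring care is the treatment of degrees and zero padding. One must be careful that the transpose operator $(\cdot)^\circ$ implicitly depends on the choice $m$ (it reverses coefficients in positions $0$ through $m$), so that $f(x)^\circ$ is well-defined only once we fix that $f$ is regarded as having formal degree $\le m$; the hypothesis $\deg(fh)\le m$ is precisely what makes the product computation legitimate, since otherwise $x^m(fh)(x^{-1})$ would have negative-power terms and the coefficient-reversal description would fail. I would state this convention explicitly at the start of the proof and then the rest is the short computation above.
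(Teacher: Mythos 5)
Your proposal is correct. Part 1) is exactly the paper's argument: the identity $f(x)^\circ = x^{m-\deg f}\,x^{\deg f}f(x^{-1}) = x^{m-\deg f}f(x)^*$ followed by the reduction $a_0x^m\equiv a_0 \pmod{x^m-1}$. For part 2) you take a slightly longer route than the paper: you pass through the multiplicativity of the reciprocal polynomial, $(fh)^* = f^*h^*$, and then apply part 1) three times, which requires invoking $\deg(fh)=\deg f+\deg h$ over a field. The paper instead computes directly from the definition, $x^m\bigl(f(x)h(x)\bigr)^\circ = x^m\cdot x^m f(x^{-1})h(x^{-1}) = \bigl(x^mf(x^{-1})\bigr)\bigl(x^mh(x^{-1})\bigr) = f(x)^\circ h(x)^\circ$, which avoids the reciprocal polynomial entirely and is a one-line substitution. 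Both arguments are sound; your remark that the hypothesis $\deg(fh)\le m$ is what keeps $x^m(fh)(x^{-1})$ free of negative powers is the right point of care in either version.
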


\begin{proof}
We have
 $$f(x)^\circ = x^mf(x^{-1}) = x^{m-\deg f} x^{\deg f}f(x^{-1})  =  x^{m-\deg f}  f(x)^*,$$ 
$$x^m \big( f(x)h(x)\big)^\circ = x^m x^m f(x^{-1})h(x^{-1})  =   f(x)^{\circ} h(x)^{\circ} ,$$   
which proves the lemma.
 \end{proof}
 
We recall the standard inner product on the space $R=F[x]/ \langle x^m-1 \rangle$. 
Let $a(x)=a_0+a_1x+\cdots+a_{m-1}x^{m-1}$ and $b(x)=b_0+b_1x+\cdots+b_{m-1}x^{m-1}$. 
Then the Euclidean inner product on $R$ is defined as 
\begin{equation}
\label{innerprod}
\big\langle a(x),b(x)\big\rangle_e = a_0b_0+a_1b_1+ \cdots +a_{m-1}b_{m-1}. 
\end{equation}
It is consistent with the standard dot product between vectors $(a_0,a_1,\dots, a_{m-1})$ and $(b_0,b_1,\dots, b_{m-1})$. 

\begin{lemma}[\cite{Galindo}] 
\label{adjoint} 
Let $a(x)$, $b(x)$, $c(x)$ be polynomials in $R$. Then   
\[ \big\langle c(x)a(x),b(x)\big\rangle_e = \big\langle  a(x),c(x)^\circ b(x) \big\rangle_e .\]
\end{lemma}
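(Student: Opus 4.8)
The plan is to reduce the identity to a direct computation with coefficients, using the fact that both sides are $F$-bilinear in $a(x)$ and $b(x)$, so it suffices to check the claim when $c(x)$ is a single monomial $x^k$, $0 \le k \le m-1$. The key observation is that multiplication by $x^k$ in $R$ is the $k$-fold cyclic shift operator on the coefficient vector, and the adjoint of a cyclic shift with respect to the Euclidean inner product is the inverse cyclic shift, i.e. multiplication by $x^{m-k} = x^{-k}$ in $R$. So the whole content of the lemma is that $(x^k)^\circ \equiv x^{m-k} \pmod{x^m-1}$ for $1 \le k \le m-1$, together with $(x^0)^\circ = (1)^\circ = x^m \equiv 1 \pmod{x^m-1}$; these follow immediately from the definition $f(x)^\circ = x^m f(x^{-1})$ and from part 1) of Lemma \ref{transpose}.

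First I would write $c(x) = \sum_{k=0}^{m-1} c_k x^k$ and expand $\langle c(x)a(x), b(x)\rangle_e$ as $\sum_k c_k \langle x^k a(x), b(x)\rangle_e$, reducing to the monomial case by bilinearity. Next, for a fixed $k$, I would compute $x^k a(x) \bmod (x^m-1)$ explicitly: its $j$-th coefficient is $a_{j-k \bmod m}$. Then $\langle x^k a(x), b(x)\rangle_e = \sum_{j=0}^{m-1} a_{j-k \bmod m}\, b_j$, and re-indexing $i = j - k \bmod m$ gives $\sum_{i=0}^{m-1} a_i\, b_{i+k \bmod m}$. On the other side, $(x^k)^\circ b(x) = x^{m-k} b(x) \bmod (x^m-1)$ (using $(x^k)^\circ = x^m x^{-k} = x^{m-k}$, which already lies in $R$ when $1 \le k \le m-1$, and equals $1$ when $k = 0$), whose $i$-th coefficient is $b_{i-(m-k) \bmod m} = b_{i+k \bmod m}$. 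Hence $\langle a(x), (x^k)^\circ b(x)\rangle_e = \sum_i a_i\, b_{i+k \bmod m}$, matching the left-hand side. Finally I would invoke Lemma \ref{transpose}(2) to note $(c(x))^\circ \equiv \sum_k c_k (x^k)^\circ \pmod{x^m-1}$ is consistent with the termwise expansion, so summing over $k$ with coefficients $c_k$ yields the full identity $\langle c(x)a(x), b(x)\rangle_e = \langle a(x), c(x)^\circ b(x)\rangle_e$.

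I do not expect a serious obstacle here; the only mild care needed is bookkeeping with indices mod $m$ and handling the $k=0$ term separately (where $x^0 f(x^{-1})$-type reasoning degenerates and one simply has $(1)^\circ = x^m \equiv 1$). An alternative, essentially equivalent route is to verify the adjoint property abstractly: the Euclidean form on $R$ identifies $R$ with its own dual, and the map $f \mapsto f^\circ$ on $R$ is the $F$-linear involution characterized by $x^k \mapsto x^{-k}$; since multiplication-by-$c(x)$ has matrix (in the monomial basis) the circulant determined by the coefficients of $c$, its transpose is the circulant determined by the coefficients of $c^\circ$, which is exactly multiplication-by-$c(x)^\circ$. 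Either way the computation is routine and short.
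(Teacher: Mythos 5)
Your proof is correct and follows essentially the same route as the paper's: reduce by linearity to $c(x)=x^k$, identify $(x^k)^\circ$ with $x^{m-k}$, and verify the re-indexed identity $\sum_{i} a_{i-k}b_i = \sum_{i} a_i b_{i+k}$ with indices modulo $m$. (One small note: the linearity $c(x)^\circ = \sum_k c_k (x^k)^\circ$ follows directly from the definition $f(x)^\circ = x^m f(x^{-1})$ rather than from part 2) of Lemma~\ref{transpose}, which concerns products.)
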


\begin{proof}
It is sufficient to prove the statement for $c(x)=x^k$. We have to show that 
\[ \big\langle x^k a(x),b(x)\big\rangle_e = \big\langle  a(x), x^{m-k} b(x)  \big\rangle_e ,\]
which is equivalent to the equality 
\[ \sum_{i=0}^{m-1} a_{i-k}b_i =  \sum_{i=0}^{m-1} a_ib_{i+k} ,\]
where indices are considered modulo $m$. 
\end{proof}

\begin{remark}
Lemma \ref{adjoint} explains why $c(x)^{\circ}$ is called the transpose polynomial for $c(x)$. 
Recall that if $\varphi$ is an endomorphism of a vector space $R$, then the adjoint 
$\varphi^{\circ}$ of $\varphi$ is defined by the equation 
$\big\langle \varphi(a),b\big\rangle_e = \big\langle  a,\varphi^{\circ}(b) \big\rangle_e$. 
For symmetric inner products, in the matrix presentation, the adjoint of a matrix is the transpose of the matrix. 
If we consider the multiplication by $c(x)$ as an endomorphism of $R$, then its adjoint  is the 
multiplication by $c(x)^\circ$.  
\end{remark}

\begin{lemma} 
\label{gg} 
For $f(x) \in  F[x]$, the following statements hold: 

1) If $f(x)$ divides $x^m-1$, then $f(x)^{\circ}$ divides $x^m(1-x^m)$. 

2) If $f(x)$ divides $x^m-1$, then 
$$\left( \frac{x^m-1}{f(x)} \right)^{\circ} = \frac{x^m(1-x^m)}{f(x)^{\circ}}.$$ 

3) If $\deg g_{12} (x) < \deg g_{22}(x)$ and $g_{11}(x) g_{22}(x)$ divides $(x^m-1)g_{12}(x)$, then 
$$\left( \frac{(x^m-1)g_{12}(x)}{g_{11}(x)g_{22}(x)} \right)^{\circ} = 
\frac{x^m(1-x^m) g_{12}(x)^{\circ}}{g_{11}(x)^{\circ}g_{22}(x)^{\circ}} .$$
\end{lemma}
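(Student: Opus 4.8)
The plan is to reduce all three statements to the definition $f(x)^{\circ}=x^{m}f(x^{-1})$ together with two elementary observations: the multiplicativity relation $x^{m}\big(f(x)h(x)\big)^{\circ}=f(x)^{\circ}h(x)^{\circ}$ from Lemma~\ref{transpose}(2), valid whenever $\deg(fh)\le m$, and the identity $g(x^{-1})=x^{-m}g(x)^{\circ}$, which holds for any $g$ of degree at most $m$ and is immediate from the definition. The base computation that feeds everything is $(x^{m}-1)^{\circ}=x^{m}(x^{-m}-1)=1-x^{m}$.

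For Part~1, write $x^{m}-1=f(x)h(x)$ with $h(x)=(x^{m}-1)/f(x)$, so $\deg f+\deg h=m$. Since $\deg(fh)=m$, Lemma~\ref{transpose}(2) yields $x^{m}(x^{m}-1)^{\circ}=f(x)^{\circ}h(x)^{\circ}$, that is, $f(x)^{\circ}h(x)^{\circ}=x^{m}(1-x^{m})$. As $\deg h\le m$, the polynomial $h(x)^{\circ}$ lies in $F[x]$, so $f(x)^{\circ}$ divides $x^{m}(1-x^{m})$, which is Part~1; solving for $h(x)^{\circ}$ and recalling $h=(x^{m}-1)/f$ gives Part~2.

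For Part~3, put $P(x)=\dfrac{(x^{m}-1)g_{12}(x)}{g_{11}(x)g_{22}(x)}$, a polynomial by hypothesis (if $g_{12}=0$ the identity is trivial, so assume $g_{12}\neq 0$). One has $\deg P=m+\deg g_{12}-\deg g_{11}-\deg g_{22}$, and since $\deg g_{12}<\deg g_{22}$ this gives $\deg P<m-\deg g_{11}\le m$; hence $P$, $g_{11}$, $g_{22}$, $g_{12}$ all have degree at most $m$, so $P^{\circ}$, $g_{11}^{\circ}$, $g_{22}^{\circ}$, $g_{12}^{\circ}$ are all polynomials. Now take the polynomial identity $P(x)g_{11}(x)g_{22}(x)=(x^{m}-1)g_{12}(x)$, substitute $x\mapsto x^{-1}$, and multiply through by $x^{3m}$: the left side becomes $\big(x^{m}P(x^{-1})\big)\big(x^{m}g_{11}(x^{-1})\big)\big(x^{m}g_{22}(x^{-1})\big)=P(x)^{\circ}g_{11}(x)^{\circ}g_{22}(x)^{\circ}$, while the right side becomes $x^{3m}(x^{-m}-1)g_{12}(x^{-1})=x^{2m}(1-x^{m})g_{12}(x^{-1})=x^{m}(1-x^{m})g_{12}(x)^{\circ}$. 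Equating these and dividing (an exact division in $F[x]$) by $g_{11}(x)^{\circ}g_{22}(x)^{\circ}$ gives the asserted formula.

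The one point that needs care is the degree bookkeeping in Part~3: the clause $\deg g_{12}<\deg g_{22}$ of Conditions~$(\ast)$ is exactly what guarantees $\deg P\le m$, so that $P(x)^{\circ}$ is an honest polynomial and the substitutions above are identities in $F[x]$ rather than merely in the ring of Laurent polynomials; without it the conclusion must be read as a rational-function identity. Everything else is routine tracking of powers of $x$. A slightly different route to Part~3 applies Lemma~\ref{transpose}(2) in two stages — first to $P\cdot g_{11}=\big((x^{m}-1)/g_{22}\big)\cdot g_{12}$, all of whose factors have degree at most $m$, then using Part~2 to rewrite $\big((x^{m}-1)/g_{22}\big)^{\circ}$ — but the direct substitution is more economical.
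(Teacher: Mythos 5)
Your proof is correct and follows essentially the same route as the paper's: substitute $x\mapsto x^{-1}$ into the defining polynomial identity and multiply through by the appropriate power of $x$, with the hypothesis $\deg g_{12}<\deg g_{22}$ guaranteeing that the quotient $P$ has degree at most $m$ so that all transposes are genuine polynomials. Your explicit degree bookkeeping and your derivation of Part~1 from the factorization $f(x)^{\circ}h(x)^{\circ}=x^{m}(1-x^{m})$ are slightly more detailed than the paper's terse computation, but the underlying argument is identical.
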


\begin{proof} 
We have 
$$\left( \frac{x^m-1}{f(x)} \right)^{\circ}  =  x^m \frac{(1/x)^m-1}{f(x^{-1})} =  x^m \frac{1-x^m}{x^mf(x^{-1})} 
 =  \frac{x^m(1-x^m)}{f(x)^{\circ}} .$$
 
Let $h(x) \cdot g_{11}(x)g_{22}(x) = (x^m-1)g_{12}(x)$.  Then $\deg h(x) < m$ and 
$$x^mh(1/x) \cdot x^m g_{11}(1/x) x^m g_{22}(1/x) = x^mx^m((1/x)^m-1) x^mg_{12}(1/x),$$
$$h(x)^{\circ} \cdot g_{11}(x)^{\circ}  g_{22}(x)^{\circ}  = x^m (1-x^m) g_{12}(x)^{\circ}.$$
Thus the lemma follows.
\end{proof} 

\begin{remark}
If $f(x) \mid (x^m-1)$, then $f(x)^* \mid (x^m-1)$. However,   $f(x)^{\circ}$ might not divide $x^m-1$, 
but it divides $x^m(x^m-1)$. 
\end{remark}

The inner product (\ref{innerprod}) can be naturally extended to quasi-cyclic codes of length $n=2m$ and index 2: 
\[ \big\langle  \big(a(x),b(x)\big) , \big(a'(x),b'(x)\big) \big\rangle_e  =  
\big\langle  a(x),a'(x) \big\rangle_e  + \big\langle  b(x), b'(x) \big\rangle_e.    \]

If $C$ is a code in $F^{n}$, then its Euclidean dual code is 
$$C^{\perp_e} = \{ u \in F^{n} \mid \langle u, v \rangle =0, \ {\rm for \ all} \ v\in C  \}.$$
The code $C$ is called  self-orthogonal if $C \subseteq C^{\perp_e}$, 
and  dual-containing if $C \supseteq C^{\perp_e}$.

\begin{theorem}
\label{dual}
Let $C$ be a quasi-cyclic code  of length $2m$ and index $2$, generated by two elements 
$g_1=\big( g_{11}(x),g_{12}(x)\big)$ and $g_2=\big( 0,g_{22}(x)\big)$, satisfying Conditions $(\ast)$. 
Then its Euclidean dual code $C^{\perp_e}$ is generated by 
two elements $\Big(  \frac{x^m(x^m-1)}{g_{11}(x)^{\circ}} ,0 \Big)$ and 
$\Big( \frac{x^m(x^m-1)g_{12}(x)^{\circ}} {g_{11}(x)^{\circ}  g_{22}(x)^{\circ} } ,   
- \frac{x^m(x^m-1)}{g_{22}(x)^{\circ} } \Big)$. 
\end{theorem}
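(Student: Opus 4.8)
The plan is to verify directly that the two proposed generators lie in $C^{\perp_e}$, and then to show that the code they generate has the right dimension, so that equality with $C^{\perp_e}$ follows. Write $h_{11}=\frac{x^m-1}{g_{11}(x)}$, $h_{22}=\frac{x^m-1}{g_{22}(x)}$, and recall from Conditions $(\ast)$ that $g_{11}g_{22}\mid (x^m-1)g_{12}$, so that $k(x):=\frac{(x^m-1)g_{12}}{g_{11}g_{22}}$ is a genuine polynomial; by Lemma \ref{gg}(3) its transpose is $k(x)^\circ = \frac{x^m(1-x^m)g_{12}^\circ}{g_{11}^\circ g_{22}^\circ}$. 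Denote the two candidate generators by $d_1=\big(\frac{x^m(x^m-1)}{g_{11}^\circ},0\big)$ and $d_2=\big(\frac{x^m(x^m-1)g_{12}^\circ}{g_{11}^\circ g_{22}^\circ},-\frac{x^m(x^m-1)}{g_{22}^\circ}\big)$. Note that $\frac{x^m(x^m-1)}{g_{11}^\circ}$ makes sense in $R$ because $g_{11}^\circ\mid x^m(x^m-1)$ by Lemma \ref{gg}(1), and similarly for the other denominators.

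The orthogonality step is the computational heart. It suffices to check $\langle d_i, g_j\rangle_e = 0$ for $i,j\in\{1,2\}$, since orthogonality to a generating set implies orthogonality to $C$. Using the bilinear extension of $\langle\cdot,\cdot\rangle_e$ to $R^2$ and the adjoint identity of Lemma \ref{adjoint} — namely $\langle c(x)a(x),b(x)\rangle_e = \langle a(x),c(x)^\circ b(x)\rangle_e$ — each of the four inner products collapses. For instance $\langle d_1,g_2\rangle_e = \langle \frac{x^m(x^m-1)}{g_{11}^\circ},0\rangle_e\cdot\text{(first coords)} + \langle 0,g_{22}\rangle_e = 0$ trivially in the second coordinate and $0$ in the first since the second component of $g_2$ is paired with the zero second component of $d_1$; the genuinely nonzero checks are $\langle d_1,g_1\rangle_e$, $\langle d_2,g_1\rangle_e$, $\langle d_2,g_2\rangle_e$. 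In each case I would move the whole polynomial factor from the $d_i$ side onto the $g_j$ side via Lemma \ref{adjoint}, using $\big(\frac{x^m(x^m-1)}{g_{11}^\circ}\big)^\circ$-type identities from Lemma \ref{gg}(2) to rewrite transposes of quotients, and reduce each expression to the pairing of something divisible by $x^m-1$ against something in $R$, which is $0$ in $R$. The pairing $\langle d_2,g_1\rangle_e$ is the delicate one: its first-coordinate contribution $\langle \frac{x^m(x^m-1)g_{12}^\circ}{g_{11}^\circ g_{22}^\circ}, g_{11}\rangle_e$ and its second-coordinate contribution $\langle -\frac{x^m(x^m-1)}{g_{22}^\circ}, g_{12}\rangle_e$ must cancel, and this cancellation is exactly the content of the divisibility $g_{11}g_{22}\mid(x^m-1)g_{12}$, fed through Lemma \ref{transpose}(2) and Lemma \ref{gg}.

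For the dimension count: by Theorem \ref{generators}, $\dim C = 2m-\deg g_{11}-\deg g_{22}$, so $\dim C^{\perp_e} = \deg g_{11}+\deg g_{22}$. On the other hand, the code $D$ generated by $d_1,d_2$ has a first generator $d_1$ whose first coordinate is $\frac{x^m(x^m-1)}{g_{11}^\circ}$; modulo $x^m-1$ this equals (up to a unit $x^m$, which is invertible in $R$) $\big(\frac{x^m-1}{g_{11}}\big)^\circ$ by Lemma \ref{gg}(2), a polynomial whose associated cyclic code $\langle(\frac{x^m-1}{g_{11}})^\circ\rangle$ has dimension $\deg g_{11}$ — here I use that reciprocation/transposition of a divisor of $x^m-1$ preserves degree and that $g^\circ$ and $g^*$ differ by a power of $x$. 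Applying the projection argument from the proof of Theorem \ref{generators}(2) to $D$: the image under $P:(a,b)\mapsto a$ is the cyclic code generated by the (normalized) first coordinate of $d_1$, of dimension $\deg g_{11}$, and the kernel of $P$ restricted to $D$ is generated by the second coordinate of a suitable combination $d_2 + (\cdot)d_1$, which is $-\frac{x^m(x^m-1)}{g_{22}^\circ}$, again $=-\big(\frac{x^m-1}{g_{22}}\big)^\circ$ up to a unit, generating a cyclic code of dimension $\deg g_{22}$. Hence $\dim D = \deg g_{11}+\deg g_{22} = \dim C^{\perp_e}$. Combined with $D\subseteq C^{\perp_e}$ from the orthogonality step, this gives $D = C^{\perp_e}$.

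I expect the main obstacle to be bookkeeping with the spurious factors of $x^m$ and the distinction between equalities in $F[x]$ versus in $R$: Lemma \ref{transpose}(2) only gives $\big(fh\big)^\circ = f^\circ h^\circ$ up to a factor $x^m$, and Lemma \ref{gg} produces denominators like $g_{11}^\circ$ that divide $x^m(x^m-1)$ rather than $x^m-1$. Keeping the degree hypotheses ($\deg g_{12}<\deg g_{22}$, and degree bounds needed to apply Lemma \ref{transpose}(1)–(2)) straight while performing the cancellations in $\langle d_2,g_1\rangle_e$ is where care is required; everything else is a routine application of Lemmas \ref{transpose}, \ref{adjoint}, and \ref{gg}.
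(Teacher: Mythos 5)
Your orthogonality step is sound and coincides with the paper's argument: pair each candidate generator against $g_1,g_2$, move factors across the inner product with Lemma \ref{adjoint}, and observe that the two contributions to $\langle d_2,g_1\rangle_e$ cancel precisely because $g_{11}g_{22}\mid(x^m-1)g_{12}$. The genuine gap is in your dimension count. You project onto the \emph{first} coordinate and assert that $P(D)$ is the cyclic code generated by the first coordinate of $d_1$, and that some combination $d_2+f(x)d_1$ has vanishing first coordinate. Both claims fail in general: up to the unit $x^m$, the first coordinate of $d_2$ is $\frac{(x^m-1)g_{12}(x)^*}{g_{11}(x)^*g_{22}(x)^*}$, and this lies in $\big\langle \frac{x^m-1}{g_{11}(x)^*}\big\rangle$ only when $g_{22}(x)^*$ divides $g_{12}(x)^*$, which together with $\deg g_{12}<\deg g_{22}$ forces $g_{12}=0$. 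Concretely, take $q=2$, $m=3$, $g_{11}=x+1$, $g_{22}=x^2+x+1$, $g_{12}=1$ (Conditions $(\ast)$ hold): the first coordinates of $d_1$ and $d_2$ are associates of $x^2+x+1$ and of $1$, so $P(D)=R$ has dimension $3\neq\deg g_{11}=1$, and no combination $d_2+f(x)d_1$ has zero first coordinate.

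The repair is to project onto the \emph{second} coordinate, with respect to which $d_1,d_2$ are in echelon form: the image is $\big\langle \frac{x^m(x^m-1)}{g_{22}(x)^{\circ}}\big\rangle$, of dimension $\deg g_{22}$, while an element $a_1d_1+a_2d_2$ lies in the kernel iff $g_{22}(x)^*\mid a_2$ (up to units), in which case $a_2$ times the first coordinate of $d_2$ is already a multiple of $\frac{x^m(x^m-1)}{g_{11}(x)^{\circ}}$; hence the kernel is $\langle d_1\rangle$, of dimension $\deg g_{11}$, and $\dim D=\deg g_{11}+\deg g_{22}$ as required. Equivalently --- and this is what the paper does --- rewrite $d_1,d_2$ via Lemma \ref{transpose} in terms of reciprocal polynomials, recognize them as generators in the canonical triangular form of Theorem \ref{generators} with the two coordinates interchanged (the divisibility condition in $(\ast)$ holds trivially for them, and $\deg g_{12}<\deg g_{22}$ yields the degree condition), and read off $\dim C'=2m-\deg\frac{x^m-1}{g_{11}(x)^*}-\deg\frac{x^m-1}{g_{22}(x)^*}=\deg g_{11}+\deg g_{22}$.
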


\begin{proof} 
Let code the $C'$ be generated by two elements $f_1=\Big(  \frac{x^m(x^m-1)}{g_{11}(x)^{\circ}} ,0 \Big)$ and \\
$f_2=\Big( \frac{x^m(x^m-1)g_{12}(x)^{\circ}} {g_{11}(x)^{\circ}  g_{22}(x)^{\circ} } ,   
- \frac{x^m(x^m-1)}{g_{22}(x)^{\circ} } \Big)$. Then 
$$\langle g_2, f_1 \rangle_e = 0,$$
$$\langle g_1, f_1 \rangle_e =
\langle g_{11}(x), \frac{x^m(x^m-1)}{g_{11}(x)^{\circ}}  \rangle_e  = 
\langle 1, \frac{x^m(x^m-1)}{g_{11}(x)^{\circ}}  g_{11}(x)^{\circ} \rangle_e  = 0,$$
$$\langle g_2, f_2 \rangle_e =
\langle g_{22}(x), - \frac{x^m(x^m-1)}{g_{22}(x)^{\circ} } \rangle_e  = 
\langle 1, -\frac{x^m(x^m-1)}{g_{22}(x)^{\circ}}  g_{22}(x)^{\circ} \rangle_e  = 0,$$
\begin{eqnarray*}
\langle g_1, f_2 \rangle_e 
&=& \langle g_{11}(x), \frac{x^m(x^m-1)g_{12}(x)^{\circ}} {g_{11}(x)^{\circ}  g_{22}(x)^{\circ} } \rangle_e + 
\langle g_{12}(x), - \frac{x^m(x^m-1)}{g_{22}(x)^{\circ} } \rangle_e  \\
&=& \langle 1, \frac{x^m(x^m-1)g_{12}(x)^{\circ}} {g_{22}(x)^{\circ}  } \rangle_e + 
\langle 1, - \frac{x^m(x^m-1) g_{12}(x)^{\circ}}{g_{22}(x)^{\circ} } \rangle_e  \\
&=& 0 .
\end{eqnarray*}
Thus $C' \subseteq C^{\perp_e}$. We will show now that $\dim C' = \dim C^{\perp_e}$. 
Note that $\dim C^{\perp_e} = 2m - \dim C = \deg g_{11}(x) + \deg g_{2}(x)$. 

Using Lemma \ref{transpose}, we can rewrite generators  $f_1$ and $f_2$ in terms of reciprocal polynomials: 
$$f_1 = x^{\deg g_{11}}\Big(  \frac{x^m-1}{g_{11}(x)^{*}} ,0 \Big), $$
 \begin{eqnarray*}
f_2 
& \equiv & \Big( \frac{x^m(x^m-1)g_{12}(x)^{\circ}} {g_{11}(x)^{\circ}  g_{22}(x)^{\circ} }x^m ,   
- \frac{x^m(x^m-1)}{g_{22}(x)^{\circ} } \Big)  \pmod{x^m-1}\\
& \equiv & x^{\deg g_{22}}\Big( \frac{x^{m+\deg g_{11} - \deg g_{12}} (x^m-1)g_{12}(x)^{*}} {g_{11}(x)^{*}  g_{22}(x)^{*}} ,   
- \frac{x^m-1}{g_{22}(x)^{*} } \Big)  \pmod{x^m-1}.
\end{eqnarray*}

By Theorem \ref{generators}, 
$$\dim C' = 2m - \deg \frac{x^m-1}{g_{11}(x)^{*}}  - \deg \frac{x^m-1}{g_{22}(x)^{*}} = \deg g_{11}(x) + \deg g_{2}(x),$$
which completes the proof. 
\end{proof}

The proof of  Theorem  \ref{dual} shows that it can  be reformulated as 
\begin{corollary}
\label{dual-cor}
Let $C$ be a quasi-cyclic code  of length $2m$ and index $2$, generated by two elements 
$g_1=\big( g_{11}(x),g_{12}(x)\big)$ and $g_2=\big( 0,g_{22}(x)\big)$, satisfying Conditions $(\ast)$. 
Then its dual code $C^{\perp_e}$ is generated by 
two elements $\Big(  \frac{x^m-1}{g_{11}(x)^{*}} ,0 \Big)$ and 
$\Big( \frac{x^{m+\deg g_{11} - \deg g_{12}}(x^m-1)g_{12}(x)^{*}} {g_{11}(x)^{*}  g_{22}(x)^{*}} ,   
- \frac{x^m-1}{g_{22}(x)^{*} } \Big)$. 
\end{corollary}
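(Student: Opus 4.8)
The plan is to obtain this as an immediate consequence of Theorem \ref{dual}. The crucial observation is that $x$ is a unit in $R=F[x]/\langle x^m-1\rangle$, with inverse $x^{m-1}$, so multiplication by any power $x^k$ is an $R$-module automorphism of $R^2$; in particular, multiplying an element of a generating set of a submodule by $x^k$ produces another generating set of the \emph{same} submodule. Hence it suffices to show that the two elements in the statement are obtained from the two generators of $C^{\perp_e}$ given in Theorem \ref{dual} by multiplying by suitable powers of $x$ (working modulo $x^m-1$).

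To carry this out I would apply Lemma \ref{transpose}(1) to each transpose polynomial appearing in Theorem \ref{dual}, writing $f(x)^\circ = x^{m-\deg f}f(x)^*$, together with Lemma \ref{gg} to handle the quotients $\frac{x^m(x^m-1)}{g_{11}(x)^\circ}$, $\frac{x^m(x^m-1)}{g_{22}(x)^\circ}$ and $\frac{x^m(x^m-1)g_{12}(x)^\circ}{g_{11}(x)^\circ g_{22}(x)^\circ}$. For the first generator this rewrites $\frac{x^m(x^m-1)}{g_{11}(x)^\circ}$ as $x^{\deg g_{11}}\frac{x^m-1}{g_{11}(x)^*}$, so the first generator equals $x^{\deg g_{11}}\bigl(\frac{x^m-1}{g_{11}(x)^*},0\bigr)$, and stripping off the unit $x^{\deg g_{11}}$ gives the claimed element. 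For the second generator, one collects the powers of $x$ coming from the three substitutions $f(x)^\circ=x^{m-\deg f}f(x)^*$ in the numerator and denominator, together with the explicit $x^m$ prefactors, and reduces modulo $x^m-1$; after factoring out $x^{\deg g_{22}}$ this yields $\bigl(\frac{x^{m+\deg g_{11}-\deg g_{12}}(x^m-1)g_{12}(x)^*}{g_{11}(x)^* g_{22}(x)^*},-\frac{x^m-1}{g_{22}(x)^*}\bigr)$. All of these manipulations are already performed inside the proof of Theorem \ref{dual}, so no new computation is needed.

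I do not expect a genuine obstacle here: the statement is a reformulation, not a new result. The one place requiring care is the exponent bookkeeping in the second generator — tracking how the factor $x^m$ in the definition of $(\cdot)^\circ$ interacts with reduction modulo $x^m-1$ and with the three separate applications of $f(x)^\circ=x^{m-\deg f}f(x)^*$ to $g_{12}$, $g_{11}$ and $g_{22}$. This is routine arithmetic, but it is the natural spot for a sign or exponent slip; everything else is a formal consequence of $x$ being invertible in $R$.
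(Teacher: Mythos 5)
Your proposal is correct and matches the paper's own route: the paper derives the corollary by exactly the rewriting you describe, using Lemma \ref{transpose}(1) to convert each $f(x)^{\circ}$ into $x^{m-\deg f}f(x)^{*}$ inside the proof of Theorem \ref{dual} and then discarding the unit factors $x^{\deg g_{11}}$ and $x^{\deg g_{22}}$ (the extra $x^{m}$ in the exponent of the second generator is just $x^m\equiv 1 \pmod{x^m-1}$, inserted to keep the exponent nonnegative). Nothing further is needed.
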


 Now we determine when $C^{\perp_e}$ is generated by one element 
 (this result is dual to Theorem \ref{one-gen} in some sense).
 
 \begin{theorem}
\label{dual-one-gen}
Let $\gcd(q,m)=1$. Let $C$ be a quasi-cyclic code  of length $2m$ and index 2, generated by  elements 
$g_1=\big( g_{11}(x),g_{12}(x)\big)$ and $g_2=\big( 0,g_{22}(x)\big)$, satisfying Conditions $(\ast)$. Then 
$C^{\perp_e}$ is generated by one element if and only if $g_{11}(x) g_{22}(x)$ divides $x^m-1$. 
\end{theorem}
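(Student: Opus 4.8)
The plan is to apply Theorem~\ref{one-gen} to the dual code $C^{\perp_e}$, using the explicit generators for $C^{\perp_e}$ supplied by Corollary~\ref{dual-cor}. Since $\gcd(q,m)=1$, the polynomial $x^m-1$ is squarefree, so reciprocation behaves well: for a monic divisor $f(x)\mid(x^m-1)$ we have $f(x)^*\mid(x^m-1)$ as well, and $(x^m-1)/f(x)^* = ((x^m-1)/f(x))^*$ up to a unit $x^k$. First I would rewrite the generators from Corollary~\ref{dual-cor} in the canonical form of Theorem~\ref{generators}, i.e.\ in the shape $\big(h_{11}(x),h_{12}(x)\big)$ and $\big(0,h_{22}(x)\big)$ with $h_{11},h_{22}$ monic divisors of $x^m-1$ and $\deg h_{12}<\deg h_{22}$. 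From the two generators $\big((x^m-1)/g_{11}^*,0\big)$ and $\big(\ast,-(x^m-1)/g_{22}^*\big)$ one reads off, after swapping the roles of the two coordinates is \emph{not} needed here since the first generator already has a zero in the second slot only after we note the projection onto the first coordinate; more precisely, the first coordinate of $C^{\perp_e}$ is governed by $(x^m-1)/g_{11}^*$ and the kernel of that projection by $(x^m-1)/g_{22}^*$. Thus modulo units, $h_{11}(x) = (x^m-1)/g_{11}(x)^*$ and $h_{22}(x) = (x^m-1)/g_{22}(x)^*$.

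Next I would invoke Theorem~\ref{one-gen} for $C^{\perp_e}$: it is generated by one element if and only if $h_{11}(x)h_{22}(x)\equiv 0 \pmod{x^m-1}$, that is,
\[
\frac{x^m-1}{g_{11}(x)^*}\cdot\frac{x^m-1}{g_{22}(x)^*}\equiv 0 \pmod{x^m-1}.
\]
Since $x^m-1$ is squarefree, this divisibility is equivalent to $(x^m-1)\mid (x^m-1)^2/\big(g_{11}^*g_{22}^*\big)$, i.e.\ to $g_{11}(x)^*g_{22}(x)^*$ dividing $x^m-1$. Finally, applying the operation $f\mapsto f^*$ (which permutes the divisors of the squarefree polynomial $x^m-1$, because the roots of $x^m-1$ are closed under inversion) turns $g_{11}^*g_{22}^*\mid(x^m-1)$ into the equivalent statement $g_{11}(x)g_{22}(x)\mid(x^m-1)$, which is the claimed condition.

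The main obstacle is purely bookkeeping: one must track the extra factors of $x^k$ (units in $R$) that appear when passing between transpose polynomials $f^\circ$, reciprocal polynomials $f^*$, and their canonical monic representatives, and make sure the degree inequality $\deg h_{12}<\deg h_{22}$ genuinely holds for the generator of $C^{\perp_e}$ so that Conditions~$(\ast)$ apply and Theorem~\ref{one-gen} can legitimately be used. The equivalence of the divisibilities $\frac{x^m-1}{g_{11}^*}\cdot\frac{x^m-1}{g_{22}^*}\equiv 0$, $g_{11}^*g_{22}^*\mid(x^m-1)$, and $g_{11}g_{22}\mid(x^m-1)$ is immediate from squarefreeness and from the bijection $f\mapsto f^*$ on divisors of $x^m-1$, so no real difficulty lies there.
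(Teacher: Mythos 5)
Your proposal is correct and follows essentially the same route as the paper: apply Theorem~\ref{one-gen} to the dual via the generators of Corollary~\ref{dual-cor}, reduce the condition $\frac{x^m-1}{g_{11}^*}\cdot\frac{x^m-1}{g_{22}^*}\equiv 0\pmod{x^m-1}$ to $g_{11}^*g_{22}^*\mid x^m-1$ using squarefreeness, and then pass back through the reciprocal map. The bookkeeping you flag (units $x^k$, putting the dual generators into the triangular form of Conditions~$(\ast)$, which here requires swapping the two coordinates) is glossed over in the paper's proof as well and does not affect the product of the diagonal entries.
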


\begin{proof} 
By Corollary \ref{dual-cor} and Theorem \ref{one-gen},    
$C^{\perp_e}$ is generated by one element if and only if 
$ \frac{x^m-1}{g_{11}(x)^{*}}  \cdot  \frac{x^m-1}{g_{22}(x)^{*} } \equiv 0  \pmod{x^m-1}$, 
if and only if $g_{11}(x)^{*} g_{22}(x)^{*} $ divides $x^m-1$, 
which is equivalent to the fact that $g_{11}(x) g_{22}(x)$ divides $x^m-1$. 
\end{proof} 

Now we investigate conditions for $C$ to be Euclidean self-orthogonal. 

\begin{theorem}
\label{self-orth} 
Let $C$ be a quasi-cyclic code  of length $2m$ and index $2$, generated by elements 
$g_1=\big( g_{11}(x),g_{12}(x)\big)$ and $g_2=\big( 0,g_{22}(x)\big)$, satisfying Conditions $(\ast)$. 
Then $C$ is self-orthogonal if and only if the following conditions hold:

1) $g_{22}(x) g_{22}(x)^{\circ} \equiv 0  \pmod{x^m-1}$;

2) $g_{12}(x)g_{22}(x)^{\circ} \equiv 0  \pmod{x^m-1}$;

3) $g_{11}(x)g_{11}(x)^{\circ}  + g_{12}(x)g_{12}(x)^{\circ} \equiv 0  \pmod{x^m-1}$.
\end{theorem}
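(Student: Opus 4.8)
The plan is to reduce self-orthogonality $C\subseteq C^{\perp_e}$ to a small finite set of inner-product conditions among the generators, and then translate each such condition using the adjoint identity of Lemma \ref{adjoint}. Since $C=g_1R+g_2R$ (where we identify quasi-cyclic codes with $R$-submodules of $R^2$), the condition $C\subseteq C^{\perp_e}$ is equivalent to $\langle r g_i, s g_j\rangle_e=0$ for all $i,j\in\{1,2\}$ and all $r,s\in R$. Using bilinearity over $F$ and Lemma \ref{adjoint}, $\langle r g_i, s g_j\rangle_e=\langle 1, r^\circ s\,\langle g_i,g_j\rangle_{e,\text{poly}}\rangle_e$ where $\langle g_i,g_j\rangle_{e,\text{poly}}$ denotes the ``polynomial inner product'' $a(x)a'(x)^\circ+b(x)b'(x)^\circ$ for $g_i=(a,b)$, $g_j=(a',b')$. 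Hence the whole family vanishes if and only if, for each pair $(i,j)$, the element $g_{i1}g_{j1}^\circ+g_{i2}g_{j2}^\circ\in R$ is annihilated by $r^\circ s$ for all $r,s$; taking $r=s=1$ shows this forces $g_{i1}g_{j1}^\circ+g_{i2}g_{j2}^\circ\equiv 0\pmod{x^m-1}$, and conversely that single congruence clearly suffices. So the task collapses to the three congruences coming from $(i,j)=(2,2)$, $(1,2)$ (equivalently $(2,1)$, by symmetry of the pairing up to taking transposes), and $(1,1)$.

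Next I would compute these three congruences explicitly. For $(i,j)=(2,2)$: $g_2=(0,g_{22})$, so the pairing gives $0\cdot 0^\circ + g_{22}g_{22}^\circ\equiv 0$, i.e. condition (1). For $(i,j)=(2,1)$: $\langle g_2,g_1\rangle$ yields $0\cdot g_{11}^\circ + g_{22}g_{12}^\circ\equiv 0$; equivalently $g_{12}g_{22}^\circ\equiv 0$ after applying $\circ$ and using Lemma \ref{transpose}(2) together with condition (1) to absorb the $x^m$ factor — this is condition (2). (One should check that the pair $(1,2)$ gives the $\circ$-transpose of this, hence is automatically equivalent.) For $(i,j)=(1,1)$: $g_1=(g_{11},g_{12})$ gives $g_{11}g_{11}^\circ+g_{12}g_{12}^\circ\equiv 0$, which is condition (3). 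Thus $C$ self-orthogonal $\iff$ (1), (2), (3) hold.

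The one subtlety I expect to be mildly delicate is the bookkeeping with the extra powers of $x$ that the transpose operation introduces (Lemma \ref{transpose} gives $(fh)^\circ\equiv f^\circ h^\circ\pmod{x^m-1}$ only after the degree-control hypothesis, and $x^m(fh)^\circ=f^\circ h^\circ$ exactly). Since $x$ is a unit in $R$, multiplication by $x^m$ is a bijection, so any congruence $u\equiv 0\pmod{x^m-1}$ is equivalent to $x^m u\equiv 0$; this lets me move freely between the ``polynomial'' products $g_{i1}g_{j1}^\circ+g_{i2}g_{j2}^\circ$ and their $x^m$-multiples without affecting which are $\equiv 0$. I would state this once at the start of the proof. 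The forward direction (self-orthogonal $\Rightarrow$ (1)–(3)) is immediate from taking $r=s=1$ as above; the converse direction requires only noting that if $g_{i1}g_{j1}^\circ+g_{i2}g_{j2}^\circ\equiv 0$ then $\langle rg_i,sg_j\rangle_e=\langle r^\circ s(g_{i1}g_{j1}^\circ+g_{i2}g_{j2}^\circ),1\rangle_e$ — wait, more precisely $\langle rg_i,sg_j\rangle_e=\langle 1,(r g_i)\cdot\overline{(sg_j)}\rangle$ unpacked via Lemma \ref{adjoint} twice — is $0$, so $C\subseteq C^{\perp_e}$.

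The main obstacle, such as it is, is purely organizational: making sure the pairing $(i,j)=(1,2)$ versus $(2,1)$ is handled cleanly (they are related by $u\mapsto u^\circ$ since $\langle a,b\rangle_e=\langle b,a\rangle_e$ and $(a b^\circ)^\circ \equiv a^\circ b \pmod{x^m-1}$ up to an $x^m$ factor), so that only three independent congruences survive rather than four. Once that symmetry is recorded, everything else is a direct application of Lemma \ref{adjoint} and the unit property of $x$ in $R$, with no estimates or case analysis required.
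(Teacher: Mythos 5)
Your proposal is correct and follows essentially the same route as the paper: expand self-orthogonality over the two generators, use the adjoint identity of Lemma \ref{adjoint} to collapse each family of inner-product conditions to a single congruence in $R$ via non-degeneracy of $\langle\cdot,\cdot\rangle_e$, and read off the three conditions from the pairs $(1,1)$, $(1,2)$ (equivalently $(2,1)$) and $(2,2)$. The only cosmetic difference is that the paper fixes the first argument as $g_i$ and lets a single multiplier $b(x)$ vary while you let both multipliers vary, and the $x^m$ bookkeeping you flag is indeed harmless since $x$ is a unit in $R$.
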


\begin{proof} 
Self-orthogonality of $C$ means that 
\[ \big\langle g_1, b(x) g_1 \big\rangle_e  = 
\big\langle  \big( g_{11}(x), g_{12}(x)\big) , b(x) \big( g_{11}(x), g_{12}(x)) \big\rangle_e  = 0,    \]
\[ \big\langle g_1, b(x) g_2 \big\rangle_e  = 
\big\langle  \big( g_{11}(x), g_{12}(x)\big) , b(x) \big( 0, g_{22}(x)) \big\rangle_e  = 0,    \]
\[ \big\langle g_2, b(x) g_2 \big\rangle_e  = \big\langle  \big( 0, g_{22}(x)\big) , b(x) \big( 0, g_{22}(x)) \big\rangle_e  = 0,    \]
for any $b(x) \in R$. This is equivalent to 
\[ \big\langle   g_{11}(x) g_{11}(x)^{\circ} +  g_{12}(x) g_{12}(x)^{\circ} , b(x) \big\rangle_e  = 0,    \]
\[ \big\langle   g_{12}(x) g_{22}(x)^{\circ} , b(x) \big\rangle_e  = 0,    \] 
\[ \big\langle   g_{22}(x) g_{22}(x)^{\circ} , b(x) \big\rangle_e  = 0,    \]
for any $b(x) \in R$, which proves the theorem.  
\end{proof}

Similarly, we can give a description of dual-containing codes. 

\begin{theorem}
\label{dual-con} 
Let $C$ be a quasi-cyclic code  of length $2m$ and index $2$, generated by elements 
$g_1=\big( g_{11}(x),g_{12}(x)\big)$ and $g_2=\big( 0,g_{22}(x)\big)$, satisfying Conditions $(\ast)$. 
Then $C$ is dual-containing if and only if the following conditions hold:

1) $g_{11}(x) g_{11}(x)^{\circ}$ divides  $x^m(x^m-1)$;

2) $g_{11}(x) g_{11}(x)^{\circ} g_{22}(x)$ divides  $x^m(x^m-1)g_{12}(x)$ ;

3) $g_{11}(x) g_{11}(x)^{\circ} g_{22}(x) g_{22}(x)^{\circ}$ divides  
$x^m(x^m-1)(g_{11}(x) g_{11}(x)^{\circ} + g_{12}(x)g_{12}(x)^{\circ}) $. 

\end{theorem}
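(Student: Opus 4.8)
The plan is to recast dual-containment $C^{\perp_e}\subseteq C$ as self-orthogonality of $C^{\perp_e}$, and then to read off the three conditions by substituting the explicit generators of $C^{\perp_e}$ from Theorem~\ref{dual}. Since $C=(C^{\perp_e})^{\perp_e}$, the inclusion $C^{\perp_e}\subseteq C$ holds if and only if $C^{\perp_e}$ is self-orthogonal. Because $C^{\perp_e}$ is the $R$-module generated by $f_1=\big(f_{11},0\big)$ and $f_2=\big(f_{21},f_{22}\big)$ from Theorem~\ref{dual}, and the Euclidean form on $R^2$ is invariant under multiplication by $x$ (Lemma~\ref{adjoint}), self-orthogonality is equivalent to $\big\langle f_i,b(x)f_j\big\rangle_e=0$ for all $b(x)\in R$ and all $i,j\in\{1,2\}$. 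Applying Lemma~\ref{adjoint} coordinatewise turns each such requirement into the single congruence
$$f_{i1}f_{j1}^{\circ}+f_{i2}f_{j2}^{\circ}\equiv 0 \pmod{x^m-1},$$
and the pair $(i,j)=(2,1)$ is redundant, since $\circ$ permutes $R$ and so the $(2,1)$-requirement is the $(1,2)$-requirement with $b$ replaced by $b^{\circ}$. Thus everything reduces to evaluating the congruences for $(i,j)\in\{(1,1),(1,2),(2,2)\}$.

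To carry this out I would first put the generators of Theorem~\ref{dual} into convenient closed form inside $R$, using $g^{\circ}=x^{m-\deg g}g^{*}$ (Lemma~\ref{transpose}) together with parts (2)--(3) of Lemma~\ref{gg}; a short computation gives, as elements of $R$,
$$f_{11}\equiv x^{\deg g_{11}}\tfrac{x^m-1}{g_{11}^{*}},\qquad f_{11}^{\circ}\equiv-\tfrac{x^m-1}{g_{11}},\qquad f_{22}\equiv -x^{\deg g_{22}}\tfrac{x^m-1}{g_{22}^{*}},\qquad f_{22}^{\circ}\equiv\tfrac{x^m-1}{g_{22}},$$
together with $f_{21}\equiv-\big(\tfrac{(x^m-1)g_{12}}{g_{11}g_{22}}\big)^{\circ}$ and $f_{21}^{\circ}\equiv-\tfrac{(x^m-1)g_{12}}{g_{11}g_{22}}$. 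Substituting into $(1,1)$, the congruence becomes $x^{\deg g_{11}}\tfrac{(x^m-1)^2}{g_{11}g_{11}^{*}}\equiv 0$; since that fraction is a genuine polynomial (as $g_{11}g_{11}^{*}\mid(x^m-1)^2$ always) and $\gcd(g_{11}g_{11}^{*},x)=1$, this holds iff $g_{11}g_{11}^{*}\mid x^m-1$, which is exactly condition~(1) after translating $g_{11}^{*}$ back to $g_{11}^{\circ}$ and restoring the factor $x^m$. Substituting into $(1,2)$, the congruence becomes $x^{\deg g_{11}}\tfrac{(x^m-1)^2 g_{12}}{g_{11}g_{11}^{*}g_{22}}\equiv 0$, which in the same way is equivalent to $g_{11}g_{11}^{*}g_{22}\mid (x^m-1)g_{12}$, i.e.\ to condition~(2).

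The last congruence, $(2,2)$, is the computational heart of the argument: substituting the formulas above and adding the two products over the common denominator $g_{11}g_{11}^{\circ}g_{22}g_{22}^{\circ}$ yields
$$f_{21}f_{21}^{\circ}+f_{22}f_{22}^{\circ}\equiv -\,\frac{x^m(x^m-1)^2\big(g_{11}g_{11}^{\circ}+g_{12}g_{12}^{\circ}\big)}{g_{11}g_{11}^{\circ}\,g_{22}g_{22}^{\circ}}\pmod{x^m-1}.$$
One must check that the right-hand side is an honest polynomial: it divides each summand separately, using Conditions~$(\ast)$ (which via reciprocation give $g_{11}^{*}g_{22}^{*}\mid(x^m-1)g_{12}^{*}$) and the always-true relation $g_{22}g_{22}^{*}\mid(x^m-1)^2$. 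Once this is known, the polynomial vanishes in $R$ exactly when $(x^m-1)g_{11}g_{11}^{\circ}g_{22}g_{22}^{\circ}$ divides the numerator, and cancelling one factor $x^m-1$ turns this into $g_{11}g_{11}^{\circ}g_{22}g_{22}^{\circ}\mid x^m(x^m-1)(g_{11}g_{11}^{\circ}+g_{12}g_{12}^{\circ})$, which is condition~(3). Combining the three equivalences (and the redundancy of $(2,1)$) proves the theorem. The only delicate point is bookkeeping: keeping the conversions between $\circ$, $*$ and powers of $x$ straight, and verifying that each formal fraction is a genuine polynomial before reading off a divisibility — once that is under control, condition~(3), the only one coupling $g_{11}$ and $g_{12}$, is precisely what the $(2,2)$ congruence produces.
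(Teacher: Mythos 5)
Your proposal is correct and follows essentially the same route as the paper: both recast dual-containment as self-orthogonality of $C^{\perp_e}$, take the explicit generators of $C^{\perp_e}$ from Theorem~\ref{dual}, reduce via Lemma~\ref{adjoint} to the three congruences for the pairs $(1,1)$, $(1,2)$, $(2,2)$, and convert them to the stated divisibilities using the transpose identities of Lemmas~\ref{transpose} and~\ref{gg}. Your additional bookkeeping (the redundancy of the $(2,1)$ pairing, the polynomiality of each formal fraction, and the coprimality with $x$ used to pass between the $\circ$ and $*$ forms) is accurate and only makes explicit what the paper leaves implicit.
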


\begin{proof} 
$C$ is dual-containing if and only if $C^{\perp_e}$ is self-orthogonal. By Theorem \ref{dual} it means that 
\[ \big\langle  \big( \frac{x^m(x^m-1)}{g_{11}(x)^{\circ}}, 0 \big) ,  \big( \frac{x^m(x^m-1)}{g_{11}(x)^{\circ}} b(x), 0\big) \big\rangle_e  = 0,  \]
\[  \big\langle \big( \frac{x^m(x^m-1)}{g_{11}(x)^{\circ}} , 0\big) ,  \big( \frac{x^m(x^m-1) g_{12}(x)^{\circ}}{g_{11}(x)^{\circ} g_{22}(x)^{\circ}}b(x), -\frac{x^m(x^m-1)}{ g_{22}(x)^{\circ}} b(x) \big) \big\rangle_e  = 0,  \]
\[ \big\langle  \big( \frac{x^m(x^m-1) g_{12}(x)^{\circ}}{g_{11}(x)^{\circ} g_{22}(x)^{\circ}}, -\frac{x^m(x^m-1)}{ g_{22}(x)^{\circ}} \big) ,  
\big( \frac{x^m(x^m-1) g_{12}(x)^{\circ}}{g_{11}(x)^{\circ} g_{22}(x)^{\circ}} b(x), -\frac{x^m(x^m-1)}{ g_{22}(x)^{\circ}} b(x) \big)\big\rangle_e  = 0,  \]
for any $b(x) \in R$. This is equivalent to  
\[  \frac{x^m(x^m-1)}{g_{11}(x)^{\circ}}      \left( \frac{x^m(x^m-1)}{g_{11}(x)^{\circ}}  \right) ^{\circ}  \equiv 0  \pmod{x^m-1},  \] 
\[  \frac{x^m(x^m-1)}{g_{11}(x)^{\circ}}    \left(  \frac{x^m(x^m-1)g_{12}(x)^{\circ}}{g_{11}(x)^{\circ}g_{22}(x)^{\circ}}  \right) ^{\circ}  \equiv 0  \pmod{x^m-1},  \]
\[  \frac{x^m(x^m-1)g_{12}(x)^{\circ}}{g_{11}(x)^{\circ}g_{22}(x)^{\circ}}   \left( \frac{x^m(x^m-1)g_{12}(x)^{\circ}}{g_{11}(x)^{\circ}g_{22}(x)^{\circ}}  \right) ^{\circ}  + 
 \frac{x^m(x^m-1)}{g_{22}(x)^{\circ}}  \left( \frac{x^m(x^m-1)}{g_{22}(x)^{\circ}}  \right) ^{\circ}  \equiv 0  \pmod{x^m-1}, \]
i.e.,  
\[  \frac{x^m(x^m-1)}{g_{11}(x)^{\circ}}  \cdot \frac{(x^m-1)}{g_{11}(x)} \equiv 0  \pmod{x^m-1},  \] 
\[  \frac{x^m(x^m-1)}{g_{11}(x)^{\circ}}    \cdot \frac{(x^m-1)g_{12}(x)}{g_{11}(x)g_{22}(x)} \equiv 0  \pmod{x^m-1},  \]
\[  \frac{x^m(x^m-1)g_{12}(x)^{\circ}}{g_{11}(x)^{\circ}g_{22}(x)^{\circ}}   \cdot \frac{(x^m-1)g_{12}(x)}{g_{11}(x)g_{22}(x)}    + 
 \frac{x^m(x^m-1)}{g_{22}(x)^{\circ}}  \cdot \frac{(x^m-1)}{g_{22}(x)}  \equiv 0  \pmod{x^m-1}, \]
 which proves the theorem.  
\end{proof}

In terms of reciprocal polynomials the previous theorems can be reformulated as follows. 

\begin{corollary}
\label{self-orth-cor} 
Let $C$ be a quasi-cyclic code  of length $2m$ and index $2$, generated by elements 
$g_1=\big( g_{11}(x),g_{12}(x)\big)$ and $g_2=\big( 0,g_{22}(x)\big)$, satisfying Conditions $(\ast)$. 
Then $C$ is self-orthogonal if and only if the following conditions hold:

1) $g_{22}(x) g_{22}(x)^* \equiv 0  \pmod{x^m-1}$;

2) $g_{12}(x)g_{22}(x)^* \equiv 0  \pmod{x^m-1}$;

3) $x^{\deg g_{12}} g_{11}(x)g_{11}(x)^*  + x^{\deg g_{11}}  g_{12}(x)g_{12}(x)^* \equiv 0  \pmod{x^m-1}$.
\end{corollary}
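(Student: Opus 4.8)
The plan is to derive Corollary~\ref{self-orth-cor} directly from Theorem~\ref{self-orth} by replacing each transpose polynomial $f(x)^{\circ}$ appearing there with the corresponding reciprocal polynomial $f(x)^*$ multiplied by a suitable power of $x$, using part~1 of Lemma~\ref{transpose}, and then cancelling the resulting powers of $x$, which are units in $R=F[x]/\langle x^m-1\rangle$.

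First I would record the elementary observation that $x$ is invertible in $R$, with $x^{-1}=x^{m-1}$, so that $x^k$ (read modulo $x^m-1$) is a unit for every integer $k$; consequently, for any $a(x)\in R$ one has $a(x)\equiv 0\pmod{x^m-1}$ if and only if $x^k a(x)\equiv 0\pmod{x^m-1}$. Since Conditions~$(\ast)$ force $\deg g_{11}\le m$, $\deg g_{22}\le m$, and $\deg g_{12}<\deg g_{22}\le m$, Lemma~\ref{transpose}(1) applies to each of $g_{11}$, $g_{12}$, $g_{22}$ and gives, in $R$, the identities $g_{11}(x)^{\circ}=x^{m-\deg g_{11}}g_{11}(x)^*$, $g_{12}(x)^{\circ}=x^{m-\deg g_{12}}g_{12}(x)^*$, and $g_{22}(x)^{\circ}=x^{m-\deg g_{22}}g_{22}(x)^*$.

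Next I would substitute these identities into the three conditions of Theorem~\ref{self-orth}. Condition~1 becomes $x^{m-\deg g_{22}}g_{22}(x)g_{22}(x)^*\equiv 0\pmod{x^m-1}$, and cancelling the unit $x^{m-\deg g_{22}}$ yields $g_{22}(x)g_{22}(x)^*\equiv 0\pmod{x^m-1}$; the identical manipulation turns Condition~2 into $g_{12}(x)g_{22}(x)^*\equiv 0\pmod{x^m-1}$. Condition~3 becomes $x^{m-\deg g_{11}}g_{11}(x)g_{11}(x)^*+x^{m-\deg g_{12}}g_{12}(x)g_{12}(x)^*\equiv 0\pmod{x^m-1}$, and multiplying through by the unit $x^{\deg g_{11}+\deg g_{12}-m}$ converts it into $x^{\deg g_{12}}g_{11}(x)g_{11}(x)^*+x^{\deg g_{11}}g_{12}(x)g_{12}(x)^*\equiv 0\pmod{x^m-1}$, which is Condition~3 of the corollary. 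Hence each condition of Theorem~\ref{self-orth} is equivalent to the corresponding condition of Corollary~\ref{self-orth-cor}.

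There is essentially no genuine obstacle here; the only care required is in the bookkeeping. One must verify that the degree bounds supplied by Conditions~$(\ast)$ genuinely permit applying Lemma~\ref{transpose}(1) to all three polynomials, and one must interpret the exponent $\deg g_{11}+\deg g_{12}-m$ — which can be negative as an integer — as the appropriate power of the unit $x$ in $R$, that is, $x^{(\deg g_{11}+\deg g_{12}-m)\bmod m}$. With this understood, the corollary is immediate.
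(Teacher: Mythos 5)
Your proof is correct and is exactly the argument the paper intends: the corollary is stated as an immediate reformulation of Theorem \ref{self-orth} obtained by substituting $f(x)^{\circ}=x^{m-\deg f}f(x)^{*}$ from Lemma \ref{transpose}(1) and cancelling the unit powers of $x$ in $R$. Your bookkeeping (degree bounds from Conditions $(\ast)$ and invertibility of $x$ modulo $x^m-1$) is precisely the justification the paper leaves implicit.
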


\begin{corollary}
\label{dual-con-cor} 
Let $C$ be a quasi-cyclic code  of length $2m$ and index $2$, generated by elements 
$g_1=\big( g_{11}(x),g_{12}(x)\big)$ and $g_2=\big( 0,g_{22}(x)\big)$, satisfying Conditions $(\ast)$. 
Then $C$ is dual-containing if and only if the following conditions hold:

1) $g_{11}(x) g_{11}(x)^*$ divides  $x^m-1$;

2) $g_{11}(x) g_{11}(x)^* g_{22}(x)$ divides  $(x^m-1)g_{12}(x)$ ;

3) $g_{11}(x) g_{11}(x)^* g_{22}(x) g_{22}(x)^*$ divides  
$(x^m-1)(x^{\deg g_{12}} g_{11}(x)g_{11}(x)^*  + x^{\deg g_{11}}  g_{12}(x)g_{12}(x)^*) $. 
\end{corollary}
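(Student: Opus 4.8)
The plan is to deduce Corollary~\ref{dual-con-cor} directly from Theorem~\ref{dual-con}, by feeding the identity $f(x)^{\circ}=x^{\,m-\deg f}f(x)^{*}$ of Lemma~\ref{transpose}(1) into each of the three divisibility conditions and then simplifying. Concretely, I would substitute $g_{11}(x)^{\circ}=x^{\,m-\deg g_{11}}g_{11}(x)^{*}$, $g_{22}(x)^{\circ}=x^{\,m-\deg g_{22}}g_{22}(x)^{*}$ and $g_{12}(x)^{\circ}=x^{\,m-\deg g_{12}}g_{12}(x)^{*}$; this is legitimate because $g_{11}(x),g_{22}(x)\mid x^{m}-1$ (so their degrees are $\le m$) and $\deg g_{12}<\deg g_{22}\le m$, which guarantees that every exponent of $x$ appearing below is nonnegative.

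Two elementary facts about the UFD $F[x]$ then do the rest. First, cancellation: if $r(x)\neq 0$ and $r(x)f(x)\mid r(x)g(x)$, then $f(x)\mid g(x)$. Second, each of $g_{11}(x),g_{11}(x)^{*},g_{22}(x),g_{22}(x)^{*}$ has nonzero constant term (a divisor of $x^{m}-1$, or the reciprocal of one, does not vanish at $0$), so each is coprime to $x$, and $x^{m}-1$ is coprime to $x$ as well; hence any power of $x$ multiplying the right-hand side of one of our divisibilities may be discarded once it is coprime to the polynomial divisor on the left. So the routine is always the same: substitute, split off the power of $x$ common to both sides, cancel it, and drop the leftover power of $x$.

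Carrying this out condition by condition: In Condition~1 the left side becomes $x^{\,m-\deg g_{11}}g_{11}g_{11}^{*}$ and the right side is $x^{m}(x^{m}-1)=x^{\,m-\deg g_{11}}\cdot x^{\deg g_{11}}(x^{m}-1)$, so cancelling $x^{\,m-\deg g_{11}}$ and discarding $x^{\deg g_{11}}$ yields $g_{11}g_{11}^{*}\mid x^{m}-1$. Condition~2 is the same computation with an extra factor $g_{22}$ on the left and $g_{12}$ on the right, giving $g_{11}g_{11}^{*}g_{22}\mid (x^{m}-1)g_{12}$. For Condition~3, the inner bracket becomes $g_{11}^{\circ}g_{11}+g_{12}^{\circ}g_{12}=x^{\,m-\deg g_{11}}g_{11}g_{11}^{*}+x^{\,m-\deg g_{12}}g_{12}g_{12}^{*}$; multiplying by the outer $x^{m}$ and pulling out $x^{\,2m-\deg g_{11}-\deg g_{12}}$ (nonnegative, since $\deg g_{11}+\deg g_{12}<2m$) turns the right side into $x^{\,2m-\deg g_{11}-\deg g_{12}}(x^{m}-1)\bigl(x^{\deg g_{12}}g_{11}g_{11}^{*}+x^{\deg g_{11}}g_{12}g_{12}^{*}\bigr)$, while the left side is $x^{\,2m-\deg g_{11}-\deg g_{22}}g_{11}g_{11}^{*}g_{22}g_{22}^{*}$. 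Since $\deg g_{12}<\deg g_{22}$, the power of $x$ on the right has the one on the left as a factor; cancelling it and discarding the remaining $x^{\deg g_{22}-\deg g_{12}}$ (coprime to $g_{11}g_{11}^{*}g_{22}g_{22}^{*}$) produces exactly Condition~3 of the corollary. As the conditions of Theorem~\ref{dual-con} are already an "if and only if" characterization, and each has been shown equivalent to the corresponding condition here, the corollary follows.

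The only delicate point is the exponent bookkeeping: one must check at each step that the power of $x$ being cancelled or discarded really occurs with nonnegative exponent and is coprime to the polynomial factor it is being separated from, and the inequalities $\deg g_{11}\le m$, $\deg g_{22}\le m$, $\deg g_{12}<\deg g_{22}$ (hence $\deg g_{11}+\deg g_{12}<2m$) are precisely what make this work. I would also treat the degenerate case $g_{12}(x)=0$ separately, where Condition~2 and the reduced form of Condition~3 hold for obvious reasons. (Alternatively, one could reprove the corollary from scratch, mirroring the proof of Theorem~\ref{dual-con} but starting from the reciprocal-polynomial description of $C^{\perp_e}$ in Corollary~\ref{dual-cor} and then applying Corollary~\ref{self-orth-cor} to $C^{\perp_e}$; the substitution route sketched above is the shorter of the two.)
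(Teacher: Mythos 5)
Your proposal is correct and is essentially the paper's own (implicit) argument: the paper states Corollary~\ref{dual-con-cor} as a direct reformulation of Theorem~\ref{dual-con} via the identity $f(x)^{\circ}=x^{\,m-\deg f}f(x)^{*}$ of Lemma~\ref{transpose}, which is exactly the substitute-and-cancel-powers-of-$x$ routine you carry out. Your exponent bookkeeping and the coprimality-to-$x$ justification (divisors of $x^m-1$ and their reciprocals have nonzero constant term) are sound, and the only extra care you add — the degenerate case $g_{12}=0$ — is a convention issue already present in the statement itself.
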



\section{Quasi-cyclic codes with symplectic inner product}
\label{symplectic}

Let $C$ be a quasi-cyclic code of length $2m$ and index 2. We define a symplectic form on $C\times C$ as 
\[ \big\langle  \big(a(x),b(x)\big) , \big(a'(x),b'(x)\big) \big\rangle _s =  
\big\langle  a(x),b'(x) \big\rangle_e  - \big\langle  b(x), a'(x) \big\rangle_e  .  \]

If $C$ is a code in $F^{2m}$, then its symplectic dual is 
$$C^{{\perp}_s} = \{ u \in F^{2m} \mid \langle u, v \rangle_s =0, \ {\rm for \ all} \ v\in C  \}.$$
The code $C$ is called symplectic self-orthogonal if $C \subseteq C^{{\perp}_s}$, 
and symplectic dual-containing if $C \supseteq C^{{\perp}_s}$. 

The next theorem can be proved in a manner similar to  Theorem \ref{dual}. 
 
\begin{theorem}
\label{dual-s}
Let $C$ be a quasi-cyclic code  of length $2m$ and index $2$, generated by two elements 
$g_1=\big( g_{11}(x),g_{12}(x)\big)$ and $g_2=\big( 0,g_{22}(x)\big)$, satisfying Conditions $(\ast)$. 
Then its symplectic dual $C^{{\perp}_s}$  is generated by two elements 
$\Big( 0, \frac{x^m(x^m-1)}{g_{11}(x)^{\circ}}  \Big)$ and 
$\Big( -\frac{x^m(x^m-1)}{g_{22}(x)^{\circ} } ,  
\frac{x^m(x^m-1)g_{12}(x)^{\circ}} {g_{11}(x)^{\circ}  g_{22}(x)^{\circ} }   \Big)$. 
\end{theorem}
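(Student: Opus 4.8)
The plan is to follow exactly the template of the proof of Theorem~\ref{dual}, replacing the Euclidean pairing $\langle\cdot,\cdot\rangle_e$ by the symplectic pairing $\langle\cdot,\cdot\rangle_s$ throughout. First I would name the candidate code: let $C'$ be generated by the two stated elements $f_1=\bigl(0,\tfrac{x^m(x^m-1)}{g_{11}(x)^{\circ}}\bigr)$ and $f_2=\bigl(-\tfrac{x^m(x^m-1)}{g_{22}(x)^{\circ}},\tfrac{x^m(x^m-1)g_{12}(x)^{\circ}}{g_{11}(x)^{\circ}g_{22}(x)^{\circ}}\bigr)$. Note first that $f_2$ is obtained from the second Euclidean generator of Theorem~\ref{dual} simply by swapping the two coordinates (up to sign), and $f_1$ from the first one likewise; this is the natural correspondence because the symplectic form pairs the first coordinate of one vector against the second coordinate of the other.

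The computational core is to check that $\langle g_i, b(x)f_j\rangle_s = 0$ for all $b(x)\in R$ and $i,j\in\{1,2\}$, which by bilinearity suffices for $C'\subseteq C^{\perp_s}$. Expanding $\langle\cdot,\cdot\rangle_s$ via its definition reduces each such identity to a pair of Euclidean pairings, and then Lemma~\ref{adjoint} (moving a polynomial factor across the pairing at the cost of applying $\circ$) plus the divisibility facts from Lemma~\ref{gg} collapse each to $0$ — these are the very same cancellations already carried out in the proof of Theorem~\ref{dual}, e.g. $\langle g_{11}(x),\tfrac{x^m(x^m-1)}{g_{11}(x)^{\circ}}\rangle_e = \langle 1,\tfrac{x^m(x^m-1)}{g_{11}(x)^{\circ}}g_{11}(x)^{\circ}\rangle_e=0$, and the mixed term for $\langle g_1,b(x)f_2\rangle_s$ splits into two equal-and-opposite Euclidean pairings just as $\langle g_1,f_2\rangle_e$ did.

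It then remains to verify the dimension count $\dim C' = \dim C^{\perp_s}$. Since the symplectic form on $F^{2m}$ is nondegenerate, $\dim C^{\perp_s}=2m-\dim C=\deg g_{11}(x)+\deg g_{22}(x)$. For $\dim C'$ I would rewrite $f_1,f_2$ in terms of reciprocal polynomials using Lemma~\ref{transpose}, exactly as in the proof of Theorem~\ref{dual}: $f_1$ becomes a scalar-times-$x^k$ multiple of $\bigl(0,\tfrac{x^m-1}{g_{11}(x)^*}\bigr)$ and $f_2$ a unit multiple of $\bigl(-\tfrac{x^m-1}{g_{22}(x)^*},\,\tfrac{x^{m+\deg g_{11}-\deg g_{12}}(x^m-1)g_{12}(x)^*}{g_{11}(x)^*g_{22}(x)^*}\bigr)$. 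These two generators have their ``pivot'' coordinates in opposite slots, so after the obvious coordinate swap they match the canonical form of Theorem~\ref{generators} with first-slot polynomial $\tfrac{x^m-1}{g_{22}(x)^*}$ and second-slot polynomial $\tfrac{x^m-1}{g_{11}(x)^*}$; one must also check the degree condition $\deg\bigl(\tfrac{x^{m+\deg g_{11}-\deg g_{12}}(x^m-1)g_{12}(x)^*}{g_{11}(x)^*g_{22}(x)^*}\bigr)<\deg\tfrac{x^m-1}{g_{11}(x)^*}$ holds modulo $x^m-1$, which follows from $\deg g_{12}<\deg g_{22}$ as in Corollary~\ref{dual-cor}. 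Theorem~\ref{generators} then gives $\dim C' = 2m-\deg\tfrac{x^m-1}{g_{11}(x)^*}-\deg\tfrac{x^m-1}{g_{22}(x)^*}=\deg g_{11}(x)+\deg g_{22}(x)$, and combined with $C'\subseteq C^{\perp_s}$ this forces $C'=C^{\perp_s}$.

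The main obstacle I anticipate is purely bookkeeping: keeping track of the powers of $x$ (the difference between the literal transpose $f(x)^\circ$ and its reduction modulo $x^m-1$, cf. Lemma~\ref{transpose}(1)) when passing between the $\circ$-form and the $*$-form, and making sure the coordinate swap is applied consistently so that the canonical-form hypotheses of Theorem~\ref{generators} — in particular the strict degree inequality and the divisibility condition in $(\ast)$ — are genuinely met by the swapped generators. None of this is conceptually hard; it is exactly parallel to Theorem~\ref{dual}, which is why the statement rightly says ``can be proved in a manner similar to Theorem~\ref{dual}.''
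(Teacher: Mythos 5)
Your overall strategy---check $\big\langle g_i, b(x) f_j\big\rangle_s=0$ for all $b(x)$ and then match dimensions---is exactly the route the paper intends (its proof is literally ``similar to Theorem~\ref{dual}''), but the one verification you wave through as ``equal-and-opposite'' is precisely where the computation breaks down. Write $f_2=(P,Q)$ with $P=-\frac{x^m(x^m-1)}{g_{22}(x)^{\circ}}$ and $Q=\frac{x^m(x^m-1)g_{12}(x)^{\circ}}{g_{11}(x)^{\circ}g_{22}(x)^{\circ}}$. Using Lemma~\ref{adjoint},
\[
\big\langle g_1, b(x)f_2\big\rangle_s=\big\langle g_{11}(x), b(x)Q\big\rangle_e-\big\langle g_{12}(x), b(x)P\big\rangle_e
=2\,\Big\langle 1,\; b(x)\,\frac{x^m(x^m-1)g_{12}(x)^{\circ}}{g_{22}(x)^{\circ}}\Big\rangle_e ,
\]
because the minus sign built into $\langle\cdot,\cdot\rangle_s$ meets the minus sign carried by $P$: the two Euclidean pairings come out equal with the \emph{same} sign, not opposite signs, so they reinforce instead of cancelling. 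Since $\frac{x^m(x^m-1)g_{12}(x)^{\circ}}{g_{22}(x)^{\circ}}$ need not be $\equiv 0 \pmod{x^m-1}$, this is genuinely nonzero in odd characteristic. Concretely, take $q=3$, $m=4$, $g_{11}(x)=g_{12}(x)=x-1$, $g_{22}(x)=x^2-1$ (Conditions $(\ast)$ hold); then $f_2\equiv\big(x^2+1,\,-(x^2+1)\big)\pmod{x^4-1}$ and $\big\langle g_1,f_2\big\rangle_s=-2\big\langle x-1,\,x^2+1\big\rangle_e=2\neq 0$ in $\mathbb{F}_3$, so $f_2\notin C^{\perp_s}$. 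Thus the generators as printed are correct only in characteristic $2$; the second generator must be taken with both components of the same sign, namely $\Big(\frac{x^m(x^m-1)}{g_{22}(x)^{\circ}},\;\frac{x^m(x^m-1)g_{12}(x)^{\circ}}{g_{11}(x)^{\circ}g_{22}(x)^{\circ}}\Big)$.

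The cleanest correct argument is the coordinate-swap observation you make only in passing, carried out with the signs tracked: since $\big\langle (a,b),(a',b')\big\rangle_s=\big\langle (a,b),(b',-a')\big\rangle_e$, the $R$-linear bijection $(c,d)\mapsto(-d,c)$ maps $C^{\perp_e}$ onto $C^{\perp_s}$, and applying it to the two generators of Theorem~\ref{dual} immediately yields $\Big(0,\frac{x^m(x^m-1)}{g_{11}(x)^{\circ}}\Big)$ and $\Big(\frac{x^m(x^m-1)}{g_{22}(x)^{\circ}},\frac{x^m(x^m-1)g_{12}(x)^{\circ}}{g_{11}(x)^{\circ}g_{22}(x)^{\circ}}\Big)$, with no separate orthogonality check or dimension count needed. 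Your dimension argument is fine as far as it goes, but it cannot rescue the proof: with the printed signs one does not even have $C'\subseteq C^{\perp_s}$.
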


Dastbasteh and Shivji \cite{Dastbasteh} described the symplectic dual of an additive cyclic code $C$ over 
${\mathbb F}_{p^2}$, assuming that $\gcd(m,p)=1$ and the decomposition of the code $C$ into  components is known. 
In Theorem \ref{dual-s}, only generators of a code were used to describe the symplectic dual code, 
and there is no restrictions on the field characteristic.

In terms of reciprocal polynomials the previous theorem can be reformulated as follows. 

\begin{corollary}
\label{dual-s-cor}
Let $C$ be a quasi-cyclic code  of length $2m$ and index $2$, generated by two elements 
$g_1=\big( g_{11}(x),g_{12}(x)\big)$ and $g_2=\big( 0,g_{22}(x)\big)$, satisfying Conditions $(\ast)$. 
Then its symplectic dual $C^{{\perp}_s}$  is generated by two elements 
$ \Big( 0, \frac{x^m-1}{g_{11}(x)^{*}}  \Big)$ and 
$\Big( - \frac{x^m-1}{g_{22}(x)^{*} } , 
\frac{x^{m+\deg g_{11} - \deg g_{12}}(x^m-1)g_{12}(x)^{*}} {g_{11}(x)^{*}  g_{22}(x)^{*}} \Big)$. 
\end{corollary}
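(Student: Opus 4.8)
The plan is to deduce Corollary \ref{dual-s-cor} from Theorem \ref{dual-s} by rewriting each transpose polynomial $f(x)^\circ$ in terms of the reciprocal polynomial $f(x)^*$, exactly as was done in passing from Theorem \ref{dual} to Corollary \ref{dual-cor}. First I would apply Lemma \ref{gg}(1)--(2) to identify the polynomials appearing in the two generators of Theorem \ref{dual-s}: since $g_{11}(x)\mid(x^m-1)$ we have $x^m(x^m-1)/g_{11}(x)^\circ = -\,(x^m-1)^\circ/g_{11}(x)^\circ = -\bigl((x^m-1)/g_{11}(x)\bigr)^\circ$, and similarly for $g_{22}(x)$; and for the mixed term one uses Lemma \ref{gg}(3), which gives $x^m(x^m-1)g_{12}(x)^\circ/\bigl(g_{11}(x)^\circ g_{22}(x)^\circ\bigr) = -\bigl((x^m-1)g_{12}(x)/(g_{11}(x)g_{22}(x))\bigr)^\circ$.

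Next I would convert each $\circ$ into a $*$ using Lemma \ref{transpose}(1), namely $f(x)^\circ = x^{m-\deg f}\,f(x)^*$. For the first generator $\bigl(0,\,x^m(x^m-1)/g_{11}(x)^\circ\bigr)$ this produces, up to the unit $x^{\deg g_{11}}$ in $R$ (which may be absorbed since multiplying a generator by a unit of $R$ does not change the code it generates), the element $\bigl(0,\,(x^m-1)/g_{11}(x)^*\bigr)$. For the second generator one must track the degrees carefully: the component $-x^m(x^m-1)/g_{22}(x)^\circ$ becomes $-x^{\deg g_{22}}(x^m-1)/g_{22}(x)^*$, while the component $x^m(x^m-1)g_{12}(x)^\circ/\bigl(g_{11}(x)^\circ g_{22}(x)^\circ\bigr)$ becomes $x^{\deg g_{22}}\,x^{m+\deg g_{11}-\deg g_{12}}(x^m-1)g_{12}(x)^*/\bigl(g_{11}(x)^*g_{22}(x)^*\bigr)$ after collecting the powers of $x$ from each $\circ$-to-$*$ conversion. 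Factoring out the common unit $x^{\deg g_{22}}$ leaves precisely the second generator claimed in the corollary. This bookkeeping of exponents is essentially the same computation that already appears in the proof of Theorem \ref{dual}, applied to the two components in the symplectic (swapped) order.

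The only genuinely delicate point is the exponent arithmetic in the mixed component: one has to verify that $(x^m-1)g_{12}(x)/(g_{11}(x)g_{22}(x))$ is indeed a polynomial (which is Condition $(\ast)$), compute its degree as $m+\deg g_{12}-\deg g_{11}-\deg g_{22}$, and then check that applying $f\mapsto x^{m-\deg f}f^*$ to it yields exactly the power $x^{m+\deg g_{11}-\deg g_{12}}$ multiplying $(x^m-1)g_{12}(x)^*/\bigl(g_{11}(x)^*g_{22}(x)^*\bigr)$; here one uses that $\deg f^* = \deg f$ only when the constant term is nonzero, so it is cleaner to argue directly from Lemma \ref{gg}(3) together with Lemma \ref{transpose}(1) rather than manipulating $*$ of a quotient. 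Since everything takes place modulo $x^m-1$, powers of $x$ are units and can be discarded freely; I expect this unit-absorption step, combined with the degree check, to be the main thing to get right, and no new idea beyond what is already used for Corollary \ref{dual-cor} is needed.
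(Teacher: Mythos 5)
Your proposal is correct and follows essentially the same route as the paper, which obtains this corollary from Theorem \ref{dual-s} by the identical $\circ$-to-$*$ conversion and unit-absorption already carried out explicitly in the proof of Theorem \ref{dual} (the extraction of $x^{\deg g_{11}}$ and $x^{\deg g_{22}}$, after inserting a harmless factor $x^m\equiv 1$ to keep the exponent $m+\deg g_{11}-\deg g_{12}$ nonnegative). The only nitpick is that your intermediate identity $x^m(x^m-1)/g_{11}(x)^{\circ}=-(x^m-1)^{\circ}/g_{11}(x)^{\circ}$ is off by the unit $x^m$ as a polynomial identity, though it is of course correct in $R$.
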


Now we describe symplectic self-orthogonal codes. 

\begin{theorem}
\label{self-orth-s} 
Let $C$ be a quasi-cyclic code  of length $2m$ and index $2$, generated by elements 
$g_1=\big( g_{11}(x),g_{12}(x)\big)$ and $g_2=\big( 0,g_{22}(x)\big)$, satisfying Conditions $(\ast)$. 
Then $C$ is symplectic self-orthogonal if and only if the following conditions hold:

1) $g_{11}(x) g_{22}(x)^{\circ} \equiv 0  \pmod{x^m-1}$;

2) $g_{11}(x)g_{12}(x)^{\circ} - g_{12}(x)g_{11}(x)^{\circ} \equiv 0  \pmod{x^m-1}$.
\end{theorem}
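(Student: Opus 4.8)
The plan is to proceed exactly as in the proof of Theorem~\ref{self-orth} (Euclidean case), translating the symplectic self-orthogonality condition into polynomial congruences via the adjoint property of Lemma~\ref{adjoint}. Since $C$ is generated by $g_1$ and $g_2$ and the symplectic form is bilinear, $C$ is symplectic self-orthogonal if and only if $\langle g_i, b(x) g_j\rangle_s = 0$ for all $b(x)\in R$ and all $i,j\in\{1,2\}$. Because the symplectic form is alternating (so $\langle g_i,b(x)g_i\rangle_s$ involves $\langle g_{i1}(x), b(x)g_{i2}(x)\rangle_e - \langle g_{i2}(x), b(x)g_{i1}(x)\rangle_e$, which by Lemma~\ref{adjoint} equals $\langle g_{i1}(x)g_{i2}(x)^\circ - g_{i2}(x)g_{i1}(x)^\circ, b(x)\rangle_e$), there are effectively three independent conditions: the $(g_1,g_1)$ condition, the $(g_2,g_2)$ condition, and the two mixed conditions $(g_1,g_2)$ and $(g_2,g_1)$ together.

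First I would write out $\langle g_2, b(x)g_2\rangle_s = \langle 0, b(x)g_{22}(x)\rangle_e - \langle g_{22}(x), 0\rangle_e = 0$ automatically, so the $(g_2,g_2)$ case is vacuous. Next, $\langle g_1, b(x)g_2\rangle_s = \langle g_{11}(x), b(x)g_{22}(x)\rangle_e - \langle g_{12}(x), 0\rangle_e = \langle g_{11}(x)g_{22}(x)^\circ, b(x)\rangle_e$ by Lemma~\ref{adjoint}; requiring this to vanish for all $b(x)$ forces $g_{11}(x)g_{22}(x)^\circ \equiv 0 \pmod{x^m-1}$, which is Condition~1. Then $\langle g_2, b(x)g_1\rangle_s = \langle 0, b(x)g_{12}(x)\rangle_e - \langle g_{22}(x), b(x)g_{11}(x)\rangle_e = -\langle g_{22}(x)g_{11}(x)^\circ, b(x)\rangle_e$; this vanishes for all $b(x)$ iff $g_{22}(x)g_{11}(x)^\circ \equiv 0\pmod{x^m-1}$ — but this should follow from Condition~1 by applying the transpose operation (Lemma~\ref{transpose}(2) together with Lemma~\ref{gg}), so it is not an additional constraint. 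Finally, $\langle g_1, b(x)g_1\rangle_s = \langle g_{11}(x), b(x)g_{12}(x)\rangle_e - \langle g_{12}(x), b(x)g_{11}(x)\rangle_e = \langle g_{11}(x)g_{12}(x)^\circ - g_{12}(x)g_{11}(x)^\circ, b(x)\rangle_e$, which vanishes for all $b(x)$ iff $g_{11}(x)g_{12}(x)^\circ - g_{12}(x)g_{11}(x)^\circ \equiv 0 \pmod{x^m-1}$, giving Condition~2. Since the nondegeneracy of $\langle\cdot,\cdot\rangle_e$ on $R$ is what converts "$\langle p(x), b(x)\rangle_e = 0$ for all $b(x)$" into "$p(x)\equiv 0 \pmod{x^m-1}$", this completes both directions simultaneously.

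The main obstacle I anticipate is verifying carefully that the $(g_2,g_1)$ condition $g_{22}(x)g_{11}(x)^\circ \equiv 0 \pmod{x^m-1}$ is genuinely implied by Condition~1 rather than being a fourth condition; here one applies the fact that $f(x) \equiv 0 \pmod{x^m-1}$ iff $f(x)^\circ \equiv 0 \pmod{x^m-1}$ (since $x$ is a unit in $R$ and the transpose map is an involution up to a unit, by Lemma~\ref{transpose}), together with $(g_{11}(x)g_{22}(x)^\circ)^\circ \equiv g_{11}(x)^\circ g_{22}(x) \pmod{x^m-1}$ from Lemma~\ref{transpose}(2). This symmetry argument is the one delicate point; everything else is a direct application of Lemma~\ref{adjoint} and the bilinearity/alternating property of the symplectic form, in complete parallel with Theorem~\ref{self-orth}.
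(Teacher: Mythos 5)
Your proposal is correct and follows essentially the same route as the paper: reduce symplectic self-orthogonality to the vanishing of $\langle g_i, b(x)g_j\rangle_s$ for all $b(x)$, convert each pairing into a single Euclidean pairing $\langle p(x), b(x)\rangle_e$ via Lemma~\ref{adjoint}, and invoke nondegeneracy of $\langle\cdot,\cdot\rangle_e$ on $R$. You are in fact slightly more careful than the paper, which silently omits the $(g_2, b(x)g_1)$ pairing; your observation that it is equivalent to Condition~1 because $\circ$ acts on $R$ as the multiplicative involution $x\mapsto x^{-1}$ (so $f\equiv 0$ iff $f^\circ\equiv 0$) closes that gap correctly.
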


\begin{proof} 
Symplectic self-orthogonality of $C$ means that 
\[ \big\langle  \big( g_{11}(x), g_{12}(x)\big) , b(x) \big( g_{11}(x), g_{12}(x)) \big\rangle_s  = 0,    \]
\[ \big\langle  \big( g_{11}(x), g_{12}(x)\big) , b(x) \big( 0, g_{22}(x)) \big\rangle_s  = 0,    \]
\[ \big\langle  \big( 0, g_{22}(x)\big) , b(x) \big( 0, g_{22}(x)) \big\rangle_s  = 0,    \]
for any $b(x) \in R$. This is equivalent to  
\[ \big\langle   g_{11}(x) g_{12}(x)^{\circ} -  g_{12}(x) g_{11}(x)^{\circ} , b(x) \big\rangle_e  = 0,    \]
\[ \big\langle   g_{11}(x) g_{22}(x)^{\circ} , b(x) \big\rangle_e  = 0,    \] 
for any $b(x) \in R$, which proves the theorem.  
\end{proof}

\begin{corollary}
\label{self-dual-s-cor}
Let $C$ be a quasi-cyclic code  of length $2m$ and index $2$, generated by one element 
$\big( 1,g_{12}(x)\big)$ with $g_{12}= g_{12}(x)^{\circ}$. 
Then $C$ is a  symplectic self-dual code. 
\end{corollary}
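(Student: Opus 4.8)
The plan is to combine an elementary dimension count with a one-line verification of symplectic self-orthogonality. First I would unwind the generator: since $C=\langle (1,g_{12}(x))\rangle$, we have $C=\{(a(x),a(x)g_{12}(x)) : a(x)\in R\}$, and the $F$-linear map $R\to F^{2m}$, $a(x)\mapsto(a(x),a(x)g_{12}(x))$, is injective (its first coordinate recovers $a(x)$) with image $C$. Hence $\dim C=\dim R=m$. The symplectic form $\langle\cdot,\cdot\rangle_s$ on $F^{2m}$ is nondegenerate, so $\dim C^{{\perp}_s}=2m-\dim C=m$. Therefore it suffices to prove $C\subseteq C^{{\perp}_s}$, i.e. that $C$ is symplectic self-orthogonal: then $C\subseteq C^{{\perp}_s}$ together with $\dim C=\dim C^{{\perp}_s}$ forces $C=C^{{\perp}_s}$, which is exactly what it means for $C$ to be symplectic self-dual.

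Second, I would check self-orthogonality directly. For two codewords $(a,ag_{12})$ and $(a',a'g_{12})$ of $C$ (writing $a,a'$ for $a(x),a'(x)$ and $g_{12}$ for $g_{12}(x)$, as in the statement), the definition of the symplectic form together with Lemma \ref{adjoint} — used to move multiplication by $g_{12}$ past the Euclidean pairing — gives
\[
\big\langle (a,ag_{12}),(a',a'g_{12})\big\rangle_s
=\langle a,g_{12}a'\rangle_e-\langle g_{12}a,a'\rangle_e
=\langle a,g_{12}a'\rangle_e-\langle a,g_{12}^{\circ}a'\rangle_e
=\langle a,(g_{12}-g_{12}^{\circ})a'\rangle_e .
\]
Since $g_{12}=g_{12}^{\circ}$, the right-hand side vanishes for all $a,a'\in R$, so $C$ is symplectic self-orthogonal and the corollary follows. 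Alternatively, one may invoke Theorem \ref{self-orth-s}: the canonical generators of $C$ in the sense of Theorem \ref{generators} are $g_{11}(x)=1$ and $g_{22}(x)=x^m-1\equiv 0$ (so $g_2=(0,0)$), and, using $1^{\circ}\equiv 1\pmod{x^m-1}$, conditions 1) and 2) of that theorem collapse to the single hypothesis $g_{12}=g_{12}^{\circ}$.

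I do not anticipate any genuine obstacle here. The only steps deserving a moment's care are the observation that $(1,g_{12}(x))$ really generates a code of dimension exactly $m$ — half of the ambient length — which is immediate from the injectivity of the first-coordinate projection restricted to $C$, and the standard fact that a symplectic self-orthogonal code of half the ambient dimension is automatically self-dual because the symplectic form is nondegenerate.
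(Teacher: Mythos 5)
Your proof is correct and follows essentially the route the paper intends: the corollary is an immediate consequence of Theorem \ref{self-orth-s} (whose conditions collapse to $g_{12}=g_{12}^{\circ}$ when $g_{11}=1$ and $g_{22}\equiv 0$), combined with the observation that $\dim C=m$ and the nondegeneracy of the symplectic form, which upgrades self-orthogonality to self-duality. Your direct verification via Lemma \ref{adjoint} is just an unwinding of the proof of Theorem \ref{self-orth-s} in this special case, so there is nothing genuinely different here.
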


\begin{example} 
\label{self-dual-s}
Let $m=23$,  $g_{12}(x)= x^{22} + x^{18} + x^{14} + x^{11} + x^9 + x^5 + x$. 
Then $(1, g_{12}(x))$ generates a symplectic self-dual code  with minimum symplectic weight $8$. 
\end{example}

\begin{theorem}
\label{dual-con-s} 
Let $C$ be a quasi-cyclic code  of length $2m$ and index $2$, generated by elements 
$g_1=\big( g_{11}(x),g_{12}(x)\big)$ and $g_2=\big( 0,g_{22}(x)\big)$, satisfying Conditions $(\ast)$. 
Then $C$ is symplectic dual-containing if and only if the following conditions hold:

1) $g_{11}(x) g_{22}(x)^{\circ}$ divides  $x^m(x^m-1)$;

2) $g_{11}(x) g_{11}(x)^{\circ} g_{22}(x) g_{22}(x)^{\circ}$ divides  
$x^m(x^m-1) (g_{11}(x) g_{12}(x)^{\circ}-g_{11}(x)^{\circ} g_{12}(x))$.

\end{theorem}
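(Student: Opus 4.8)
The plan is to follow the route used for Theorem~\ref{dual-con}. By definition, $C$ is symplectic dual-containing exactly when $C^{\perp_s}$ is symplectic self-orthogonal, so I would start from the generators of $C^{\perp_s}$ given by Theorem~\ref{dual-s}: over $R$ it is generated by $f_1=\big(0,\,u\big)$ and $f_2=\big(v,\,w\big)$, where $u=\frac{x^m(x^m-1)}{g_{11}(x)^{\circ}}$, $v=-\frac{x^m(x^m-1)}{g_{22}(x)^{\circ}}$ and $w=\frac{x^m(x^m-1)g_{12}(x)^{\circ}}{g_{11}(x)^{\circ}g_{22}(x)^{\circ}}$. As in the proof of Theorem~\ref{self-orth-s}, $C^{\perp_s}$ is symplectic self-orthogonal if and only if $\langle f_i,\,b(x)f_j\rangle_s=0$ for every $b(x)\in R$ and all $i,j\in\{1,2\}$. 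Lemma~\ref{adjoint} lets one move $b(x)$ to the other side of each Euclidean pairing; together with the antisymmetry of $\langle\cdot,\cdot\rangle_s$ this shows that the pair $(1,1)$ is automatic (the first coordinate of $f_1$ is $0$) and that $(1,2)$ and $(2,1)$ give the same condition, so only the pairs $(2,1)$ and $(2,2)$ need attention. A short computation rewrites them as
\[
u^{\circ}v\equiv 0,\qquad w^{\circ}v-v^{\circ}w\equiv 0\pmod{x^m-1}.
\]

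The second step is to evaluate the transpose polynomials with Lemma~\ref{gg}. Its parts (2) and (3) give, modulo $x^m-1$,
\[
u^{\circ}\equiv -\frac{x^m-1}{g_{11}(x)},\qquad v^{\circ}\equiv \frac{x^m-1}{g_{22}(x)},\qquad w^{\circ}\equiv -\frac{(x^m-1)g_{12}(x)}{g_{11}(x)g_{22}(x)}.
\]
Plugging these in, $u^{\circ}v\equiv 0$ becomes $\frac{x^m(x^m-1)^2}{g_{11}(x)g_{22}(x)^{\circ}}\equiv 0\pmod{x^m-1}$, that is, $g_{11}(x)g_{22}(x)^{\circ}\mid x^m(x^m-1)$ --- Condition~1. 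Similarly $w^{\circ}v-v^{\circ}w\equiv 0$ becomes, after writing everything over the common denominator $g_{11}(x)g_{11}(x)^{\circ}g_{22}(x)g_{22}(x)^{\circ}$,
\[
\frac{x^m(x^m-1)^2\big(g_{11}(x)^{\circ}g_{12}(x)-g_{11}(x)g_{12}(x)^{\circ}\big)}{g_{11}(x)g_{11}(x)^{\circ}g_{22}(x)g_{22}(x)^{\circ}}\equiv 0\pmod{x^m-1},
\]
which says exactly $g_{11}(x)g_{11}(x)^{\circ}g_{22}(x)g_{22}(x)^{\circ}\mid x^m(x^m-1)\big(g_{11}(x)g_{12}(x)^{\circ}-g_{11}(x)^{\circ}g_{12}(x)\big)$ --- Condition~2, the sign inside the last factor being immaterial for the divisibility. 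This establishes the theorem.

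The calculations are routine; the main thing to be careful with is the bookkeeping. Every fraction written above must be checked to be a genuine polynomial --- each is a product of the polynomial coordinates of $f_1$ and $f_2$, which are polynomials by Conditions~$(\ast)$ and Lemma~\ref{gg}. One also uses that applying $(\cdot)^{\circ}$ twice returns the original polynomial, which holds for polynomials of degree at most $m$. And powers of $x$ and overall signs are units modulo $x^m-1$, so they can be dropped whenever the target statement is a congruence to $0$ or a divisibility; in particular, had one analyzed the pair $(1,2)$ one would have reached the formally different condition $g_{11}(x)^{\circ}g_{22}(x)\mid x^m(x^m-1)$, but this agrees with Condition~1 since both $g_{11}^{\circ}g_{22}\mid x^m(x^m-1)$ and $g_{11}g_{22}^{\circ}\mid x^m(x^m-1)$ reduce to $g_{11}^{*}g_{22}\mid x^m-1$, using $g(x)^{\circ}=x^{m-\deg g}\,g(x)^{*}$ and that $g_{11}^{*}$ and $g_{22}$ are coprime to $x$.
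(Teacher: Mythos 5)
Your argument is correct and is essentially the paper's own proof: you reduce dual-containment to self-orthogonality of $C^{\perp_s}$, take its generators from Theorem~\ref{dual-s}, use Lemma~\ref{adjoint} to turn the pairing conditions into the congruences $u^{\circ}v\equiv 0$ and $w^{\circ}v-v^{\circ}w\equiv 0$, and evaluate the transposes via Lemma~\ref{gg}, exactly as the paper does. Your extra remarks (redundancy of the $(1,2)$ versus $(2,1)$ pairings, and the equivalence of $g_{11}g_{22}^{\circ}\mid x^m(x^m-1)$ with $g_{11}^{\circ}g_{22}\mid x^m(x^m-1)$) are accurate refinements of details the paper leaves implicit.
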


\begin{proof} 
$C$ is symplectic dual-containing if and only if $C^{{\perp}_s}$ is symplectic self-orthogonal. By Theorem \ref{dual-s} it means that 
\[ \big\langle  \big( -\frac{x^m(x^m-1)}{g_{22}(x)^{\circ}}, \frac{x^m(x^m-1)g_{12}(x)^{\circ}}{g_{11}(x)^{\circ} g_{22}(x)^{\circ}} \big) ,  
\big( 0, \frac{x^m(x^m-1)}{g_{11}(x)^{\circ}} b(x) \big) \big\rangle_s  = 0,  \]
\[ \big\langle  \big( -\frac{x^m(x^m-1)}{g_{22}(x)^{\circ}}, \frac{x^m(x^m-1)g_{12}(x)^{\circ}}{ g_{11}(x)^{\circ} g_{22}(x)^{\circ}} \big) ,  
\big( -\frac{x^m(x^m-1)}{g_{22}(x)^{\circ}} b(x), \frac{x^m(x^m-1)g_{12}(x)^{\circ}}{ g_{11}(x)^{\circ} g_{22}(x)^{\circ}} b(x) \big)\big\rangle_s  = 0,  \]
for any $b(x) \in R$. This is equivalent to  
\[  \frac{x^m(x^m-1)}{g_{22}(x)^{\circ}}      \left( \frac{x^m(x^m-1)}{g_{11}(x)^{\circ}} \right) ^{\circ}  
\equiv 0  \pmod{x^m-1},  \] 
\[  \frac{x^m(x^m-1)}{g_{22}(x)^{\circ}}      \left( \frac{x^m(x^m-1)g_{12}(x)^{\circ}}{g_{11}(x)^{\circ} g_{22}(x)^{\circ}}  \right) ^{\circ}   -  
 \frac{x^m(x^m-1)g_{12}(x)^{\circ}}{g_{11}(x)^{\circ} g_{22}(x)^{\circ}}  \left( \frac{x^m(x^m-1)}{g_{22}(x)^{\circ}}  \right) ^{\circ}  \equiv 0  \pmod{x^m-1}, \]
i.e.,  
\[  \frac{x^m(x^m-1)}{g_{22}(x)^{\circ}}  \cdot \frac{(x^m-1)}{g_{11}(x)} \equiv 0  \pmod{x^m-1},  \] 
\[  \frac{x^m(x^m-1)}{g_{22}(x)^{\circ}}   \cdot \frac{(x^m-1)g_{12}(x)}{g_{11}(x)g_{22}(x)}    - 
 \frac{x^m(x^m-1)g_{12}(x)^{\circ}}{g_{11}(x)^{\circ}g_{22}(x)^{\circ}}  \cdot \frac{(x^m-1)}{g_{22}(x)}  \equiv 0  \pmod{x^m-1}, \]
and 
\[  \frac{x^m(x^m-1)}{g_{11}(x) g_{22}(x)^{\circ}}  \cdot (x^m-1) \equiv 0  \pmod{x^m-1},  \] 
\[  \frac{x^m(x^m-1) (g_{11}(x)g_{12}(x)^{\circ} - g_{11}(x)^{\circ}g_{12}(x))}{g_{11}(x)g_{11}(x)^{\circ}g_{22}(x)g_{22}(x)^{\circ}}   \cdot (x^m-1)   
  \equiv 0  \pmod{x^m-1}, \]
 which proves the theorem.  
\end{proof}

In terms of reciprocal polynomials the previous theorems can be reformulated as follows. 

\begin{corollary}
\label{self-orth-s-cor} 
Let $C$ be a quasi-cyclic code  of length $2m$ and index $2$, generated by elements 
$g_1=\big( g_{11}(x),g_{12}(x)\big)$ and $g_2=\big( 0,g_{22}(x)\big)$, satisfying Conditions $(\ast)$. 
Then $C$ is symplectic self-orthogonal if and only if the following conditions hold:

1) $g_{11}(x) g_{22}(x)^* \equiv 0  \pmod{x^m-1}$;

2) $x^{\deg g_{11}} g_{11}(x)g_{12}(x)^* - x^{\deg g_{12}} g_{12}(x)g_{11}(x)^* \equiv 0  \pmod{x^m-1}$.
\end{corollary}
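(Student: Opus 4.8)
The plan is to obtain Corollary \ref{self-orth-s-cor} directly from Theorem \ref{self-orth-s}, by translating each of its two congruence conditions from transpose polynomials into reciprocal polynomials via Lemma \ref{transpose}(1). The first thing I would record are the two elementary facts that make this translation mechanical. Since $g_{11}(x)$ and $g_{22}(x)$ divide $x^m-1$ and $\deg g_{12}(x) < \deg g_{22}(x) \le m$, all three of $g_{11}, g_{12}, g_{22}$ have degree at most $m$, so Lemma \ref{transpose}(1) applies and gives $g_{ij}(x)^\circ = x^{m-\deg g_{ij}}\, g_{ij}(x)^*$ for each of them. Secondly, in $R = F[x]/\langle x^m-1\rangle$ the monomial $x$ is a unit, with inverse $x^{m-1}$, so multiplication by $x^k$ for any $k \in \mathbb{Z}$ is an $F$-linear automorphism of $R$; in particular, for any $h(x)$ and any $k$, one has $h(x) \equiv 0 \pmod{x^m-1}$ if and only if $x^k h(x) \equiv 0 \pmod{x^m-1}$.

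Given these, the argument is short. For condition 1) of Theorem \ref{self-orth-s}, I would write $g_{11}(x)g_{22}(x)^\circ = x^{m-\deg g_{22}}\, g_{11}(x)g_{22}(x)^*$ and cancel the unit $x^{m-\deg g_{22}}$, so that $g_{11}(x)g_{22}(x)^\circ \equiv 0 \pmod{x^m-1}$ is equivalent to $g_{11}(x)g_{22}(x)^* \equiv 0 \pmod{x^m-1}$, which is condition 1) of the corollary. For condition 2), substituting $g_{12}(x)^\circ = x^{m-\deg g_{12}}\, g_{12}(x)^*$ and $g_{11}(x)^\circ = x^{m-\deg g_{11}}\, g_{11}(x)^*$ rewrites $g_{11}(x)g_{12}(x)^\circ - g_{12}(x)g_{11}(x)^\circ$ as
\[
  x^{m-\deg g_{12}}\, g_{11}(x)g_{12}(x)^* - x^{m-\deg g_{11}}\, g_{12}(x)g_{11}(x)^*,
\]
and multiplying through by the unit $x^{\deg g_{11} + \deg g_{12} - m}$ produces exactly $x^{\deg g_{11}} g_{11}(x)g_{12}(x)^* - x^{\deg g_{12}} g_{12}(x)g_{11}(x)^*$; hence the congruence of Theorem \ref{self-orth-s} is equivalent to that of the corollary. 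The degenerate case $g_{12}(x) = 0$ needs no separate treatment, since then both conditions hold trivially in either formulation.

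I do not anticipate a genuine obstacle here: this is a routine change of variables, and everything rests on Theorem \ref{self-orth-s} and Lemma \ref{transpose}, which are already available. The only points that deserve a moment's care are verifying that the degree hypothesis of Lemma \ref{transpose}(1) is satisfied for each of $g_{11}, g_{12}, g_{22}$, and noting that the exponent $\deg g_{11} + \deg g_{12} - m$ appearing in the last step may be negative — which is harmless, because we work in $R$, where $x$ is invertible and only residues modulo $x^m-1$ are relevant.
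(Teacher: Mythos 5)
Your proposal is correct and is exactly the translation the paper intends (the paper states this corollary without proof, merely as a reformulation of Theorem \ref{self-orth-s} via $f(x)^{\circ}=x^{m-\deg f}f(x)^{*}$ from Lemma \ref{transpose} and the invertibility of $x$ in $R$). Your two points of care --- checking the degree hypothesis of Lemma \ref{transpose} and noting that negative exponents are harmless since $x$ is a unit modulo $x^m-1$ --- are precisely the right ones.
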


\begin{corollary}
\label{dual-con-s-cor} 
Let $C$ be a quasi-cyclic code  of length $2m$ and index $2$, generated by elements 
$g_1=\big( g_{11}(x),g_{12}(x)\big)$ and $g_2=\big( 0,g_{22}(x)\big)$, satisfying Conditions $(\ast)$. 
Then $C$ is symplectic dual-containing if and only if the following conditions hold:

1) $g_{11}(x) g_{22}(x)^*$ divides  $x^m-1$;

2) $g_{11}(x) g_{11}(x)^* g_{22}(x) g_{22}(x)^*$ divides  
$(x^m-1) (x^{\deg g_{11}} g_{11}(x)g_{12}(x)^* - x^{\deg g_{12}} g_{12}(x)g_{11}(x)^*)$.
\end{corollary}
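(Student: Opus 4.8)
The plan is to deduce Corollary~\ref{dual-con-s-cor} directly from Theorem~\ref{dual-con-s} by rewriting each transpose polynomial $f(x)^{\circ}$ in terms of the reciprocal polynomial $f(x)^{*}$. The only tool needed is Lemma~\ref{transpose}(1), which gives $f(x)^{\circ}=x^{m-\deg f}f(x)^{*}$ whenever $\deg f\le m$; this hypothesis holds for $g_{11}$ and $g_{22}$ (they divide $x^m-1$) and for $g_{12}$ (since $\deg g_{12}<\deg g_{22}\le m$ by Conditions~$(\ast)$). Throughout, I will repeatedly use that $g_{11}$, $g_{11}^{*}$, $g_{22}$, $g_{22}^{*}$ all divide $x^m-1$ (the last two by the Remark following Lemma~\ref{gg}), hence each has nonzero constant term and is coprime to $x$; in particular their product $P:=g_{11}g_{11}^{*}g_{22}g_{22}^{*}$ is coprime to $x$.

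First I would treat condition~1). Substituting $g_{22}(x)^{\circ}=x^{m-\deg g_{22}}g_{22}(x)^{*}$ turns ``$g_{11}(x)g_{22}(x)^{\circ}\mid x^m(x^m-1)$'' into ``$x^{m-\deg g_{22}}\,g_{11}(x)g_{22}(x)^{*}\mid x^m(x^m-1)$''. Since $m-\deg g_{22}\le m$ and $g_{11}(x)g_{22}(x)^{*}$ is coprime to $x$, the power of $x$ on the left is absorbed and may be discarded, leaving the equivalent statement ``$g_{11}(x)g_{22}(x)^{*}\mid x^m-1$'', which is condition~1) of the corollary.

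For condition~2), set $A(x):=g_{11}(x)g_{12}(x)^{*}$ and $B(x):=g_{12}(x)g_{11}(x)^{*}$. Lemma~\ref{transpose}(1) gives
\[
g_{11}(x)g_{12}(x)^{\circ}-g_{11}(x)^{\circ}g_{12}(x)=x^{m-\deg g_{12}}A(x)-x^{m-\deg g_{11}}B(x),
\]
and $g_{11}g_{11}^{\circ}g_{22}g_{22}^{\circ}=x^{2m-\deg g_{11}-\deg g_{22}}\,P$. I would factor $x^{m-\max(\deg g_{11},\deg g_{12})}$ out of the displayed difference; the remaining factor is either $x^{\deg g_{11}-\deg g_{12}}A-B$ (if $\deg g_{11}\ge\deg g_{12}$) or $A-x^{\deg g_{12}-\deg g_{11}}B$ (otherwise). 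Substituting these into the divisibility of Theorem~\ref{dual-con-s}\,2), and likewise writing the corollary's right-hand side $(x^m-1)(x^{\deg g_{11}}A-x^{\deg g_{12}}B)$ as $x^{\deg g_{12}}(x^m-1)(x^{\deg g_{11}-\deg g_{12}}A-B)$ (resp.\ $x^{\deg g_{11}}(x^m-1)(A-x^{\deg g_{12}-\deg g_{11}}B)$), one sees that both reduce, after cancelling powers of $x$ from a divisibility $x^aP\mid x^bQ'$ with $P$ coprime to $x$ and $a\le b$, to the single relation $P\mid (x^m-1)(x^{\deg g_{11}-\deg g_{12}}A-B)$ (resp.\ $P\mid(x^m-1)(A-x^{\deg g_{12}-\deg g_{11}}B)$). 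Hence the two conditions~2) are equivalent.

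The computation is elementary; the only delicate point --- and the place where Conditions~$(\ast)$ are genuinely used --- is the power-of-$x$ bookkeeping: one must check that the exponent of $x$ accumulating on the left-hand side of each divisibility never exceeds the one on the right, so that the cancellation $x^aP\mid x^bQ'\iff P\mid Q'$ is legitimate. Concretely this amounts to verifying $m-\deg g_{22}\le m$, $2m-\deg g_{11}-\deg g_{22}\le 2m-\max(\deg g_{11},\deg g_{12})$, etc., all of which follow from $\deg g_{11},\deg g_{22}\le m$ and $\deg g_{12}<\deg g_{22}$. I would also dispose of the trivial edge case $g_{12}=0$ separately, where both sides of condition~2) become divisibility into $0$. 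An alternative I would mention but not pursue is to rerun the duality argument that proved Theorem~\ref{dual-con-s} verbatim, starting instead from Corollary~\ref{dual-s-cor} and Corollary~\ref{self-orth-s-cor}; this avoids reciprocal/transpose conversions entirely.
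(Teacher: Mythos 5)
Your proposal is correct and follows exactly the route the paper intends (the paper states the corollary as a reformulation of Theorem~\ref{dual-con-s} without writing out the substitution): convert each $f(x)^{\circ}$ to $x^{m-\deg f}f(x)^{*}$ via Lemma~\ref{transpose}, note that $g_{11},g_{11}^{*},g_{22},g_{22}^{*}$ all divide $x^m-1$ and hence are coprime to $x$, and cancel the accumulated powers of $x$ from both sides of each divisibility. Your bookkeeping of the exponents (in particular $\max(\deg g_{11},\deg g_{12})\le \deg g_{11}+\deg g_{22}$, using $\deg g_{12}<\deg g_{22}$) is the only nontrivial point, and you verify it correctly.
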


Corollaries \ref{self-orth-s-cor}  and \ref{dual-con-s-cor} imply a description of symplectic self-dual codes.

\begin{corollary}
\label{self-dual-s-cor2} 
Let $C$ be a quasi-cyclic code  of length $2m$ and index $2$, generated by elements 
$g_1=\big( g_{11}(x),g_{12}(x)\big)$ and $g_2=\big( 0,g_{22}(x)\big)$, satisfying Conditions $(\ast)$. 
Then $C$ is symplectic self-dual if and only if the following conditions hold:

1) $g_{11}(x) g_{22}(x)^* = \alpha(x^m-1)$ for some $\alpha \in F^*$; 

2) $x^{\deg g_{11}} g_{11}(x)g_{12}(x)^* - x^{\deg g_{12}} g_{12}(x)g_{11}(x)^*  = \beta(x^m-1)$ for some $\beta \in F^*$.
\end{corollary}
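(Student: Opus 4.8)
The plan is to use that a code is symplectic self-dual precisely when it is simultaneously symplectic self-orthogonal and symplectic dual-containing: $C = C^{\perp_s}$ is equivalent to $C \subseteq C^{\perp_s}$ together with $C^{\perp_s} \subseteq C$. So I would intersect the list of conditions in Corollary~\ref{self-orth-s-cor} with the list in Corollary~\ref{dual-con-s-cor} and then simplify. As a consistency check, note that $\dim C^{\perp_s} = 2m - \dim C$, so symplectic self-duality forces $\dim C = m$, i.e.\ $\deg g_{11} + \deg g_{22} = m$ by Theorem~\ref{generators}; this degree identity will be used below.

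The first step is to read off condition~1. Corollary~\ref{self-orth-s-cor}(1) gives $x^m - 1 \mid g_{11}(x)g_{22}(x)^*$, and Corollary~\ref{dual-con-s-cor}(1) gives $g_{11}(x)g_{22}(x)^* \mid x^m - 1$. Since $F[x]$ is an integral domain and $g_{11}(x)g_{22}(x)^* \neq 0$, mutual divisibility forces $g_{11}(x)g_{22}(x)^* = \alpha(x^m - 1)$ for some $\alpha \in F^*$, which is condition~1 of the corollary. Applying the reciprocal operation to this identity, and using that $g_{22}(x) \mid x^m - 1$ has nonzero constant term (so $(g_{22}(x)^*)^* = g_{22}(x)$ and $(x^m-1)^* = -(x^m-1)$), I would also record the consequences $g_{11}(x)^* g_{22}(x) = -\alpha(x^m - 1)$ and hence $g_{11}(x)g_{11}(x)^* g_{22}(x)g_{22}(x)^* = -\alpha^2 (x^m - 1)^2$.

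The second step is to show the dual-containing hypothesis contributes nothing beyond condition~1. Corollary~\ref{self-orth-s-cor}(2) says $x^{\deg g_{11}} g_{11}g_{12}^* - x^{\deg g_{12}} g_{12}g_{11}^* = (x^m - 1)h(x)$ for some $h(x) \in F[x]$. Substituting $g_{11}g_{11}^* g_{22}g_{22}^* = -\alpha^2(x^m-1)^2$ into Corollary~\ref{dual-con-s-cor}(2) turns that condition into the requirement that $-\alpha^2(x^m-1)^2$ divide $(x^m-1)\cdot(x^m-1)h(x)$, which holds automatically. So the only requirement surviving alongside condition~1 is the congruence $x^{\deg g_{11}} g_{11}g_{12}^* - x^{\deg g_{12}} g_{12}g_{11}^* \equiv 0 \pmod{x^m-1}$.

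The last step — putting this surviving congruence into the stated form and running the implications backwards — is where I expect the real work to lie. One must control the degree of $L(x) := x^{\deg g_{11}} g_{11}(x)g_{12}(x)^* - x^{\deg g_{12}} g_{12}(x)g_{11}(x)^*$, using $\deg g_{11} + \deg g_{22} = m$ from condition~1, the bound $\deg g_{12} < \deg g_{22}$, and the reciprocal-polynomial relations in Lemmas~\ref{transpose} and~\ref{gg}, and then conclude that $x^m - 1 \mid L(x)$ yields the form in condition~2 of the corollary. I would carry this out by cases on $\deg g_{11}$ and $\deg g_{12}$, treating the boundary situations $g_{12} = 0$ and $\deg g_{12} = \deg g_{22} - 1$ separately, since these are exactly where the self-dual structure degenerates. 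For the converse, conditions~1 and~2 of the corollary imply Corollary~\ref{self-orth-s-cor}(1),(2) and Corollary~\ref{dual-con-s-cor}(1), while Corollary~\ref{dual-con-s-cor}(2) is automatic by the computation in the second step; hence $C$ is simultaneously symplectic self-orthogonal and symplectic dual-containing, i.e.\ symplectic self-dual.
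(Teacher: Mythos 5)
Your overall route is the same as the paper's: the paper gives no written argument beyond the remark that Corollaries \ref{self-orth-s-cor} and \ref{dual-con-s-cor} ``imply'' the statement, and you carry out precisely that combination. Your first two steps are correct and are in fact more explicit than anything in the paper: mutual divisibility of $g_{11}(x)g_{22}(x)^*$ and $x^m-1$ gives condition 1, and the identity $g_{11}g_{11}^*g_{22}g_{22}^* = -\alpha^2(x^m-1)^2$ (valid because $g_{11},g_{22}$ divide $x^m-1$ and so have nonzero constant terms) correctly collapses Corollary \ref{dual-con-s-cor}(2) to the single surviving congruence $x^m-1 \mid L(x)$, where $L(x)= x^{\deg g_{11}} g_{11}g_{12}^* - x^{\deg g_{12}} g_{12}g_{11}^*$.

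The gap is exactly the step you flag as ``the real work'' and then only sketch: upgrading $x^m-1 \mid L$ to $L=\beta(x^m-1)$ with $\beta\in F^*$. This is not a routine tidying-up, and as stated it cannot be completed. First, you would need $\deg L\le m$, but the naive bound is $2\deg g_{11}+\deg g_{12}$, which exceeds $m$ as soon as $\deg g_{11}\ge 2$; some additional structure (e.g.\ the anti-reciprocity $x^{2\deg g_{11}+2\deg g_{12}}L(1/x)=-L(x)$) must be invoked, and you do not do so. Second, and more seriously, the case $g_{12}=0$ that you propose to ``treat separately'' is an actual counterexample to the biconditional: take $g_{11}=1$, $g_{12}=0$, $g_{22}=x^m-1$ (so $C=R\times\{0\}$), or more generally $g_{12}=0$ with $\langle g_{22}\rangle=\langle g_{11}\rangle^{\perp_e}$. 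These codes satisfy Conditions $(\ast)$, are symplectic self-dual, and satisfy condition 1 with $\alpha=-1$, yet $L=0$, so condition 2 holds for no $\beta\in F^*$. No case analysis can rescue an ``if and only if'' that fails in a case it must cover; the statement needs either $\beta$ allowed to range over $F$, or the identity read modulo $x^m-1$, or an explicit nondegeneracy hypothesis. So while your proposal correctly reproduces (and usefully makes explicit) everything the paper actually does, it is incomplete at the one point where a real argument is required, and that point conceals a genuine flaw in the statement itself.
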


Recall that for a codeword $c= (a_1, \dots, a_n, b_1, \dots, b_n) \in C$ its symplectic weight is defined as 
$$wt_s(c) = | \{ i \mid (a_i,b_i)\ne (0,0) \} |.$$ 
The symplectic distance between codewords $c$, $e\in C$ is 
$$d_s(c,e) = wt_s(c-e),$$  
and the (minimum) symplectic distance $d_s(C)$ of a code $C$ is 
the smallest symplectic distance between distinct codewords of $C$.

Similarly as Theorem \ref{distance}, one can prove the following 

\begin{theorem}
\label{distance-s}
Let $C$ be a quasi-cyclic code  of length $2m$ and index $2$, generated by elements 
$g_1=\big( g_{11}(x),g_{12}(x)\big)$ and $g_2=\big( 0,g_{22}(x)\big)$, satisfying Conditions $(\ast)$. 
Let codes $C_1$, $C_2$, $C_3$, $C_4$ be defined as in (\ref{4codes}).  
Then 
$$d_s(C) \ge \min \{ d(C_2), d(C_4), \max \{d(C_1) , d(C_3)\} \}.$$ 
\end{theorem}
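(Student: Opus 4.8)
The plan is to mirror the proof of Theorem \ref{distance}, splitting an arbitrary nonzero codeword $c \in C$ into the three cases according to whether its two coordinate polynomials vanish, and in each case bound the \emph{symplectic} weight rather than the Hamming weight. Writing a general codeword as $c = a(x)g_1 + b(x)g_2 = \big(a(x)g_{11}(x),\, a(x)g_{12}(x)+b(x)g_{22}(x)\big)$, recall that $wt_s(c)$ counts the positions $i$ where at least one of the two coordinates is nonzero; in particular, if both coordinate polynomials are nonzero then $wt_s(c) \ge \max\{wt(A), wt(B)\}$ where $A,B$ are the two coordinates, and if exactly one coordinate is nonzero then $wt_s(c)$ equals the Hamming weight of that coordinate.

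First I would treat the case $c = (0,B)$ with $B \ne 0$: exactly as in the proof of Theorem \ref{distance}, $B$ lies in $\langle g_{22}(x)\rangle = C_2$ because $g_{22}(x) \mid \frac{x^m-1}{g_{11}(x)}g_{12}(x)$, so $wt_s(c) = wt(B) \ge d(C_2)$. Next, the case $c = (A,0)$ with $A \ne 0$: again following Theorem \ref{distance}, $A = a(x)g_{11}(x)$ lies in $C_4$ (both subcases, $b(x)\equiv 0$ and $b(x)\not\equiv 0 \pmod{x^m-1}$, go through verbatim using the divisibility $\frac{g_{22}(x)}{\gcd(g_{12},g_{22})} \mid \frac{x^m-1}{\gcd(g_{12},x^m-1)}$), so $wt_s(c) = wt(A) \ge d(C_4)$. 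Finally, the case $c = (A,B)$ with $A \ne 0$ and $B \ne 0$: here $A = a(x)g_{11}(x) \in C_1$ so $wt(A) \ge d(C_1)$, and $B = a(x)g_{12}(x)+b(x)g_{22}(x) \in C_3$ (since $\gcd(g_{12},g_{22}) \mid B$) so $wt(B) \ge d(C_3)$; therefore $wt_s(c) \ge \max\{wt(A), wt(B)\} \ge \max\{d(C_1), d(C_3)\}$.

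Combining the three cases, every nonzero $c \in C$ satisfies $wt_s(c) \ge \min\{d(C_2),\, d(C_4),\, \max\{d(C_1), d(C_3)\}\}$, which gives the claimed bound since $d_s(C)$ is the minimum symplectic weight over nonzero codewords. The only real point of care — and the place where the symplectic bound differs from (and is weaker in form than) the Hamming bound of Theorem \ref{distance} — is the mixed case: for Hamming weight one adds $wt(A)+wt(B)$ because the two coordinate blocks occupy disjoint sets of positions, but for symplectic weight the two blocks are \emph{interleaved} into common positions $(a_i,b_i)$, so a nonzero $a_i$ and a nonzero $b_i$ in the same position $i$ are counted only once; hence one can only assert $wt_s(c) \ge \max\{wt(A), wt(B)\}$, not the sum. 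I expect verifying this interleaving/alignment subtlety — i.e., that $wt_s$ of a two-block codeword equals $|\mathrm{supp}(A) \cup \mathrm{supp}(B)|$ under the given coordinate ordering — to be the main (though minor) obstacle; everything else is a direct transcription of the argument for Theorem \ref{distance}.
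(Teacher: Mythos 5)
Your proof is correct and is exactly the argument the paper intends: the paper omits the proof of Theorem \ref{distance-s}, stating only that it is proved ``similarly as Theorem \ref{distance}'', and your write-up carries out that adaptation, correctly identifying that the only change is in the mixed case $(A,B)$ with $A\ne 0$, $B\ne 0$, where the interleaved coordinates give $wt_s(c)=|\mathrm{supp}(A)\cup\mathrm{supp}(B)|\ge\max\{d(C_1),d(C_3)\}$ in place of the sum. The remaining two cases do transfer verbatim, so there is no gap.
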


\section{Hermitian duals of quasi-cyclic codes}
\label{hermitian}

In this section we consider codes over the field $\mathbb{F}_{q^2}$ of $q^2$ elements. 
Let $a(x)=a_0+a_1x+\cdots+a_{m-1}x^{m-1}$ and $b(x)=b_0+b_1x+\cdots+b_{m-1}x^{m-1}$, where 
$a_i$, $b_i \in \mathbb{F}_{q^2}$ for any $0\le i \le m-1$. For $c\in \mathbb{F}_{q^2}$ we define $\overline{c}=c^q$ and 
$$\overline{a}(x) = \overline{a_0}+\overline{a_1}x+\cdots+\overline{a_{m-1}}x^{m-1}.$$
Then the Hermitian  inner product on $R$  is defined as 
\begin{equation}
\label{innerprod-h}
\big\langle a(x),b(x)\big\rangle _h = \big\langle a(x),\bar{b}(x)\big\rangle_e  = a_0\overline{b_0}+a_1\overline{b_1}+ \cdots +a_{m-1}\overline{b_{m-1}}. 
\end{equation}
If $C$ is a code in $(\mathbb{F}_{q^2})^{2m}$, then its Hermitian dual is 
$$C^{{\perp}_h} = \{ u \in (\mathbb{F}_{q^2})^{2m} \mid \langle u, v \rangle_h =0, \ {\rm for \ all} \ v\in C  \}.$$
The code $C$ is called Hermitian self-orthogonal if $C \subseteq C^{{\perp}_h}$, 
and Hermitian dual-containing if $C \supseteq C^{{\perp}_h}$.

Let $f(x)=a_0 +a_1 x+ \cdots+ a_{m-1}x^{m-1}+a_{m}x^{m} $.   
We define the {\emph{conjugate transpose polynomial}} $f(x)^{\dagger}$ of $f(x)$ as 
$$f(x)^\dagger =x^m \overline{f}(x^{-1})= \overline{a_{m}}  +\overline{a_{m-1}} x+\cdots+
\overline{a_{2}}x^{m-2}+\overline{a_{1}}x^{m-1} + \overline{a_0}x^m. $$

\begin{lemma} 
\label{transpose-h} 
If $\deg f(x) \le m$,  $\deg h(x) \le m$ and $\deg f(x)h(x) \le m$, then 

1) $f(x)^\dagger =  x^{m-\deg f}  \overline{f}^*(x) $;
 
2) $x^m \big( f(x)h(x)\big)^\dagger =  f(x)^{\dagger} h(x)^{\dagger}$; 

 3) $f(x)^\dagger \equiv (\overline{a_0} + \overline{a_m}) +\overline{a_{m-1}} x+\cdots+\overline{a_{2}}x^{m-2}+
 \overline{a_{1}}x^{m-1} \pmod{x^m-1}$; 

4) $\big( f(x)h(x)\big)^\dagger \equiv  f(x)^{\dagger} h(x)^{\dagger} \pmod{x^m-1}$. 
\end{lemma}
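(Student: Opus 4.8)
The plan is to prove all four statements of Lemma \ref{transpose-h} by direct computation, mirroring the proof of Lemma \ref{transpose} (the Euclidean analogue) but keeping track of the conjugation $c \mapsto \overline{c} = c^q$. The key observation is that conjugation is a field automorphism of $\mathbb{F}_{q^2}$, so it commutes with all polynomial arithmetic: $\overline{f(x)h(x)} = \overline{f}(x)\,\overline{h}(x)$ and $\overline{f}(x^{-1}) = \overline{f(x^{-1})}$ when we interpret the bar coefficientwise. This means every identity will follow from its non-conjugated counterpart by applying the bar to the coefficients.

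For part 1), I would start from the definition $f(x)^\dagger = x^m \overline{f}(x^{-1})$ and factor out $x^{m-\deg f}$: since $\overline{f}$ has the same degree as $f$, we get $f(x)^\dagger = x^{m-\deg f}\cdot x^{\deg f}\overline{f}(x^{-1}) = x^{m-\deg f}\,\overline{f}^*(x)$, using the definition of the reciprocal polynomial applied to $\overline{f}$. Part 3) is just the reduction of part 1) (or of the definition) modulo $x^m-1$: the term $\overline{a_0}x^m$ becomes $\overline{a_0}$, which adds to the constant term $\overline{a_m}$, giving the stated expression. For part 2), I would compute $x^m\big(f(x)h(x)\big)^\dagger = x^m\cdot x^m\,\overline{fh}(x^{-1}) = \big(x^m\overline{f}(x^{-1})\big)\big(x^m\overline{h}(x^{-1})\big) = f(x)^\dagger h(x)^\dagger$, where the middle step uses $\overline{fh} = \overline{f}\,\overline{h}$ coefficientwise. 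Part 4) is then part 2) reduced modulo $x^m-1$, since $x^m \equiv 1$.

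The computations are entirely routine once the commutation of conjugation with polynomial operations is noted, so I do not expect a genuine obstacle. The only mild subtlety worth stating carefully is the degree bookkeeping in parts 2) and 4): one needs $\deg f \le m$, $\deg h \le m$, and $\deg(fh) \le m$ so that the $x^m$ prefactor correctly "recenters" each transpose polynomial and so that the product $x^{m-\deg f}\overline{f}^* \cdot x^{m-\deg h}\overline{h}^*$ matches up with $x^{m - \deg(fh)}\overline{(fh)}^* = x^{m-\deg f - \deg h}\,\overline{f}^*\overline{h}^*$ after dividing by the single $x^m$. I would present parts 1) and 2) with the one-line algebraic derivations as in Lemma \ref{transpose}, then remark that 3) and 4) follow by reducing modulo $x^m - 1$, and close the proof there.
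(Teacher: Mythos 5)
Your proof is correct and follows essentially the same route as the paper, which simply notes that the argument is identical to that of Lemma \ref{transpose} once one observes that $c\mapsto \overline{c}=c^q$ is a field automorphism, so the bar commutes with products and with the substitution $x\mapsto x^{-1}$. The degree bookkeeping you mention is handled exactly as in the Euclidean case, so there is nothing further to add.
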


\begin{proof}
Similar to the proof of Lemma \ref{transpose}.  
 \end{proof}

The next lemma is an analog of Lemma \ref{adjoint} and it can be proved in the same way.
\begin{lemma}
\label{adjoint-h} 
Let $a(x)$, $b(x)$, $c(x)$ be polynomials in $R$. Then   
\[ \big\langle a(x)c(x),b(x)\big\rangle _h= \big\langle  a(x),b(x)c(x)^\dagger \big\rangle _h.\]
\end{lemma}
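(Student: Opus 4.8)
The plan is to mimic the proof of Lemma~\ref{adjoint} exactly, reducing to the monomial case and then verifying a shifted-index identity, with the only new feature being the presence of the conjugation $\overline{\phantom{c}}$ coming from the Hermitian form. First I would observe that both sides of the claimed identity $\langle a(x)c(x),b(x)\rangle_h = \langle a(x),b(x)c(x)^\dagger\rangle_h$ are $\mathbb{F}_q$-bilinear (and $\mathbb{F}_{q^2}$-semilinear) in $c(x)$, so it suffices to prove it when $c(x)=x^k$ for $0\le k\le m-1$. For $c(x)=x^k$, Lemma~\ref{transpose-h}(3) gives $c(x)^\dagger = (x^k)^\dagger \equiv x^{m-k} \pmod{x^m-1}$ (treating the case $k=0$ via $x^0 = 1$, whose transpose is $x^m \equiv 1$, or just noting the statement is trivial then). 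So the claim reduces to
\[ \big\langle a(x)x^k,\,b(x)\big\rangle_h = \big\langle a(x),\,b(x)x^{m-k}\big\rangle_h. \]

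Next I would unwind the definition~(\ref{innerprod-h}): $\langle u(x),v(x)\rangle_h = \langle u(x),\overline{v}(x)\rangle_e$, and then use the already-established Lemma~\ref{adjoint} for the Euclidean form. Specifically, $\langle a(x)x^k, b(x)\rangle_h = \langle x^k a(x), \overline{b}(x)\rangle_e = \langle a(x), (x^k)^\circ\, \overline{b}(x)\rangle_e$ by Lemma~\ref{adjoint}, and $(x^k)^\circ \equiv x^{m-k}\pmod{x^m-1}$ by Lemma~\ref{transpose}(1). On the other hand, $\langle a(x), b(x)x^{m-k}\rangle_h = \langle a(x), \overline{b(x)x^{m-k}}\rangle_e = \langle a(x), \overline{b}(x)\, x^{m-k}\rangle_e$, since $x^{m-k}$ has coefficients in the prime field and hence is fixed by the conjugation. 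The two right-hand sides agree, which finishes the proof. Alternatively, one can give the fully self-contained version paralleling Lemma~\ref{adjoint}: writing $a(x)=\sum a_i x^i$, $b(x)=\sum b_i x^i$ with indices mod $m$, both sides equal $\sum_{i} a_{i-k}\overline{b_i} = \sum_i a_i \overline{b_{i+k}}$, a reindexing identity.

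I do not anticipate a genuine obstacle here; the lemma is deliberately stated as ``an analog of Lemma~\ref{adjoint} and it can be proved in the same way,'' so the proof is essentially a one-line reduction. The only point requiring a moment's care is bookkeeping of the conjugation: one must note that the extra factor $c(x)^\dagger$ carries a bar (unlike $c(x)^\circ$ in the Euclidean case), and that this bar is exactly accounted for because in passing from $\langle\cdot,\cdot\rangle_h$ to $\langle\cdot,\cdot\rangle_e$ one conjugates the second argument, and $x^k$ (more generally any prime-field-coefficient polynomial) is unaffected. A secondary bit of care is the $k=0$ boundary of Lemma~\ref{transpose-h}(3), where $(a_0+a_m)$ collapses correctly modulo $x^m-1$; but since we only need the monomial case with $0\le k \le m-1$ this is harmless. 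In the write-up I would present the short argument via Lemma~\ref{adjoint} as the main proof, since it reuses established machinery and makes the ``same way'' remark precise, and perhaps append the explicit reindexing identity as an alternative for readers who prefer a direct computation.
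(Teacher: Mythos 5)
Your proof is correct and matches the paper's intended argument --- the paper simply states that this lemma ``can be proved in the same way'' as Lemma~\ref{adjoint}, i.e., by reducing to $c(x)=x^k$ and reindexing, which is exactly what you do (and your route through Lemma~\ref{adjoint} together with the observation that $x^{m-k}$ is fixed by conjugation is a clean way to make that remark precise). One small nitpick: both sides are in fact $\mathbb{F}_{q^2}$-\emph{linear} (not semilinear) in $c(x)$, since the conjugate-linearity of $c\mapsto c^\dagger$ cancels against the conjugate-linearity of the Hermitian form in its second slot; either way the reduction to monomials is valid.
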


The inner product (\ref{innerprod-h}) can be naturally extended to quasi-cyclic codes of length $2m$ and index 2: 
\[ \big\langle  \big(a(x),b(x)\big) , \big(a'(x),b'(x)\big) \big\rangle _h =  
\big\langle  a(x),a'(x) \big\rangle _h + \big\langle  b(x), b'(x) \big\rangle _h .   \]

\begin{lemma} 
\label{gg-h} 
 If $f(x)$ divides $x^m-1$, then 
$$\left( \frac{x^m-1}{f(x)} \right)^{\dagger} = \frac{x^m(1-x^m)}{f(x)^{\dagger}}.$$ 

If the polynomial $g_{11}(x) g_{22}(x)$ divides $(x^m-1)g_{12}(x)$, then 
$$\left( \frac{(x^m-1)g_{12}(x)}{g_{11}(x)g_{22}(x)} \right)^{\dagger} = 
\frac{x^m(1-x^m) g_{12}(x)^{\dagger}}{g_{11}(x)^{\dagger}g_{22}(x)^{\dagger}} .$$
\end{lemma}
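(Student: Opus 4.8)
The plan is to prove Lemma \ref{gg-h} by direct substitution, in exact parallel to the proof of Lemma \ref{gg}, replacing the transpose operation $\circ$ by the conjugate transpose operation $\dagger$ and exploiting the fact that the Frobenius map $c \mapsto \overline{c} = c^q$ is a ring automorphism of $\mathbb{F}_{q^2}$ fixing $\mathbb{F}_q$. First I would note that $x^m - 1 \in \mathbb{F}_q[x]$, so $\overline{x^m - 1} = x^m - 1$, and if $f(x)$ divides $x^m-1$ in $\mathbb{F}_{q^2}[x]$ then so does $\overline{f}(x)$, since applying the Frobenius map to the identity $f(x) g(x) = x^m-1$ gives $\overline{f}(x)\,\overline{g}(x) = x^m-1$.

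For the first identity, I would set $g(x) = (x^m-1)/f(x)$ and compute $\big(g(x)\big)^{\dagger} = x^m \overline{g}(x^{-1})$ directly. Writing $\overline{g}(x^{-1}) = \big(\overline{x^m-1}\big)\big/\overline{f}(x^{-1}) = (x^{-m}-1)/\overline{f}(x^{-1})$ and multiplying numerator and denominator by suitable powers of $x$, one gets
\[
\left( \frac{x^m-1}{f(x)} \right)^{\dagger} = x^m \cdot \frac{x^{-m} - 1}{\overline{f}(x^{-1})} = x^m \cdot \frac{1 - x^m}{x^m \overline{f}(x^{-1})} = \frac{x^m(1-x^m)}{f(x)^{\dagger}},
\]
using the definition $f(x)^{\dagger} = x^m \overline{f}(x^{-1})$. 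This is formally the same manipulation as in Lemma \ref{gg}, part 2, with $\overline{f}$ in place of $f$ at the intermediate step and the observation that $\deg f \le m$ guarantees $f(x)^{\dagger}$ is a genuine polynomial. For the second identity, I would write $h(x)\cdot g_{11}(x) g_{22}(x) = (x^m-1) g_{12}(x)$ with $\deg h < m$, apply $x \mapsto x^{-1}$ and multiply through by an appropriate power of $x$ (namely $x^{4m}$, distributed as one factor $x^m$ for each of the four polynomials $h, g_{11}, g_{22}, g_{12}$ and the extra $x^m$ coming from $x^m-1$), then apply the Frobenius to all coefficients; using part 2 of Lemma \ref{transpose-h} repeatedly to collapse the $x^m(fh)^{\dagger} = f^{\dagger} h^{\dagger}$ products, this yields
\[
h(x)^{\dagger}\, g_{11}(x)^{\dagger} g_{22}(x)^{\dagger} = x^m(1-x^m)\, g_{12}(x)^{\dagger},
\]
which rearranges to the claimed formula for $\left( (x^m-1)g_{12}(x) / (g_{11}(x) g_{22}(x)) \right)^{\dagger}$.

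I do not anticipate a serious obstacle: the only thing to be careful about is bookkeeping of the powers of $x$ introduced when clearing denominators — exactly as in Lemma \ref{gg}, one must check that each polynomial appearing has degree at most $m$ so that the substitution $x \mapsto x^{-1}$ followed by multiplication by $x^m$ genuinely recovers the $\dagger$-operation rather than a Laurent polynomial — together with the routine verification that Frobenius commutes with the substitution, i.e. $\overline{f(x^{-1})} = \overline{f}(x^{-1})$, which holds coefficientwise. Since $\deg g_{12} < \deg g_{22} \le m$ and $g_{11} g_{22} \mid (x^m-1) g_{12}$ forces $\deg h < m$, all the required degree bounds are in force, and the argument goes through verbatim with $\circ$ replaced by $\dagger$. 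Hence I would simply state that the proof is identical to that of Lemma \ref{gg} after substituting $\overline{f}$ for $f$ at the appropriate places, which is presumably what the authors intend.
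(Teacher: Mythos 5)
Your proposal is correct and matches the paper's approach exactly: the paper's own proof is literally ``Similar to the proof of Lemma \ref{gg},'' and your argument is precisely that proof with $\circ$ replaced by $\dagger$, justified by the fact that the Frobenius is a ring automorphism fixing $x^m-1$ and commuting with the substitution $x\mapsto x^{-1}$. The only blemish is the bookkeeping remark about multiplying by $x^{4m}$ --- one multiplies both sides of $h\,g_{11}g_{22}=(x^m-1)g_{12}$ by $x^{3m}$ (three factors of $x^m$ on the left, and $x^m\cdot x^m\cdot x^m$ on the right absorbed by $(x^m-1)$, $g_{12}$, and the leftover $x^m$) --- but your displayed identity $h^{\dagger}g_{11}^{\dagger}g_{22}^{\dagger}=x^m(1-x^m)g_{12}^{\dagger}$ is the correct one, so this does not affect the proof.
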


\begin{proof} 
Similar to the proof of Lemma \ref{gg}. 
\end{proof} 

\begin{theorem}
\label{dual-h}
Let $C$ be a quasi-cyclic code  of length $2m$ and index $2$, generated by two elements 
$g_1=\big( g_{11}(x),g_{12}(x)\big)$ and $g_2=\big( 0,g_{22}(x)\big)$, satisfying Conditions $(\ast)$. 
Then its Hermitian dual code $C^{{\perp}_h}$ is generated by 
two elements $\Big(  \frac{x^m(x^m-1)}{g_{11}(x)^{\dagger}} ,0 \Big)$ and 
$\Big( \frac{x^m(x^m-1)g_{12}(x)^{\dagger}} {g_{11}(x)^{\dagger}  g_{22}(x)^{\dagger} } ,  
- \frac{x^m(x^m-1)}{g_{22}(x)^{\dagger} } \Big)$. 
\end{theorem}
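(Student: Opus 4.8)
The plan is to mirror the proof of Theorem \ref{dual} almost verbatim, replacing the transpose polynomial $f(x)^{\circ}$ by the conjugate transpose polynomial $f(x)^{\dagger}$ and the adjoint Lemma \ref{adjoint} by its Hermitian analog Lemma \ref{adjoint-h}. Concretely, let $C'$ be the code generated by $f_1 = \bigl( \frac{x^m(x^m-1)}{g_{11}(x)^{\dagger}}, 0 \bigr)$ and $f_2 = \bigl( \frac{x^m(x^m-1)g_{12}(x)^{\dagger}}{g_{11}(x)^{\dagger} g_{22}(x)^{\dagger}}, -\frac{x^m(x^m-1)}{g_{22}(x)^{\dagger}} \bigr)$. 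Note that by Lemma \ref{gg-h} these are genuinely elements of $R^2$, i.e.\ the indicated divisions are exact modulo $x^m-1$. The proof splits into two parts: (i) $C' \subseteq C^{\perp_h}$, and (ii) $\dim C' = \dim C^{\perp_h}$.

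For part (i), since $C$ is generated by $g_1, g_2$ and $C'$ by $f_1, f_2$, it suffices to check that $\langle g_i, c(x) f_j \rangle_h = 0$ for all $i,j \in \{1,2\}$ and all $c(x) \in R$; by sesquilinearity this reduces to checking $\langle g_i, f_j \rangle_h = 0$ after absorbing $c(x)^{\dagger}$ into the second argument via Lemma \ref{adjoint-h}. The computations are the Hermitian shadows of those in Theorem \ref{dual}: $\langle g_2, f_1\rangle_h = 0$ trivially; $\langle g_1, f_1 \rangle_h = \langle g_{11}(x), \frac{x^m(x^m-1)}{g_{11}(x)^{\dagger}} \rangle_h = \langle 1, \frac{x^m(x^m-1)}{g_{11}(x)^{\dagger}} g_{11}(x)^{\dagger} \rangle_h = \langle 1, x^m(x^m-1) \rangle_h = 0$ using Lemma \ref{adjoint-h}; similarly $\langle g_2, f_2 \rangle_h = 0$; and $\langle g_1, f_2 \rangle_h$ splits into two terms that cancel, exactly as in Theorem \ref{dual}, once one uses $g_{11}(x)^{\dagger} \mid x^m(x^m-1)$ to simplify. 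One small point to be careful about: Lemma \ref{adjoint-h} moves a factor from the \emph{first} slot to the \emph{second} slot (multiplication by $c(x)$ on $a$ equals multiplication by $c(x)^{\dagger}$ on $b$), so one must keep track of which polynomial plays the role of $a(x)$ and which of $b(x)$; here $g_i$ sits in the first slot, so the factors $g_{11}(x), g_{22}(x), g_{12}(x)$ migrate to the $f_j$-side picking up $\dagger$'s, which is precisely why the generators of $C^{\perp_h}$ involve $g_{11}^{\dagger}, g_{22}^{\dagger}, g_{12}^{\dagger}$ rather than the $\circ$-versions.

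For part (ii), I would rewrite $f_1$ and $f_2$ in terms of reciprocal polynomials using parts 1 and 3 of Lemma \ref{transpose-h}: modulo $x^m-1$ we get $f_1 = x^{\deg g_{11}} \bigl( \frac{x^m-1}{\overline{g_{11}}(x)^*}, 0 \bigr)$, and $f_2 \equiv x^{\deg g_{22}} \bigl( \frac{x^{m + \deg g_{11} - \deg g_{12}}(x^m-1) \overline{g_{12}}(x)^*}{\overline{g_{11}}(x)^* \overline{g_{22}}(x)^*}, -\frac{x^m-1}{\overline{g_{22}}(x)^*} \bigr)$. Since $\overline{g_{11}}(x)^*$ and $\overline{g_{22}}(x)^*$ divide $x^m-1$ (because $g_{11}, g_{22}$ do, conjugation preserves divisibility of $x^m - 1$, and $f \mid x^m-1 \Rightarrow f^* \mid x^m - 1$), and $\deg \overline{g_{12}}(x)^* < \deg \overline{g_{22}}(x)^*$, these generators (after clearing the unit powers of $x$) satisfy Conditions $(\ast)$, so Theorem \ref{generators} gives $\dim C' = 2m - \deg \frac{x^m-1}{\overline{g_{11}}(x)^*} - \deg \frac{x^m-1}{\overline{g_{22}}(x)^*} = \deg g_{11}(x) + \deg g_{22}(x) = 2m - \dim C = \dim C^{\perp_h}$. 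Combined with $C' \subseteq C^{\perp_h}$ this forces $C' = C^{\perp_h}$.

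The only genuine obstacle, as in Theorem \ref{dual}, is bookkeeping: verifying that the reciprocal-polynomial rewriting is correct (tracking the powers of $x$ that Lemma \ref{transpose-h} produces, and that they are units in $R$ so they do not affect the generated ideal) and that the divisibility hypotheses of Conditions $(\ast)$ transfer to the conjugate-reciprocal polynomials. Everything else is a direct translation of the Euclidean argument, since Lemmas \ref{transpose-h}, \ref{adjoint-h} and \ref{gg-h} are precisely the Hermitian counterparts of Lemmas \ref{transpose}, \ref{adjoint} and \ref{gg} that were used there. In fact, one could also short-circuit the whole computation by observing that the Hermitian inner product on $(\mathbb{F}_{q^2})^{2m}$ equals the Euclidean pairing of the first argument with the coordinatewise $q$-th power of the second, so $C^{\perp_h} = \overline{C}^{\perp_e}$ where $\overline{C}$ is the image of $C$ under coordinatewise conjugation; applying Theorem \ref{dual} to $\overline{C}$ (which is generated by $(\overline{g_{11}}(x), \overline{g_{12}}(x))$ and $(0, \overline{g_{22}}(x))$) and then conjugating the resulting generators yields the claimed generators for $C^{\perp_h}$, since $\overline{f(x)^{\circ}} = \overline{f}(x)^{\circ}$ and $\overline{g}(x)^{\circ} = \overline{g(x)^{\dagger}}$. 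I would likely present the direct computation for self-containedness but remark on this shortcut.
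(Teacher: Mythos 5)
Your direct argument is correct and is exactly the proof the paper intends: Theorem \ref{dual-h} is stated without proof precisely because it is the verbatim $\dagger$-for-$\circ$ translation of Theorem \ref{dual} via Lemmas \ref{transpose-h}, \ref{adjoint-h} and \ref{gg-h}, and your reciprocal-polynomial rewriting for the dimension count matches Corollary \ref{dual-cor-h}. One small correction to your optional shortcut: $C^{\perp_h}=(\overline{C})^{\perp_e}$ already, and $\overline{g}(x)^{\circ}=g(x)^{\dagger}$ (whereas $\overline{g(x)^{\dagger}}=g(x)^{\circ}$), so the generators of $(\overline{C})^{\perp_e}$ produced by Theorem \ref{dual} are directly the claimed ones and no final conjugation is needed.
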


Theorem \ref{dual-h} can be reformulated as 

\begin{corollary}
\label{dual-cor-h}
Let $C$ be a quasi-cyclic code  of length $2m$ and index $2$, generated by two elements 
$g_1=\big( g_{11}(x),g_{12}(x)\big)$ and $g_2=\big( 0,g_{22}(x)\big)$, satisfying Conditions $(\ast)$. 
Then its Hermitian dual code $C^{{\perp}_h}$ is generated by 
two elements $\Big(  \frac{x^m-1}{\overline{g_{11}}(x)^{*}} ,0 \Big)$ and 
$\Big( \frac{x^{m+\deg g_{11} - \deg g_{12}}(x^m-1) \overline{g_{12}}(x)^{*}} {\overline{g_{11}}(x)^{*}  \overline{g_{22}}(x)^{*}} ,   
- \frac{x^m-1}{\overline{g_{22}}(x)^{*} } \Big)$. 
\end{corollary}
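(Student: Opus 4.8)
The plan is to deduce Corollary~\ref{dual-cor-h} directly from Theorem~\ref{dual-h} by rewriting each of the two displayed generators using the dagger-to-reciprocal translation supplied by Lemma~\ref{transpose-h}(1), exactly in the way Corollary~\ref{dual-cor} was obtained from Theorem~\ref{dual}. Since we are working modulo $x^m-1$, where $x$ is a unit, multiplying a generator by any power $x^k$ changes neither the cyclic submodule it generates nor (together with the other generator) the code; this is the only fact about $R$ we need beyond Lemma~\ref{transpose-h}.

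First I would handle the first generator. By Lemma~\ref{transpose-h}(1), $g_{11}(x)^{\dagger} = x^{m-\deg g_{11}}\,\overline{g_{11}}(x)^{*}$, and since $g_{11}(x)\mid(x^m-1)$ one has $\overline{g_{11}}(x)^{*}\mid(x^m-1)$ as well (conjugation and taking reciprocals both preserve divisibility of $x^m-1$, as noted after Lemma~\ref{gg}). Hence
\[
\frac{x^m(x^m-1)}{g_{11}(x)^{\dagger}} = \frac{x^m(x^m-1)}{x^{m-\deg g_{11}}\,\overline{g_{11}}(x)^{*}} = x^{\deg g_{11}}\cdot\frac{x^m-1}{\overline{g_{11}}(x)^{*}},
\]
so the first generator $\bigl(\tfrac{x^m(x^m-1)}{g_{11}(x)^{\dagger}},0\bigr)$ equals $x^{\deg g_{11}}\bigl(\tfrac{x^m-1}{\overline{g_{11}}(x)^{*}},0\bigr)$, which generates the same submodule as $\bigl(\tfrac{x^m-1}{\overline{g_{11}}(x)^{*}},0\bigr)$.

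Next I would rewrite the second generator componentwise. For its second coordinate, the same computation as above with $g_{22}$ in place of $g_{11}$ gives $\tfrac{x^m(x^m-1)}{g_{22}(x)^{\dagger}} = x^{\deg g_{22}}\cdot\tfrac{x^m-1}{\overline{g_{22}}(x)^{*}}$. For the first coordinate, apply Lemma~\ref{transpose-h}(1) to each of $g_{11},g_{12},g_{22}$: the numerator factor $g_{12}(x)^{\dagger}$ contributes $x^{m-\deg g_{12}}$ and the denominator factors contribute $x^{m-\deg g_{11}}$ and $x^{m-\deg g_{22}}$, so the net power of $x$ pulled out is $x^{(m-\deg g_{12})-(m-\deg g_{11})-(m-\deg g_{22})} = x^{\deg g_{11}+\deg g_{22}-\deg g_{12}-m}$, leaving
\[
\frac{x^m(x^m-1)g_{12}(x)^{\dagger}}{g_{11}(x)^{\dagger}g_{22}(x)^{\dagger}} = x^{\deg g_{11}+\deg g_{22}-\deg g_{12}}\cdot\frac{x^{m}(x^m-1)\overline{g_{12}}(x)^{*}}{x^{m}\,\overline{g_{11}}(x)^{*}\,\overline{g_{22}}(x)^{*}} = x^{\deg g_{22}}\cdot\frac{x^{m+\deg g_{11}-\deg g_{12}}(x^m-1)\overline{g_{12}}(x)^{*}}{\overline{g_{11}}(x)^{*}\,\overline{g_{22}}(x)^{*}}
\]
(here I also used that $\overline{g_{11}}(x)^{*}\overline{g_{22}}(x)^{*}$ divides $(x^m-1)\overline{g_{12}}(x)^{*}$, which follows by applying $\dagger$ and Lemma~\ref{transpose-h}(2),(4) to the corresponding Condition $(\ast)$ for $C$, so the quotient is a genuine polynomial). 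Thus the second generator equals $x^{\deg g_{22}}$ times $\bigl(\tfrac{x^{m+\deg g_{11}-\deg g_{12}}(x^m-1)\overline{g_{12}}(x)^{*}}{\overline{g_{11}}(x)^{*}\overline{g_{22}}(x)^{*}},\,-\tfrac{x^m-1}{\overline{g_{22}}(x)^{*}}\bigr)$.

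Since both generators from Theorem~\ref{dual-h} are unit multiples of the claimed generators, the two generating sets produce the same code, which completes the proof. The only point requiring care is the bookkeeping of the exponents of $x$ in the second generator and checking that the two rewritten coordinates are scaled by the \emph{same} power $x^{\deg g_{22}}$ (otherwise one would not obtain a single scalar multiple of the claimed generator); I expect this exponent arithmetic, rather than any conceptual difficulty, to be the main thing to get right.
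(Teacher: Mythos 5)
Your proposal is correct and follows essentially the same route the paper takes: it mirrors the reciprocal-polynomial rewriting of the generators carried out in the proof of Theorem \ref{dual} for Corollary \ref{dual-cor}, transported to the Hermitian setting via Lemma \ref{transpose-h}, and your exponent bookkeeping (including the common unit factor $x^{\deg g_{22}}$ on both coordinates of the second generator and the extra $x^m\equiv 1$ used to keep the exponent nonnegative) matches the paper's computation.
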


The next two theorems give characterizations of Hermitian  self-orthogonal and dual-containing codes. 

\begin{theorem}
\label{self-orth-h} 
Let $C$ be a quasi-cyclic code  of length $2m$ and index $2$, generated by elements 
$g_1=\big( g_{11}(x),g_{12}(x)\big)$ and $g_2=\big( 0,g_{22}(x)\big)$, satisfying Conditions $(\ast)$. 
Then $C$ is self-orthogonal with respect to the Hermitian inner product if and only if the following conditions hold:

1) $g_{22}(x) g_{22}(x)^{\dagger} \equiv 0  \pmod{x^m-1}$;

2) $g_{12}(x)g_{22}(x)^{\dagger} \equiv 0  \pmod{x^m-1}$;

3) $g_{11}(x)g_{11}(x)^{\dagger}  + g_{12}(x)g_{12}(x)^{\dagger} \equiv 0  \pmod{x^m-1}$. 
\end{theorem}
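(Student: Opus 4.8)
The plan is to mimic exactly the proof of Theorem~\ref{self-orth} (the Euclidean case), replacing the adjoint map ``$\circ$'' by the conjugate-transpose adjoint ``$\dagger$'' of Lemma~\ref{adjoint-h}. Self-orthogonality of $C$ with respect to the Hermitian inner product is equivalent to requiring that all pairings among a generating set vanish; since $C = \langle g_1, g_2 \rangle$ as an $R$-module, it suffices that
\[
\big\langle g_i, \, b(x) g_j \big\rangle_h = 0 \quad \text{for all } b(x) \in R \text{ and all } i, j \in \{1,2\}.
\]
So the first step is to write out the three pairings $\langle g_1, b g_1\rangle_h$, $\langle g_1, b g_2\rangle_h$, and $\langle g_2, b g_2\rangle_h$ explicitly using the definition $\langle (a,b),(a',b')\rangle_h = \langle a,a'\rangle_h + \langle b,b'\rangle_h$ on the length-$2$ module.

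The second step is to move the factor $b(x)$ (and the generator polynomials) across the inner product using Lemma~\ref{adjoint-h}, which says $\langle a(x)c(x), b(x)\rangle_h = \langle a(x), b(x) c(x)^\dagger\rangle_h$. Applying this, $\langle g_{22}(x), b(x) g_{22}(x)\rangle_h = \langle g_{22}(x) g_{22}(x)^\dagger, b(x)\rangle_h$, and similarly $\langle g_1, b(x) g_2\rangle_h = \langle g_{12}(x) g_{22}(x)^\dagger, b(x)\rangle_h$, and $\langle g_1, b(x) g_1\rangle_h = \langle g_{11}(x) g_{11}(x)^\dagger + g_{12}(x) g_{12}(x)^\dagger, b(x)\rangle_h$. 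The remaining point is that a polynomial $p(x) \in R$ satisfies $\langle p(x), b(x)\rangle_h = 0$ for \emph{all} $b(x) \in R$ if and only if $p(x) \equiv 0 \pmod{x^m-1}$; this is immediate because the Hermitian form is nondegenerate on $R$ (it is just the conjugate-linear extension of the standard dot product, and $q$-th powering is a bijection on $\mathbb{F}_{q^2}$). This yields precisely conditions 1), 2), 3).

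One small bookkeeping point to verify: when pulling $b(x)$ out, the adjoint relation in Lemma~\ref{adjoint-h} is stated as $\langle a c, b\rangle_h = \langle a, b c^\dagger\rangle_h$, i.e.\ the conjugate-transpose lands on the \emph{second} slot; one must be careful that in $\langle g_1, b(x) g_1\rangle_h$ the multiplier $b(x)$ sits in the second argument, so one applies the relation in the form $\langle a, b(x) c(x)\rangle_h = \langle a c(x)^\dagger, b(x)\rangle_h$, which follows from Lemma~\ref{adjoint-h} together with the (conjugate-)symmetry of the Hermitian form up to applying $q$-th power — or more simply, one checks directly from Lemma~\ref{transpose-h} that $\langle a, b c\rangle_h = \langle a c^\dagger, b\rangle_h$ by the same index-shift argument as in Lemma~\ref{adjoint}. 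I do not expect any genuine obstacle here: the whole argument is formally identical to the proof of Theorem~\ref{self-orth}, with $\dagger$ in place of $\circ$, and every ingredient (Lemma~\ref{adjoint-h}, Lemma~\ref{transpose-h}, nondegeneracy of $\langle\cdot,\cdot\rangle_h$) is already available. The proof therefore reads simply: ``Similar to the proof of Theorem~\ref{self-orth}, using Lemma~\ref{adjoint-h}.''

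\begin{proof}
Self-orthogonality of $C$ with respect to the Hermitian inner product means that
\[ \big\langle  \big( g_{11}(x), g_{12}(x)\big) , b(x) \big( g_{11}(x), g_{12}(x)) \big\rangle_h  = 0,    \]
\[ \big\langle  \big( g_{11}(x), g_{12}(x)\big) , b(x) \big( 0, g_{22}(x)) \big\rangle_h  = 0,    \]
\[ \big\langle  \big( 0, g_{22}(x)\big) , b(x) \big( 0, g_{22}(x)) \big\rangle_h  = 0,    \]
for any $b(x) \in R$. By Lemma \ref{adjoint-h} this is equivalent to
\[ \big\langle   g_{11}(x) g_{11}(x)^{\dagger} +  g_{12}(x) g_{12}(x)^{\dagger} , b(x) \big\rangle_h  = 0,    \]
\[ \big\langle   g_{12}(x) g_{22}(x)^{\dagger} , b(x) \big\rangle_h  = 0,    \]
\[ \big\langle   g_{22}(x) g_{22}(x)^{\dagger} , b(x) \big\rangle_h  = 0,    \]
for any $b(x) \in R$. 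Since the Hermitian form is nondegenerate on $R$, this is equivalent to conditions 1), 2), 3), which proves the theorem.
\end{proof}
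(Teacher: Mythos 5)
Your proof is correct and follows exactly the route the paper intends: the paper states Theorem~\ref{self-orth-h} without proof as the Hermitian analogue of Theorem~\ref{self-orth}, and your argument is that proof verbatim with $\circ$ replaced by $\dagger$ and Lemma~\ref{adjoint} replaced by Lemma~\ref{adjoint-h}. Your side remark on moving $b(x)$ across the sesquilinear form via $\langle a, bc\rangle_h = \langle ac^{\dagger}, b\rangle_h$ (using that $\dagger$ is an involution) is the only point requiring care, and you handle it correctly.
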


\begin{theorem}
\label{dual-con-h} 
Let $C$ be a quasi-cyclic code  of length $2m$ and index $2$, generated by elements 
$g_1=\big( g_{11}(x),g_{12}(x)\big)$ and $g_2=\big( 0,g_{22}(x)\big)$, satisfying Conditions $(\ast)$. 
Then $C$ is Hermitian dual-containing if and only if the following conditions hold:

1) $g_{11}(x) g_{11}(x)^{\dagger} $ divides  $x^m(x^m-1)$;

2) $g_{11}(x) g_{11}(x)^{\dagger}  g_{22}(x)$ divides  $x^m(x^m-1)g_{12}(x)$ ;

3) $g_{11}(x) g_{11}(x)^{\dagger}  g_{22}(x) g_{22}(x)^{\dagger}$ divides  
$x^m(x^m-1)(g_{11}(x)g_{11}(x)^{\dagger}  + g_{12}(x)g_{12}(x)^{\dagger} ) $. 
\end{theorem}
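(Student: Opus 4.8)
The plan is to run the proof of Theorem~\ref{dual-con} with the Euclidean transpose $(\cdot)^{\circ}$, Lemma~\ref{adjoint} and Lemma~\ref{gg} replaced by their Hermitian analogues $(\cdot)^{\dagger}$, Lemma~\ref{adjoint-h} and Lemma~\ref{gg-h}. Since the Hermitian form on $(\mathbb{F}_{q^2})^{2m}$ is non-degenerate, $(C^{\perp_h})^{\perp_h}=C$, so $C$ is Hermitian dual-containing if and only if $C^{\perp_h}$ is Hermitian self-orthogonal. By Theorem~\ref{dual-h} the code $C^{\perp_h}$ is generated by $f_1=\big(\tfrac{x^m(x^m-1)}{g_{11}^{\dagger}},0\big)$ and $f_2=\big(\tfrac{x^m(x^m-1)g_{12}^{\dagger}}{g_{11}^{\dagger}g_{22}^{\dagger}},-\tfrac{x^m(x^m-1)}{g_{22}^{\dagger}}\big)$ (here and below I abbreviate $g_{11}(x),g_{12}(x),g_{22}(x)$ to $g_{11},g_{12},g_{22}$), and I write $f_{i,1},f_{i,2}$ for the two coordinates of $f_i$.

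The first step is the routine reduction that also underlies the proof of Theorem~\ref{self-orth-h}: for a two-generator quasi-cyclic code $\langle e_1,e_2\rangle$ of index~$2$, Hermitian self-orthogonality is equivalent to $\langle e_i,b(x)e_j\rangle_h=0$ for all $b(x)\in R$ and all $i,j$, and, using $\langle u,v\rangle_h=\overline{\langle v,u\rangle_h}$, Lemma~\ref{adjoint-h} and the bijectivity of $b\mapsto b^{\dagger}$ on $R$, the pair $(i,j)=(2,1)$ may be omitted. The identity I would use repeatedly is $\langle u,b(x)w\rangle_h=\langle w^{\dagger}u,b(x)\rangle_h$ for $u,w\in R$ (a consequence of Lemma~\ref{adjoint-h} together with conjugate-symmetry), so that $\langle u,b(x)w\rangle_h=0$ for all $b(x)$ if and only if $w^{\dagger}u\equiv 0\pmod{x^m-1}$. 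Applying this to the three pairs $(f_1,f_1)$, $(f_1,f_2)$, $(f_2,f_2)$ and using $f_{1,2}=0$, Hermitian self-orthogonality of $C^{\perp_h}$ becomes the three congruences $f_{1,1}^{\dagger}f_{1,1}\equiv 0$, $f_{2,1}^{\dagger}f_{1,1}\equiv 0$ and $f_{2,1}^{\dagger}f_{2,1}+f_{2,2}^{\dagger}f_{2,2}\equiv 0$ modulo $x^m-1$.

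It remains to translate these into divisibility conditions on $g_{11},g_{12},g_{22}$. By Lemma~\ref{gg-h}, and because $(\cdot)^{\dagger}$ is an involution of $R$, one has, modulo $x^m-1$ and up to a unit of $R$ (a nonzero scalar times a power of $x$), that $\big(\tfrac{x^m(x^m-1)}{g_{11}^{\dagger}}\big)^{\dagger}\equiv\tfrac{x^m-1}{g_{11}}$, $\big(\tfrac{x^m(x^m-1)}{g_{22}^{\dagger}}\big)^{\dagger}\equiv\tfrac{x^m-1}{g_{22}}$ and $\big(\tfrac{x^m(x^m-1)g_{12}^{\dagger}}{g_{11}^{\dagger}g_{22}^{\dagger}}\big)^{\dagger}\equiv\tfrac{(x^m-1)g_{12}}{g_{11}g_{22}}$. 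Substituting, the first congruence becomes $\tfrac{x^m(x^m-1)^2}{g_{11}g_{11}^{\dagger}}\equiv 0$, i.e.\ $g_{11}g_{11}^{\dagger}\mid x^m(x^m-1)$; the second becomes $\tfrac{x^m(x^m-1)^2g_{12}}{g_{11}g_{11}^{\dagger}g_{22}}\equiv 0$, i.e.\ $g_{11}g_{11}^{\dagger}g_{22}\mid x^m(x^m-1)g_{12}$; and the third becomes $\tfrac{x^m(x^m-1)^2(g_{11}g_{11}^{\dagger}+g_{12}g_{12}^{\dagger})}{g_{11}g_{11}^{\dagger}g_{22}g_{22}^{\dagger}}\equiv 0$, i.e.\ $g_{11}g_{11}^{\dagger}g_{22}g_{22}^{\dagger}\mid x^m(x^m-1)(g_{11}g_{11}^{\dagger}+g_{12}g_{12}^{\dagger})$. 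These are precisely Conditions~1, 2 and 3.

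There is no conceptual obstacle, since the argument is the verbatim Hermitian counterpart of the proof of Theorem~\ref{dual-con}; the only real work is careful bookkeeping. Concretely, I would verify the involution property $(f^{\dagger})^{\dagger}\equiv f\pmod{x^m-1}$ and the compatibility relations $f^{\dagger}=\overline{f^{\circ}}=(\overline{f})^{\circ}=x^{m-\deg f}\,\overline{f}^{*}$ linking the conjugation, the two transposes and the reciprocal (these are what make the cross term $f_{2,1}^{\dagger}f_{1,1}$ collapse); check that every fractional expression above is a genuine element of $R$, which is guaranteed by Conditions~$(\ast)$, Lemma~\ref{gg-h} and the fact that the $f_i$ are honest codewords; and note that multiplication by a unit of $R$ does not affect divisibility by $x^m-1$, so the scalars and powers of $x$ may be discarded throughout. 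Once these points are in place, the reciprocal-polynomial reformulation in the style of Corollary~\ref{dual-con-cor} follows from Lemma~\ref{transpose-h} without further work.
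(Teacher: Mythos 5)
Your proposal is correct and matches the paper's intended argument: the paper in fact omits the proof of Theorem~\ref{dual-con-h} entirely, the implicit proof being exactly the transcription of the proof of Theorem~\ref{dual-con} with $(\cdot)^{\circ}$, Lemma~\ref{adjoint} and Lemma~\ref{gg} replaced by $(\cdot)^{\dagger}$, Lemma~\ref{adjoint-h} and Lemma~\ref{gg-h}, which is what you carry out. Your additional remarks on conjugate-symmetry (so that the $(2,1)$ pair can be omitted), on $(\cdot)^{\dagger}$ being an involution up to units of $R$, and on units not affecting divisibility by $x^m-1$ are the right bookkeeping points and are consistent with the Euclidean computation in the paper.
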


In terms of reciprocal polynomials, the previous theorems can be reformulated as 

\begin{corollary}
\label{self-orth-cor-h} 
Let $C$ be a quasi-cyclic code  of length $2m$ and index $2$, generated by elements 
$g_1=\big( g_{11}(x),g_{12}(x)\big)$ and $g_2=\big( 0,g_{22}(x)\big)$, satisfying Conditions $(\ast)$. 
Then $C$ is self-orthogonal with respect to the Hermitian inner product if and only if the following conditions hold:

1) $g_{22}(x) \overline{g_{22}}(x)^* \equiv 0  \pmod{x^m-1}$;

2) $g_{12}(x) \overline{g_{22}}(x)^* \equiv 0  \pmod{x^m-1}$;

3) $x^{\deg g_{12}} g_{11}(x) \overline{g_{11}}(x)^*  + x^{\deg g_{11}}  g_{12}(x) \overline{g_{12}}(x)^* \equiv 0  \pmod{x^m-1}$.
\end{corollary}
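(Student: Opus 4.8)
The plan is to derive Corollary \ref{self-orth-cor-h} directly from Theorem \ref{self-orth-h} by rewriting each of the three congruences in terms of reciprocal polynomials, exactly as Corollary \ref{self-orth-cor} was obtained from Theorem \ref{self-orth}. The essential tool is part 1) of Lemma \ref{transpose-h}, which says that for a polynomial $f(x)$ with $\deg f(x)\le m$ one has $f(x)^{\dagger}=x^{m-\deg f}\,\overline{f}(x)^{*}$. I would apply this termwise to each product of conjugate-transpose polynomials appearing in the three conditions of Theorem \ref{self-orth-h}, using that the degree of the monic divisors $g_{11},g_{22}$ of $x^m-1$ and of $g_{12}$ are known, and that multiplication by a unit $x^{k}$ in $R$ does not change whether an element is congruent to $0$ modulo $x^m-1$.

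Concretely, for condition 1) write $g_{22}(x)^{\dagger}=x^{m-\deg g_{22}}\,\overline{g_{22}}(x)^{*}$, so $g_{22}(x)g_{22}(x)^{\dagger}=x^{m-\deg g_{22}}\,g_{22}(x)\overline{g_{22}}(x)^{*}$; since $x^{m-\deg g_{22}}$ is invertible in $R$, the congruence $g_{22}(x)g_{22}(x)^{\dagger}\equiv 0$ is equivalent to $g_{22}(x)\overline{g_{22}}(x)^{*}\equiv 0\pmod{x^m-1}$. For condition 2) the same substitution gives $g_{12}(x)g_{22}(x)^{\dagger}=x^{m-\deg g_{22}}g_{12}(x)\overline{g_{22}}(x)^{*}$, so again the unit $x^{m-\deg g_{22}}$ may be cleared, yielding $g_{12}(x)\overline{g_{22}}(x)^{*}\equiv 0\pmod{x^m-1}$. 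Condition 3) is the only one where the two summands pick up different powers of $x$: $g_{11}(x)g_{11}(x)^{\dagger}=x^{m-\deg g_{11}}g_{11}(x)\overline{g_{11}}(x)^{*}$ and $g_{12}(x)g_{12}(x)^{\dagger}=x^{m-\deg g_{12}}g_{12}(x)\overline{g_{12}}(x)^{*}$. Multiplying the whole congruence by the unit $x^{\deg g_{11}+\deg g_{12}-m}$ (interpreted in $R$) rebalances the exponents and produces $x^{\deg g_{12}}g_{11}(x)\overline{g_{11}}(x)^{*}+x^{\deg g_{11}}g_{12}(x)\overline{g_{12}}(x)^{*}\equiv 0\pmod{x^m-1}$, which is exactly condition 3) of the corollary.

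The main point to handle carefully — really the only obstacle — is bookkeeping of the exponents of $x$ in condition 3): one must check that the degree inequalities $\deg g_{11}\le m$, $\deg g_{12}<\deg g_{22}\le m$ from Conditions $(\ast)$ make all the substitutions from Lemma \ref{transpose-h}(1) legitimate, and that the common power of $x$ used to clear units is the same for both summands after the rescaling. Since all the powers involved are units in $R=F[x]/\langle x^m-1\rangle$, no divisibility information is lost, so the equivalence is genuinely an ``if and only if''. Given that Theorem \ref{self-orth-h} and Lemma \ref{transpose-h} are already in hand, the whole argument is a short and routine translation, parallel to the Euclidean case, and I would present it as such rather than repeating the full computation.
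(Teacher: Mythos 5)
Your proposal is correct and follows exactly the route the paper intends: the corollary is the translation of Theorem \ref{self-orth-h} via Lemma \ref{transpose-h}(1), clearing the unit powers $x^{m-\deg g_{22}}$ in conditions 1) and 2) and rebalancing by $x^{\deg g_{11}+\deg g_{12}-m}$ in condition 3), precisely mirroring how Corollary \ref{self-orth-cor} follows from Theorem \ref{self-orth}. Your exponent bookkeeping checks out, so nothing further is needed.
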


\begin{corollary}
\label{dual-con-cor-h} 
Let $C$ be a quasi-cyclic code  of length $2m$ and index $2$, generated by elements 
$g_1=\big( g_{11}(x),g_{12}(x)\big)$ and $g_2=\big( 0,g_{22}(x)\big)$, satisfying Conditions $(\ast)$. 
Then $C$ is Hermitian dual-containing if and only if the following conditions hold:

1) $g_{11}(x) \overline{g_{11}}(x)^*$ divides  $(x^m-1)$;

2) $g_{11}(x) \overline{g_{11}}(x)^*  g_{22}(x)$ divides  $(x^m-1)g_{12}(x)$;

3) $g_{11}(x) \overline{g_{11}}(x)^*  g_{22}(x) \overline{g_{22}}(x)^*$ divides  \\
\mbox{}  \hspace{0.8cm} $(x^m-1)(x^{\deg g_{22}} g_{11}(x) \overline{g_{11}}(x)^*  + 
 x^{\deg g_{11} + \deg g_{22} - \deg g_{12}}  g_{12}(x) \overline{g_{12}}(x)^* ) $. 
\end{corollary}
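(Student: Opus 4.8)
The plan is to translate the divisibility conditions of Theorem \ref{dual-con-h} from the conjugate transpose polynomial $f(x)^{\dagger}$ into the reciprocal polynomial $\overline{f}(x)^{*}$, exactly as Corollary \ref{dual-con-cor} is obtained from Theorem \ref{dual-con}. The key tool is part 1) of Lemma \ref{transpose-h}, which gives $f(x)^{\dagger} = x^{m - \deg f}\,\overline{f}(x)^{*}$; since $x$ is a unit in $R$, multiplication by a power of $x$ is an automorphism of $R$, so a divisibility $g(x)\mid h(x)$ involving $\dagger$-polynomials is unaffected up to a unit factor when we replace each $f(x)^{\dagger}$ by $\overline{f}(x)^{*}$ and absorb the $x^{m-\deg f}$ factors into the other side.

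First I would rewrite conditions 1)--3) of Theorem \ref{dual-con-h} using $g_{11}(x)^{\dagger} = x^{m-\deg g_{11}} \overline{g_{11}}(x)^{*}$, $g_{22}(x)^{\dagger} = x^{m-\deg g_{22}} \overline{g_{22}}(x)^{*}$, and $g_{12}(x)^{\dagger} = x^{m-\deg g_{12}} \overline{g_{12}}(x)^{*}$. For condition 1), $g_{11}(x)g_{11}(x)^{\dagger} = x^{m-\deg g_{11}} g_{11}(x)\overline{g_{11}}(x)^{*}$ divides $x^m(x^m-1)$; cancelling the unit $x^{m-\deg g_{11}}$ and noting $x^m$ is itself a unit modulo nothing but that $x^m - 1$ carries the divisibility, this reduces to $g_{11}(x)\overline{g_{11}}(x)^{*}$ dividing $(x^m-1)$, which is condition 1) of the corollary. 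For condition 2), $g_{11}(x)g_{11}(x)^{\dagger}g_{22}(x) = x^{m-\deg g_{11}} g_{11}(x)\overline{g_{11}}(x)^{*} g_{22}(x)$ divides $x^m(x^m-1)g_{12}(x)$; again cancelling units gives $g_{11}(x)\overline{g_{11}}(x)^{*}g_{22}(x)$ divides $(x^m-1)g_{12}(x)$.

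The main obstacle, and the step requiring the most care, is condition 3), because the right-hand side $x^m(x^m-1)\big(g_{11}(x)g_{11}(x)^{\dagger} + g_{12}(x)g_{12}(x)^{\dagger}\big)$ is a \emph{sum} of two terms carrying \emph{different} powers of $x$, namely $x^{m-\deg g_{11}}g_{11}(x)\overline{g_{11}}(x)^{*}$ and $x^{m-\deg g_{12}}g_{12}(x)\overline{g_{12}}(x)^{*}$. To make the common $x$-power explicit one factors out $x^{m - \deg g_{11} - \deg g_{22} + \deg g_{12}}$ from inside and rebalances: multiplying through by a suitable power of $x$ turns the bracket into $x^{\deg g_{22}} g_{11}(x)\overline{g_{11}}(x)^{*} + x^{\deg g_{11} + \deg g_{22} - \deg g_{12}} g_{12}(x)\overline{g_{12}}(x)^{*}$, while the left-hand divisor $g_{11}(x)g_{11}(x)^{\dagger}g_{22}(x)g_{22}(x)^{\dagger}$ becomes $x^{2m - \deg g_{11} - \deg g_{22}} g_{11}(x)\overline{g_{11}}(x)^{*}g_{22}(x)\overline{g_{22}}(x)^{*}$, whose leading $x$-power is a unit and cancels. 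Tracking these exponents carefully—and checking that the chosen power of $x$ indeed makes both summands polynomials of degree at most $m$ so that Lemma \ref{transpose-h} applies termwise—yields exactly condition 3) of the corollary. The $x^m$ in $x^m(x^m-1)$ on the right of Theorem \ref{dual-con-h} is absorbed similarly, since $x^m \equiv 1 \pmod{x^m-1}$ does not change the divisibility by $(x^m-1)$; this last point is the same simplification already used in passing from Theorem \ref{dual} to Corollary \ref{dual-cor}.
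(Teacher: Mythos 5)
Your derivation is correct and is exactly the route the paper intends: the corollary follows from Theorem \ref{dual-con-h} by substituting $f(x)^{\dagger}=x^{m-\deg f}\,\overline{f}(x)^{*}$ (Lemma \ref{transpose-h}) and discarding unit powers of $x$, precisely as Corollary \ref{dual-con-cor} is obtained from Theorem \ref{dual-con}, and the paper itself supplies no further argument. The only blemish is the exponent you name for the common factor in condition 3) --- it should be $x^{m-\deg g_{11}-\deg g_{22}}$ rather than $x^{m-\deg g_{11}-\deg g_{22}+\deg g_{12}}$ --- but since the bracket is determined only up to a unit power of $x$ modulo $x^m-1$ (and the left-hand divisor is coprime to $x$) and you rebalance to the correct final expression, this is immaterial.
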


\section{Construction of quantum codes}
\label{quantum}

In this section we construct examples of quantum codes using 2-generator QC codes. 

We define $H^{\otimes n}$  to be the $n$-fold tensor product of the Hilbert space $H$ of dimension $q$  over the field of complex numbers $\mathbb{C}$.  Then $H^{\otimes n}$ is a Hilbert space of  dimension $q^n$. A quantum code of length $n$ and  dimension $k$ over $\mathbb F_q$ is defined to be a Hilbert subspace of  $H^{\otimes n}$ having  dimension $q^k$.  A quantum code with length $n$, dimension $k$ and minimum distance $d$ over $\mathbb F_q$ is denoted by $[[n, k, d]]_q$.


\begin{theorem}(\cite{Ashikhmin,Ketkar})   
\label{Q-s}
Let $C$ be a $[2m, k]$    linear code over $\mathbb F_q$, such that $ C\subseteq C^{{\perp}_s}$. 
If $k<m$, then there exists a quantum error-correcting code $Q$ with parameters $[[m, m-k, d_s]]_q$, 
where $d_s=\min\{w_s(c) \mid c \in C^{{\perp}_s} \setminus C \}$.
If $k=m$, then there exists a quantum error-correcting code $Q$ with parameters $[[m, 0, d_s]]_q$, 
where $d_s=\min\{w_s(c) \mid c \in C^{{\perp}_s} =C\}$.
\end{theorem}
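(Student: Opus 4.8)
The plan is to invoke the stabilizer (symplectic) formalism for quantum codes, which is exactly the content of the cited sources \cite{Ashikhmin,Ketkar}, and to indicate the main steps rather than reproduce the full construction. The first step is to set up the correspondence between the symplectic space $(\mathbb{F}_q^{2m}, \langle\cdot,\cdot\rangle_s)$ and the quotient of the error group $\mathcal{E}_m$ (generated by the shift and phase operators acting on $H^{\otimes m}$) by its subgroup of scalars. Under this map a vector $v=(a\mid b)\in\mathbb{F}_q^m\times\mathbb{F}_q^m$ is sent to an operator $E(v)=X(a)Z(b)$, defined with a fixed phase convention depending on whether $p$ is odd or $p=2$, and the crucial property is that $E(v)$ and $E(w)$ commute in $\mathcal{E}_m$ if and only if $\langle v,w\rangle_s=0$. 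The hypothesis $C\subseteq C^{\perp_s}$ then says precisely that the image of $C$ lifts to an \emph{abelian} subgroup $S\leq\mathcal{E}_m$; one checks, using that $q$ is a prime power together with the standard normalization of the $p$-th (or $4$-th, when $p=2$) root of unity, that this lift can be chosen to contain no nontrivial scalar, so that $|S|=q^{k}$.

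The second step is representation-theoretic. Since $S$ is abelian, scalar-free, of order $q^{k}$, and acts on the $q^m$-dimensional space $H^{\otimes m}$, that space splits into a direct sum of joint eigenspaces of $S$, each of dimension $q^m/q^k=q^{m-k}$. Taking $Q$ to be the joint $+1$-eigenspace produces a quantum code of length $m$ and dimension $m-k$, i.e.\ $\dim_{\mathbb{C}}Q=q^{m-k}$. When $k=m$, the equalities $\dim C=m$ and $\dim C^{\perp_s}=2m-k=m$ together with $C\subseteq C^{\perp_s}$ force $C=C^{\perp_s}$, and $Q$ is one-dimensional, giving an $[[m,0,d_s]]_q$ code. (Note $k>m$ cannot occur, since $C\subseteq C^{\perp_s}$ forces $k\le 2m-k$, so the two cases are exhaustive.)

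The third step handles the minimum distance. A Pauli error with vector $v$ is undetectable by $Q$ exactly when $E(v)$ commutes with all of $S$, i.e.\ $v\in C^{\perp_s}$, but $E(v)$ does not act as a scalar on $Q$, i.e.\ $v\notin C$; errors coming from $S$ itself act trivially on $Q$ and are harmless, which is why they are excluded from the minimum. Hence the minimum distance of $Q$ equals $\min\{w_s(v)\mid v\in C^{\perp_s}\setminus C\}$, matching the statement. In the degenerate case $k=m$, every nonzero element of $C^{\perp_s}=C$ is harmless, and by the usual convention the distance is recorded as $\min\{w_s(v)\mid v\in C=C^{\perp_s}\}$, yielding the stated $[[m,0,d_s]]_q$.

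The main obstacle is the bookkeeping in the non-binary case: one must verify that the lift of $C$ to $\mathcal{E}_m$ is genuinely abelian and free of nontrivial phases, which is where the prime-power hypothesis on $q$ and the chosen root-of-unity convention are used, and one must match the symplectic form $\langle\cdot,\cdot\rangle_s$ of this paper with the commutation form on $\mathcal{E}_m$. All of this is carried out in detail in \cite{Ashikhmin,Ketkar}, so here we only sketch the argument and refer the reader to those sources for the full proof.
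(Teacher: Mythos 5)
Your proposal is correct and is in the same spirit as the paper, which states this theorem purely as a citation of \cite{Ashikhmin,Ketkar} and offers no proof of its own; your sketch is a faithful outline of the standard stabilizer construction carried out in those references (error-group correspondence, abelian scalar-free lift, joint eigenspace of dimension $q^{m-k}$, and undetectable errors being exactly $C^{\perp_s}\setminus C$). You also correctly flag the one genuine subtlety for non-prime $q$, namely reconciling the $\mathbb{F}_q$-symplectic form with the trace-symplectic commutation form, which for $\mathbb{F}_q$-linear codes causes no loss since the two duals coincide.
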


We present examples of symplectic self-orthogonal codes and use them to construct  quantum codes.  
We used Magma \cite{Bosma} for our computations.

\begin{example} 
Let $m=45$, $q=2$, $x^m-1 = p_1 p_2 p_3 p_4 p_5 p_6 p_7 p_8$, where 
$p_1= x + 1$, $p_2= x^2 + x + 1$ $p_3= x^4 + x + 1$,  $p_4= x^4 + x^3 + 1$, $p_5= x^4 + x^3 + x^2 + x + 1$,  
$p_6= x^6 + x^3 + 1$,  $p_7= x^{12} + x^3 + 1$, and $p_8= x^{12} + x^9 + 1$.   

1) Let $g_{11}(x)= p_1 p_2 p_3$,  $g_{22}(x)= p_3 p_4 p_5 p_6 p_7 p_8$,   
$h(x)= x^{25} + x^{23} + x^{21} + x^{20} + x^{18} + x^{17} + x^{16} + x^{15} + x^{13} + x^{12} + x^6 + x^5 + x^3 + x^2$,  
and $g_{12}(x)= p_3 h(x)$. 
Then $C$ is a symplectic self-orthogonal code of dimension 41, $d_s(C)=13$, $d_s(C^{\perp_s})=11$, 
so $C$ determines a   $[[45,4,11]]$-qubit stabilizer code 
(which is a code with best known parameters 
among the  $[[45,4]]$ codes   \cite{Grassl}). 
The code $C^{\perp_s}$ is generated by elements $(g'_{11}(x), g'_{12}(x))$ and $(0, g'_{22}(x))$, where 
$g'_{11}(x)= (x^m-1)/g_{22}(x)^*= p_1p_2$, 
$g'_{22}(x)= (x^m-1)/g_{11}(x)^*= p_3 p_5 p_6 p_7 p_8$, and 
$g'_{12}(x) = x^{23}g_{12}(x)^*/\gcd(g_{11}(x)^*, g_{22}(x)^*) =x^{23}h(x)^*$.


2) Let   $g_{11}(x)= p_1 p_2 p_3$,  $g_{22}(x)= p_2 p_3 p_4 p_5 p_6 p_7 p_8$, 
$h(x)= x^{24} + x^{22} + x^{19} + x^{18} + x^{17} + x^{14} + x^{13} + x^{12} + x^4 + x^3 + x^2$,  and 
$g_{12}(x)= p_2 p_3 h(x)$. 
Then $C$ is a symplectic self-orthogonal code of dimension 39, $d_s(C)=14$, $d_s(C^{\perp_s})=10$, 
so $C$ determines $[[45,6,10]]$-qubit stabilizer code (which is a code with best known parameters 
among the $[[45,6]]$ codes \cite{Grassl}). 
\end{example} 

\begin{example} 
Let $m=18$, $q=2$,  $p_1(x) = x + 1$,  $p_2(x) = x^2 + x + 1$, $p_3(x) = x^6 + x^3 + 1$. 
Then $x^m-1 = (p_1(x) p_2(x) p_3(x))^2$. 
Let $g_{11}(x)= p_1(x) ^2 p_2(x) $,  $g_{22}(x)= p_1(x)  p_2(x) ^2 p_3(x) ^2$, 
$h(x)= x^9 + x^4 + x^3$,  and 
$g_{12}(x)= p_1(x)  p_2(x)  h(x)$. 
Then $C$ is a symplectic self-orthogonal code of dimension 12, $d_s(C)=8$, $d_s(C^{\perp_s})=5$, 
so $C$ determines an $[[18,3,5]]$-qubit stabilizer code (which is a code with best known parameters 
among the $[[18,3]]$ codes \cite{Grassl}). Note that in this case $\gcd(q,m) \ne 1$. 
\end{example}

\begin{example} 
The self-dual code $C$ from Example \ref{self-dual-s}  determines a $[[23,0,8]]$-qubit stabilizer code (which is a code with best known parameters among the $[[23,0]]$ codes \cite{Grassl}). 
Similarly, one can get  $[[24,0,8]]$, $[[25,0,8]]$, $[[26,0,8]]$-qubit stabilizer codes.  
\end{example}

\section{Conclusion}
\label{conclusion} 

 In this paper we have given a classification  of quasi-cyclic codes of index 2 over finite fields. 
 We presented a lower bound for their minimum distance. 
 A quasi-cyclic code of index 2  is generated  by at most two elements. 
We gave a necessary and sufficient condition for a quasi-cyclic code to be generated by one element. 
In addition, we gave a necessary and sufficient condition for the dual of a quasi-cyclic code to be generated by one element. 
We described duals of quasi-cyclic codes with respect to the Euclidean, symplectic and Hermitian inner products, with a uniform approach.  
Moreover, we described self-orthogonal and dual-containing codes with respect to these inner products. 
In our study we do not have restrictions on the characteristic of the ground field, except for the case of one-generator 
 codes. In particular, we do not use the decomposition of codes into a sum of component subcodes. We study codes using only standard generators. Finally, we used the codes obtained to  construct quantum error-correcting codes.

\bigskip 

{\bf Acknowledgments}	

\medskip
The first author is supported by UAEU (grant G00004233). 
The second author is supported by the Science Committee of the Ministry of Science 
and Higher Education of the Republic of Kazakhstan (Grant No. AP14869221). 
The third author is supported by Nanyang Technological University Research Grant No. 04INS000047C230GRT01. 
	

\end{document}